%% file: neurips_2026.tex
\documentclass{article}

\PassOptionsToPackage{numbers}{natbib}
\usepackage[main, final]{neurips_2026}

\usepackage[utf8]{inputenc} 
\usepackage[T1]{fontenc}    
\usepackage{hyperref}       
\usepackage{url}            
\usepackage{booktabs}       
\usepackage{amsfonts}       
\usepackage{nicefrac}       
\usepackage{microtype}      
\usepackage{xcolor}         
\usepackage{graphicx} 
\usepackage{amsmath}
\usepackage{amssymb}
\usepackage{amsthm}
\usepackage{multirow}
\usepackage{float}
\usepackage{subfig}
\usepackage{listings} 
\theoremstyle{plain}
\newtheorem{theorem}{Theorem}[section]

\theoremstyle{definition}
\newtheorem{definition}[theorem]{Definition}
\newtheorem{assumption}[theorem]{Assumption}
\theoremstyle{remark}
\newtheorem{remark}[theorem]{Remark}

\DeclareMathOperator*{\argmax}{argmax}

\usepackage[ruled]{algorithm2e}
\usepackage{wrapfig}
\lstdefinestyle{promptstyle}{
    language=Tex,          
    basicstyle=\small\ttfamily,  
    backgroundcolor=\color{gray!5}, 
    frame=single,          
    frameround=tttt,       
    numbers=left,          
    numberstyle=\tiny,     
    breaklines=true,       
    columns=fullflexible,  
    xleftmargin=2em,       
    xrightmargin=2em       
}
\title{Towards High Semantic Fidelity: Hyperdimensional Symbolic Messages in Multi-Agent Communication}

\author{%
\textbf{Shoucheng Song}\textsuperscript{1,2,3}, 
\textbf{Chang Yao}\textsuperscript{1,2},
  \textbf{Hao Wu}\textsuperscript{1,2}, 
  \textbf{Youfang Lin}\textsuperscript{1,2}, 
  \textbf{Sheng Han}\textsuperscript{1,2}, 
  \textbf{Kai Lv}\textsuperscript{1,2}\thanks{Corresponding author.} \\
  \textsuperscript{1}School of Computer Science \& Technology,
  Beijing Jiaotong University, Beijing, China \\
  \textsuperscript{2}Beijing Key Laboratory of Traffic Data Mining and
  Embodied Intelligence, Beijing, China \\
  \textsuperscript{3}Institute of Artificial Intelligence of China Telecom (TeleAI), Beijing, China\\
  \texttt{\{insis\_songsc,yaochang,insis\_wh,yflin,shhan,lvkai\}@bjtu.edu.cn}
}

\begin{document}

\maketitle

\begin{abstract}
Semantic fidelity is essential for multi-agent communication under non-ideal channels, comprising transmission fidelity and conversion fidelity.
Yet existing numerical and neural message formats cannot simultaneously achieve both. 
To address this, we propose a novel message format-\textbf{H}yperdimensional \textbf{S}ymbolic \textbf{M}essages (HSM), and its corresponding communication framework \textbf{Herm}. 
For transmission fidelity, Herm suppresses semantic entanglement via an entity-attribute-value structured encoding pipeline.
For conversion fidelity, Herm first leverages a receiver-centric ego-binding mechanism to align semantics. Then, a similarity-aware dual-branch fusion module is proposed to prevent critical semantic dilution.
To realize explicit message-semantic conversion, 
we devise an iterative decoding scheme via algebraic inverse operation.
Experiments exhibit that Herm outperforms baselines across various noisy scenarios while maintaining superior structural readability.
\end{abstract}

\section{Introduction}

In multi-agent reinforcement learning (MARL) \cite{yao26direct,AIM,chu2026value,yao1,jiang2026budget,yao2}, communication serves as the foundation for emerging coordination.
In non-ideal channels, the effectiveness of coordination crucially depends on whether the semantics transmitted between agents can be preserved completely and understood accurately.
We refer to this critical capability as \textbf{Semantic Fidelity}, which features
(1) \textbf{Transmission Fidelity}:
The semantics of messages remain stable and undistorted in noisy or adversarial channels.
Even with the perturbed message, the receiver’s understanding remains consistent with the sender's original semantics.
(2) \textbf{Conversion Fidelity}:
There exists a bidirectional conversion between agents’ native semantics (intentions, trajectories, etc.) and messages.
Receivers can recover the native semantics from messages without black-box ambiguity, and transform messages into decision-valid semantics.

As shown in Fig. \ref{TSNE}, existing communication methods fall into two paradigms, yet \textit{neither can fully satisfy both fidelity}.
\textbf{Numerical messages} \cite{vbc,TMC,MAIC}, typically Q-value lists, support basic bidirectional conversion due to their simple structure and limited semantic space.
But even mild channel interference can cause catastrophic semantic distortion.
\textbf{Neural messages} \cite{tarmac,NDQ,CACOM,T2MAC,CODE,TGCNET,dear} enable improved robustness via tailored losses \cite{CroMAC2024,MDGI2024}.
However, the black-box encoding entirely fails to support conversion fidelity.
Neither agents nor humans can recover the native semantics from neural messages.
This inherent dilemma motivates the core question investigated in this paper:
\textbf{\textit{Can we design a novel message format in MARL that achieves high semantic fidelity?}}

Based on the above requirement, Hyperdimensional Computing (HDC) \cite{HRR,BSC} theoretically offers an ideal mathematical framework. The quasi-orthogonality of high-dimensional random vectors inherently aligns with the noise-robust nature of  ``transmission fidelity''. Meanwhile, its invertible algebraic operations enable ``conversion fidelity'' by allowing the extraction of original semantics. 

However, \textbf{in practice}, applying HDC directly to multi-agent communication faces profound challenges that compromise the semantic fidelity.
 \textbf{For transmission fidelity}, standard HDC's flat encoding requires extremely high dimensions when handling MARL's multi-entity perception. With practical low-dimensional settings, HDC compromises the quasi-orthogonality and causes semantic entanglement.
\textbf{For conversion fidelity}, HDC can only transmit static ego information and fails to provide adaptation for different receivers. Thus, sender-centric encoding leads to semantic misalignment between broadcast messages and receivers' local decision contexts. Moreover, HDC's native equal-weight bundle inevitably dilutes critical decision-valid semantics.

To address the above issues that undermine semantic fidelity, we propose a novel structured algebraic message format: Hyperdimensional Symbolic Messages (HSM) and its corresponding communication framework, named Herm.
\textbf{To improve transmission fidelity}, Herm devises an entity-attribute-value structured encoding pipeline for the sender. Leveraging HDC's algebraic compositionality, this pipeline strictly aligns the perceptual logic of MARL with HDC operations. This alignment significantly enhances HSMs' semantic expressiveness and suppresses the semantic entanglement.
\textbf{To ensure conversion fidelity}, Herm enhances the decision-valid semantic conversion for the receiver from two perspectives: semantic misalignment and semantic dilution. First, Herm introduces a receiver-centric ego-binding mechanism. Without any learnable parameters, this mechanism implicitly projects globally broadcasted HSMs into the unique local semantic subspace of receivers, thereby achieving second-order semantics contextualization. Second, Herm designs a similarity-aware dual-branch fusion network to dynamically evaluate the relevance of messages in both perception supplementation and behavior understanding.

Furthermore, we design an iterative decoding scheme for HSMs to realize explicit message-to-semantics conversion. Building upon this, we explore paradigms for integrating HSMs with Large Language Models (LLMs).
Extensive experiments across multiple benchmarks (Hallway \cite{NDQ}, LBF \cite{LBF}, SMAC \cite{SMAC}, SMACv2 \cite{SMACV2}) demonstrate Herm's robustness against noise (transmission fidelity) and the readability advantage (conversion fidelity).
Our contributions are threefold:

\begin{figure}[t]
    \centering
    \includegraphics[width=1.0\textwidth]{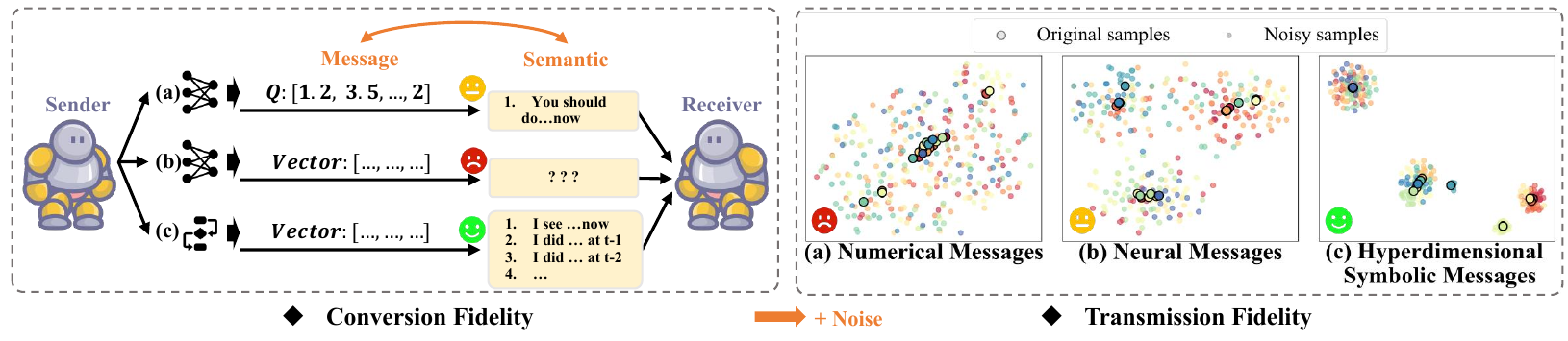}
    \caption{Comparison of Semantic Fidelity of Messages. The left shows the conversion fidelity in terms of the embedding process, message format, and semantic content. The right employs t-SNE to exhibit their transmission fidelity when exposed to noise.}
    \label{TSNE}
\end{figure}



\begin{itemize}
    \item We pioneer the definition of semantic fidelity in multi-agent communication and  the formalization of hyperdimensional symbolic messages.
    \item We design an entity-attribute-value structured encoding pipeline to eliminate semantic entanglement and guarantee transmission fidelity.
    \item We devise an ego-binding mechanism and similarity-aware dual-branch fusion to ensure decision-valid conversion fidelity.
\end{itemize}

\section{Related Works}
\label{gen_inst}


\textbf{Multi-Agent Robust Communication}. One scheme aims to identify malicious messages and mitigate their adverse effects. 
Early work employed Gaussian processes to model the message distribution, utilizing posterior probabilities for confidence evaluation \cite{GP2020}. 
$\mathfrak{R}$-MACRL \cite{MIS2022} implemented a two-stage architecture that first detects malicious messages, then performs restoration. 
AME \cite{AME2023} utilized diverse combinations of ablated messages to mitigate the impact of attacks. 
ADMAC \cite{ADMAC2024} leveraged dynamic labeling and binary classifier training to identify malicious messages. 
Some works trained independent channel adversaries to enhance robustness through adversarial training \cite{MA3C2024,DMAC2025}.
They typically involve additional communication stages that increase the training and deployment costs.
Furthermore, some studies focus on the robustness of messages via graph information bottleneck \cite{MDGI2024} or multi-view certification \cite{CroMAC2024}.
In contrast, we make a fundamental departure by replacing neural messages with hyperdimensional symbolic messages, achieving robustness (transmission fidelity) and readability (conversion fidelity).

\textbf{Hyperdimensional Computing in RL}. Inspired by HDC's applications in federated learning \cite{zhao2022fedhd} and classification tasks \cite{imani2018hierarchical}, some works have explored its adaptation to the RL domain. 
HDPG \cite{HDPG2022} and QHD \cite{QHD2023} established the first policy-based and value-based hyperdimensional reinforcement learning paradigms (HRL). 
Subsequent works have extended HRL to robotic control \cite{ROBOT2024}, cyber security \cite{Cyber2025}, and contextual bandit \cite{HDCB2025}.
However, the application of HDC in multi-agent systems remains unexplored. To the best of our knowledge, this work presents the first integration of HDC with multi-agent communication.

\section{Preliminaries}
\label{headings}

\textbf{Dec-POMDP.} Multi-agent collaboration is typically modeled as a decentralized partially observable Markov decision process (Dec-POMDP), defined by $<\mathcal{N},\mathcal{S},\mathcal{A},\mathcal{O},\mathcal{R},\mathcal{P},\gamma>$. $\mathcal{N}$ denotes multiple agents within the system. $\mathcal{S}$ and $\mathcal{A}$ represent the joint state and action space. $\mathcal{O}$ defines the mapping from joint states to local observations. $\mathcal{R}$ and $\mathcal{P}$ describe the reward and state transition function. $\gamma$ is the discount factor to scale the long-term return. In communication-enabled scenarios, Dec-POMDP can be extended with message space $\mathcal{M}$ to capture inter-agent communication. The objective in Dec-POMDP is to derive an optimal joint policy $\pi^*$ that maximizes the expected discounted return, formulated as $\pi^* = \operatorname*{argmax}_{\pi} \mathbb{E}_{\pi,\mathcal{P}}
    \left[
        \sum_{t=0}^{\infty} 
        \gamma^t \mathcal{R}(s_t, \boldsymbol{a}_t) 
    \right].$

\textbf{Hyperdimensional Computing.} Hyperdimensional computing (HDC) relies on randomly hypervectors (HVs) from a high-dimensional space. 
Due to the quasi-orthogonality of HVs, HDC provides a robust approach to information encoding, which inspires our work.
Each random HV can represent any single concept, such as an attribute, a numerical value, or a relational structure.
Through two fundamental operations—\textbf{Bind} ($\otimes$) and \textbf{Bundle} ($\oplus$)—HVs can symbolize composite concepts with increasing complexity. 
Bind computes a composite representation that is dissimilar to each input. 
Bundle adds multiple items together and retains similarity with all inputs. 
A simple illustration is shown as: (1) $  X=\text{``Color''} \otimes \text{``Red''}  \rightarrow \text{``Color = Red''}$, 
(2) $Y=\text{``Shape''} \otimes \text{``Circle''}  \rightarrow \text{``Shape = Circle''}$, 
(3) $\text{X} \oplus \text{Y}  \rightarrow \text{``Red Circle''}$.
HVs can then be mapped to specific semantics through unbinding and similarity matching. Based on this, we devise a message-to-semantic transformation to achieve conversion fidelity.
Here, we employ the Holographic Reduced Representations (HRR) \cite{HRR} as the HDC framework. Likewise, our method is compatible with other frameworks (Appendix \ref{append-hdc}).

\section{Methods}
\label{others}


The core of our work lies in two fundamental properties for semantic fidelity in communication: transmission fidelity and conversion fidelity, which cannot be traded off in existing numerical and neural message paradigms. 
In this paper, we design a novel message format, the hyperdimensional symbolic message (HSM), based on training-free hyperdimensional computing (HDC).

While HDC theoretically aligns with the two properties, standard HDC suffers from MARL-specific flaws that directly break the semantic fidelity.
Hence, we further propose our method \textbf{Herm}, as shown in Fig.\ref{framework}, which makes the corresponding optimization from the view of the sender and receiver.

\textbf{On the sender side}, Herm focuses on eliminating the inherent semantic entanglement flaw of standard flat HDC encoding.  We design a structured message encoding pipeline (Fig. \ref{framework} (b)) to balance semantic expressiveness and anti-interference capability with acceptable overhead.
\textbf{On the receiver side}, Herm targets the two core barriers in standard HDC — semantic misalignment and semantic dilution. We devise an ego-binding mechanism and similarity-aware dual-branch fusion (Fig. \ref{framework} (c)) to ensure decision-valid conversion fidelity during multi-source fusion. Beyond the training, we propose an additional iterative message decoding process for explicit message-to-semantic conversion.


\begin{figure*}[t]
\centering
\includegraphics[width=1.0\linewidth]{ 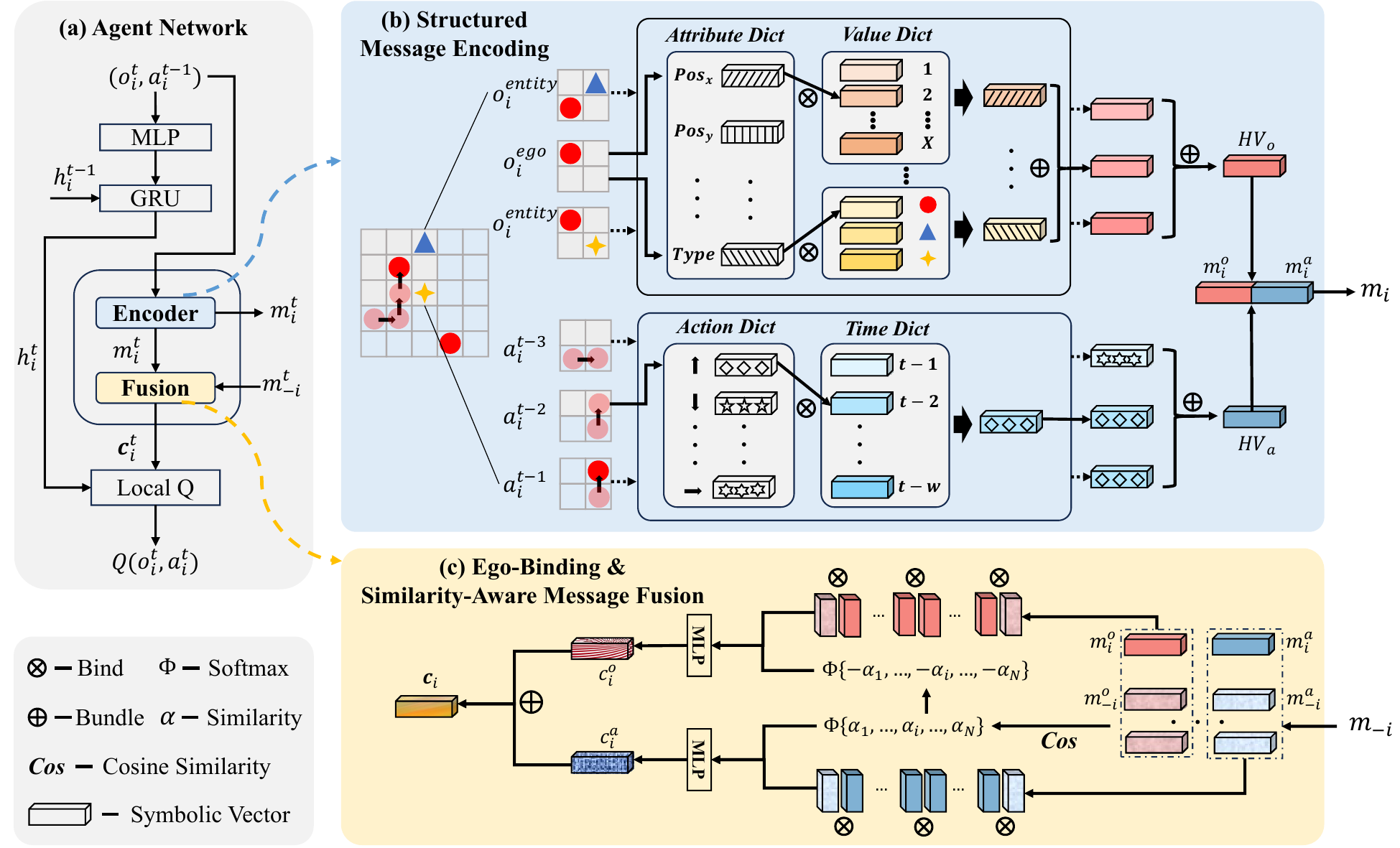}
\caption{Framework of Herm. (a) The agent network comprises a trajectory-encoding module and a communication module, with their outputs serving as inputs to the local Q-function. (b) On the sender side, the agent encodes its entity-wise observations and historical actions into HSM. (c) On the receiving side, the agent performs ego-binding and fuses messages based on similarity metrics.}
\label{framework}
\end{figure*}

\subsection{Structured Message Encoding for Transmission Fidelity}
\label{section-encode}
In the multi-agent system with limited communication bandwidth, standard flat HDC encoding could incur semantic entanglement.
Complex multi-entity perception violates HDC's quasi-orthogonality requirement, which is fundamental to transmission fidelity.
Hence, it is essential to explore the design of concise and efficient message structures for downstream decision-making.
 
We posit that communication enhances collaboration through two ways: \textbf {(1) supplementing subjective perception}, and \textbf{(2) interpreting teammates' objective behaviors}.
Given the stability of observations versus the stochastic nature of actions, our raw message $\tilde{m}_i^t$ contains instantaneous environmental perceptions $o_i^t$ alongside temporally-smoothed actions $\{a_i^{t-w},a_i^{t-w+1},\cdots,a_i^{t-1}\}$ where $w$ denotes the sliding window.
The original $\tilde{m}_i^t$ is then encoded into the symbolic message $m_i^t$ for broadcast through our structured encoding pipeline (Fig.\ref{framework} (b)).
Due to the refinement of $\tilde{m}_i^t$, we employ a modest hyperdimensional space (typically below 256), which incurs acceptable overhead.
\subsubsection{Perception Encoding via Observation Transformation.}
Agents typically represent their perception of the local environment through entity-wise attribute-value pairs: $o_i=\{\langle id, attr, val\rangle| id \in \mathcal{E}, attr \in \mathcal{A}, val \in \mathcal{V}\}$. $\mathcal{E}$ represents all entities (e.g., teammates, enemies). $\mathcal{A}$ denotes the attribute dictionary (e.g., position, health status) and $\mathcal{V}$ represents the corresponding value dictionary (e.g., X-coordinate, HP).
The attribute dictionary is first randomly instantiated in the hyperdimensional space and then processed via Schmidt orthogonalization to maintain strong orthogonality between attribute-aware $HV_{attr}$.
For the value dictionary involving continuous quantities, we employ a discretization preprocessing step followed by unified encoding of the discretized intervals.
Unlike mutually independent attributes, numerical values exhibit inherent correlations. We utilize cyclic replacement encoding to preserve these relationships. 
Let $val_{min}$ and $val_{max}$ denote the minimum and maximum values of an attribute. We first sample their hyperdimensional representations:
$HV_{min} \sim \mathbb{R}^d, HV_{max}\sim \mathbb{R}^d$.
For intermediate value $v \in (val_{min}, val_{max})$, its encoding is formulated as:
\begin{equation}
    HV_v = \boldsymbol{\Theta}(HV_{max}, HV_{min}, \left \lceil \rho \cdot d \right \rceil),
\label{number}
\end{equation}
where $\rho = (v - val_{min})/(val_{max} - val_{min})$ and $\boldsymbol{\Theta}(\cdot)$ replaces the first $\left \lceil \rho \cdot d\right \rceil$ dimensions of $HV_{min}$ with $HV_{max}$'s corresponding dimensions.
Our framework then converts agent observations into HSM via four key operations: (i) attribute-value binding ($\otimes$), (ii) attribute bundling ($\oplus$), (iii) identity binding ($\otimes$), and (iv) entity-wise bundling ($\oplus$), defined as:  
\begin{equation}
    HV_O = \sum_{j}^{\mathcal{E}}\oplus \left[HV_{id}^j \otimes \sum_{attr}^\mathcal{A} {\oplus(HV_{attr} \otimes HV_v)} \right],    
\label{obs_hd}
\end{equation}
where $\sum\oplus$ is the multi-bundle operation, and $HV_{id}^j$ denotes the identity vector for entity $j$. Although each agent bundles information about all entities, their encoding still differs due to partial observability.
Notably, (1) entity-wise observation partitioning is a common practice in MARL research \cite{ODIS, MATTAR, OPT, RPG} and introduces no manual prior knowledge; (2) while attributes carry semantic meaning for humans, they are treated only as observation dimensions by agents.

\subsubsection{Behavior Encoding via Action-Sequence Transformation.}
For each action in the sliding window, it can be encoded into $HV_a$ through discretization and sampling in the hyperdimensional space.
To model temporal dependencies in action sequences, we define a time dictionary following a construction similar to the numeric dictionary (Equ.\ref{number}). 
Unlike perception encoding, actions are self-related and thus do not require identity binding and entity-wise bundling. The transformation can be represented as:
\begin{equation}
    HV_A =  \sum_{t'=t-w}^{t-1} {\oplus(HV_{a_{t'}} \otimes HV_{t'})}. 
\label{action_hd}
\end{equation}

\subsection{Receiver-Centric Fusion for Conversion Fidelity}
Standard HDC can only transmit static sender-centric semantic information, but has no control over how this information is interpreted to serve receivers' downstream decisions. This limitation directly breaks the conversion fidelity, inspiring our designs for semantic alignment and preservation.
\subsubsection{Ego-Binding for Semantic Alignment}
To align the broadcasted messages to the receivers' local context, Herm introduces the Ego-binding mechanism prior to message fusion. Receiver $i$ structurally binds the incoming message with its own encoding $m_i$ (the ego vector) using the circular convolution operator: $m_{i,j}' = m_j \otimes m_i$.

While the binding operation is a native primitive of HDC, existing works only use it for concept composition. Here, we utilize it for semantic contextualization. In the HRR framework, the binding operation functions as a dense multiplicative interaction. By binding $m_j$ and $m_i$, the mechanism inherently constructs a joint state representation at the feature level. This extracts second-order correlations between the sender's and receiver's semantics to support the downstream decision.
Moreover, the ego-binding preserves the relative semantic topology between diverse messages, ensuring that $\cos(m_j \otimes m_i, m_k \otimes m_i) \approx \cos(m_j, m_k)$. This property preserves algebraic consistency between semantic contextualization and the subsequent fusion. The verification is in Appendix \ref{Similarity-Preservation}.

\subsubsection{Similarity-Aware Message Fusion for Semantic Preservation}
Received messages inevitably contain redundant information, and indiscriminate fusion will lead to the dilution of decision-valid messages. 
In contrast to neural vectors whose interrelationships can be captured through attention mechanisms \cite{TGCNET}, evidence theory \cite{T2MAC}, or mutual information \cite{MAIC}, the correlation between hyperdimensional vectors is predominantly characterized by similarity measures.

We contend that similarity between observations serves as a crucial modulator in multi-agent communication fusion, with downstream effects on decision-making. 
\textbf{For supplementing environment perception}, dissimilar observations carry richer additional information and should be prioritized; \textbf{for understanding teammate behavior}, similarity in observations reflects analogous decision contexts that promote understanding and coordination between agents.

Hence, we propose a dual-branch message fusion mechanism that decouples observations from actions, processing each through distinct fusion pathways.
Agent $i$ first computes similarity scores as:
\begin{equation}
    \boldsymbol{\alpha}_i=\text{softmax}\{\alpha_1, \cdots, \alpha_i, \cdots, \alpha_N\},
\end{equation}
where $\alpha_{j \in [1,N]}$ denotes the cosine similarity between its own observational encoding $m_i^o$ and the received ones $m_{j}^o$.
Then, the similarity scores are inverted, resulting in the dissimilarity scores:
\begin{equation}
    \boldsymbol{\dot\alpha}_i=\text{softmax}\{-\alpha_1, \cdots, -\alpha_i, \cdots, -\alpha_N\}.
\end{equation}
For the perception branch, we employ dissimilarity scores $\boldsymbol{\dot\alpha}$ for fusion, whereas similarity scores $\boldsymbol{\alpha}$ are utilized for the action branch. The fusion is expressed as Equ.\ref{fusion}.
The fused messages $\mathcal{M}_i^o$ and $\mathcal{M}_i^a$ undergo a mapping transformation into neural vectors $c_i^o$ and $c_i^a$, which are then aggregated to produce the final message representation $\boldsymbol{c}_i$.
The overall training architecture adopts the value decomposition framework, where the message representation $\boldsymbol{c}_i$ and the agent's own trajectory encoding $h_i$ are fed into the local Q-network for decision-making. The pseudocode for training is in Appendix \ref{Pseudocode}.
\begin{equation}
        \mathcal{M}_i^o = \sum_{j}^{N} \boldsymbol{\dot\alpha_{i,j}}*(m_i^o \otimes m_j^o), \quad
        \mathcal{M}_i^a = \sum_{j}^{N} \boldsymbol{\alpha_{i,j}}*(m_i^a \otimes m_j^a).
\label{fusion}
\end{equation}

\subsection{Iterative Message Decoding via Similarity}
\label{section-decode}
For the explicit conversion of HSM, we devise a transformation from messages to semantics.
The process makes full use of the following properties of bind ($\otimes$), unbind ($\circledast$), and bundle ($\oplus$): 
 (1) $\delta(HV_p \oplus HV_q, HV_p) \gg 0$, (2) $\delta(HV_p \circledast HV_q, HV_p) \approx 0$, and (3) $\delta(HV_p \otimes HV_q \circledast HV_q, HV_p) \gg 0$, where $\delta$ denotes the similarity function.
Here, we specifically refer to the calculation of Holographic Reduced Representation. Proof is illustrated in Appendix \ref{proof}.



Since the perception encoding is obtained through entity-wise bundling in Equ.\ref{obs_hd}, Property 1 implies that $HV_{O}$ maintains similarity with each individual entity encoding $HV_{O}^{j}$.
When extracting information about a specific entity from observations, we perform (i) identity unbinding $HV_{O} \circledast HV_{id}$.
With Properties 2 and 3, $HV_{O}^{j}$ is successfully recovered while $HV_{O}^{-j}$ is disrupted.
Similarly, since $HV_{O}^{j}$ is calculated by attribute-wise bundling, extracting the value of a specific attribute requires performing (ii) attribute unbinding $HV_{O}^j \circledast HV_{attr}$.
Then, we compute the similarity between the resulting $HV_v$ and the value dictionary $\mathcal{V}$ to obtain the final decoded value $v$. The whole process is represented as: 
\begin{equation}
    v=\argmax_{v \in \mathcal{V}} \delta(HV_v, HV_O \circledast HV_{id} \circledast HV_{attr}).
\label{decode_obs}
\end{equation}
Analogously to Equ.\ref{decode_obs}, we can derive actions at any timestep $t' \in [t-w, t-1]$ from action-sequence encoding $HV_A$ through the unbinding operation of a specific timestep and the similarity matching:
\begin{equation}
    a_{t'}=\argmax_{a \in \mathcal{A}} \delta(HV_a, HV_A \circledast HV_{t'}).
\label{decode_action}
\end{equation}
Notably, the decoded semantics can be converted into natural language, providing a novel insight into the integration of multi-agent communication and large language models (LLMs).
Previous work \cite{li2024language} attempts to map agents' messages to the LLM embedding space for implicit collaboration with LLMs, which is constrained by the task complexity and the foundation model.
By contrast, we leverage HSM as an explicit bridging medium between agents and LLMs.
It establishes a conversion pathway that spans message encoding, symbolic semantics, and natural language.
We elaborate on the RL-LLM hybrid decision-making and corresponding experiments in Appendix \ref{llm+comm}.

\section{Experiments}
To validate the proposed Herm, we conduct extensive experiments on standard MARL benchmarks, including SMAC \cite{SMAC}, SMACv2 \cite{SMACV2}, Lbforaging \cite{LBF}, and Hallway \cite{NDQ}.
As baselines, we adopt MAIC \cite{MAIC}, T2MAC \cite{T2MAC}, CoDe \cite{CODE}, and TGCNET \cite{TGCNET}. 
For transmission fidelity, we compare CroMAC \cite{CroMAC2024} and several enhanced methods that integrate the AME \cite{AME2023} mechanism as robustness-oriented baselines.
Then, we conduct semantic visualization to verify the conversion fidelity. 

For fairness, we adopt a full communication topology for all methods across all benchmarks. We clarify that the attribute dictionary only provides a prior for explicit semantic conversion, and agents are completely agnostic to attribute semantics, thus bringing no performance gain.

\subsection{Performance in Noise-Free Environments}

\begin{figure*}[t]
\centering
\includegraphics[width=1.0\linewidth]{ 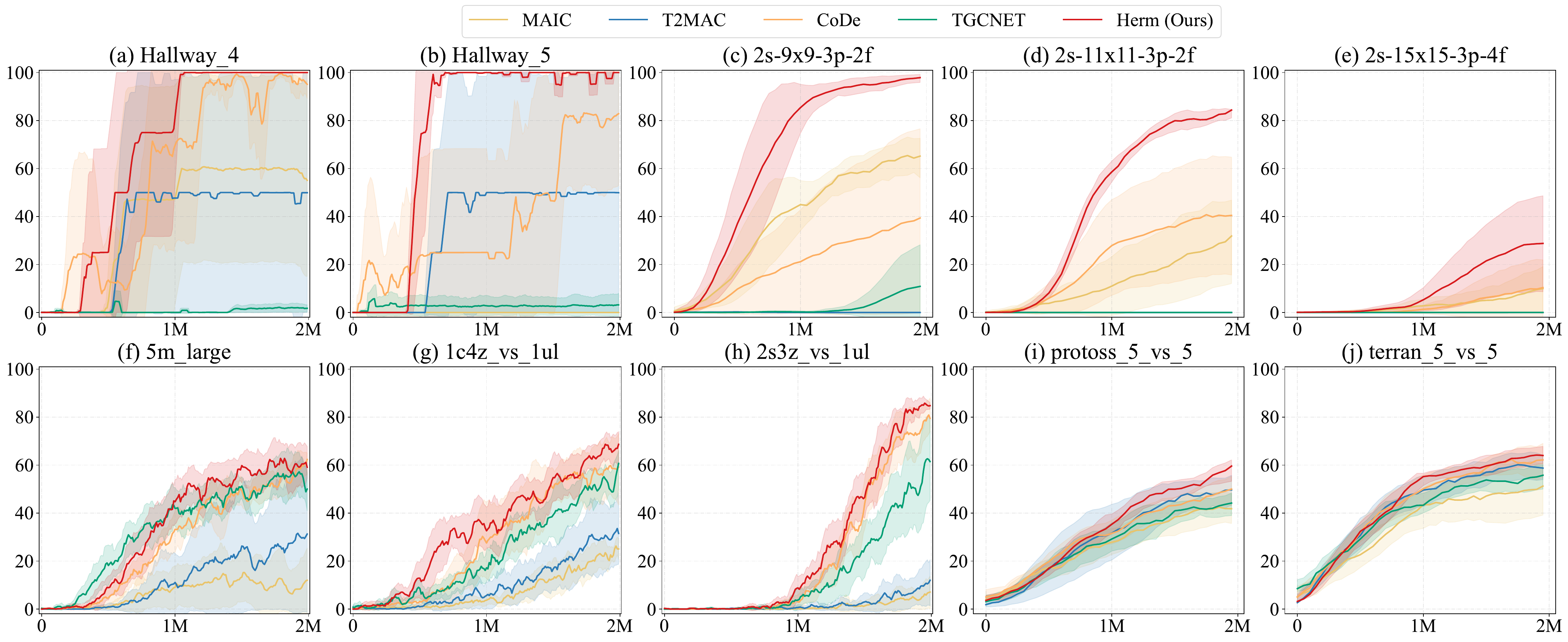}
\caption{Performance in Noise-Free Environments. Each curve represents the average across four random seeds. (a)-(b) Hallway. (c)-(e) LBF . (f)-(h) SMAC. (i)-(j) SMACv2.}
\label{no-noise-result}
\end{figure*}

We first verify the efficacy of Herm for multi-agent collaboration.
In Fig. \ref{no-noise-result}, Herm consistently outperforms all baselines across four benchmarks in noise-free settings.
 For the Hallway task with a low-dimensional observation space, the orthogonal HSM encoding enables more explicit and efficient transmission of agents' decision contexts, leading to faster convergence and a higher task completion rate.
 For the partially observable LBF task, Herm achieves remarkable performance gains, especially in large-scale scenarios where baselines almost fail to converge.
 This is attributed to our similarity-aware dual-branch fusion mechanism, which precisely matches the task's core requirements of unobserved information supplementation and adjacent agent coordination.  
 For communication-intensive SMAC and SMACv2 tasks with high randomness, Herm addresses two core limitations of baselines: 1) numerical messages \cite{MAIC,T2MAC} can only deliver behavioral recommendations, failing to support early-stage joint exploration; 2) neural messages \cite{CODE,TGCNET} rely on black-box fusion and dilute observation variations caused by environmental stochasticity. In contrast, the binding operation in HSM amplifies discriminative observation features, and our interpretable fusion design provides explicit guidance for both exploration and coordination, thus achieving superior performance.

\subsection{Performance in Noisy Environments}
To verify the transmission fidelity, we conduct the robustness evaluation in two noisy settings.

\textbf{Random Noise.} In this setting, each inter-agent channel is subject to independent and identically distributed perturbations. 
We adopt the noise configuration from MAGI \cite{MAGI} as:
\begin{equation}
   \hat{m}_i = m_i + \eta_r \cdot \phi \cdot \epsilon,
\label{noise}
\end{equation}
$\epsilon$ represents Gaussian noise, $\phi$ denotes amplitude computed as the mean of maximum values across all messages, and $\eta_r \in \{0,0.5,1,1.5,2\}$ is the scaling weight of noise.
In Fig.\ref{noisy_test}, Herm maintains superior robustness across varying noise, specifically in the linear (Hallway) or grid-based (LBF) scenarios. 
We attribute this to the well-preserved orthogonality in symbolic message space, where semantic and similarity relationships remain stable despite noise interference.
Although the AME method can somewhat enhance the robustness of baselines, it is not suitable for small-scale scenarios such as Hallway, where each message passing remains essential.
While CroMAC attempts to improve message-level robustness, its reliance on neural representations limits its noise resilience.

\begin{figure*}[ht]
\centering
\includegraphics[width=1.0\linewidth]{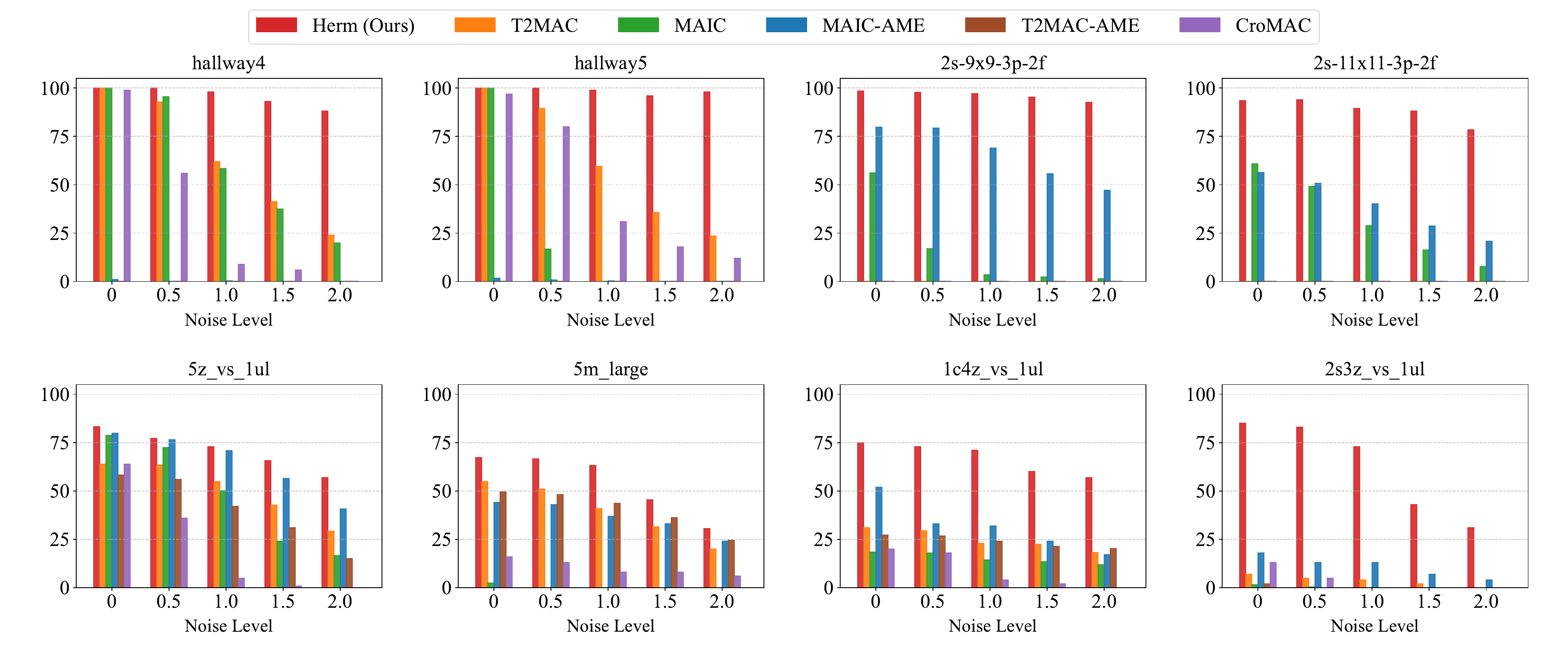}
\caption{Robustness Evaluation in Randomly Noisy Environments.}
\label{noisy_test}
\end{figure*}

\textbf{Adversarial Noise.} We utilize two types of gradient-based adversarial attacks.
(1) Fast Gradient Sign Method (FGSM) \cite{FGSM} is a one-step attack as Equ.\ref{FGSM}. $\eta_f \in \{0.3, 0.5\}$ is the scaling parameter and $\phi$ denotes the amplitude used in Equ.\ref{noise}. $f$ maximizes the cross-entropy loss between the noisy policy and the noise-free policy.
\begin{equation}
\label{FGSM}
    \hat{m}_i = m_i + \eta_f \cdot sign(\nabla_{m_i}f(m_i)) \cdot \phi,
\end{equation}
(2) Projected Gradient Descent (PGD) \cite{PGD} is a multi-step variant of FGSM, defined as 
\begin{equation}
    \hat{m}_i^{e+1} = \Pr_{\phi} \left(\hat{m}_i^{e} + \eta_f \cdot sign(\nabla_{\hat{m}_i^{e}}f(\hat{m}_i^{e})) \cdot \frac{\phi}{5}\right), 
\end{equation}
where $e \in [0,4]$ means 5-step attacks and $\hat{m}_i^{0} = m_i$. Here, we adopt the L2 norm ball constraint $\Pr$.

Results in Tab.\ref{tab:adv_robust_evaluate} indicate that Herm almost achieves optimal robustness in two settings. Under PGD attacks, Herm exhibits slight performance fluctuations.
This is because the entire communication process only involves a mapping network.
This structure enables unobstructed backpropagation of adversarial gradients to original messages, thereby increasing the risk of communication loss. 

\begin{table*}[t]
    \centering
    \scriptsize
    \caption{Robustness Evaluation in Adversarially Noisy Environments. }
    \label{tab:adv_robust_evaluate}
    
    \begin{tabular}{cccccccc}
        \toprule
        \textbf{Scenario} & \textbf{Method} & \textbf{No-Adv} & \textbf{FGSM (0.1)} & \textbf{FGSM (0.3)} & \textbf{FGSM (0.5)} & \textbf{PGD (5, 0.3)} & \textbf{PGD (5, 0.5)} \\
        \midrule
        \multirow{3}{*}{\textbf{5m\_large}} & CroMAC & $16.3 \pm 0.2$ & $10.7 \pm 0.1$ & $3.1 \pm 0.4$ & $0.5 \pm 0.9$ & $9.2 \pm 0.3$ & $1.6 \pm 0.7$ \\

        & MAIC-AME & $48.9 \pm 0.8$ & $44.1 \pm 0.6$ & $26.5 \pm 0.5$ & $15.1 \pm 0.2$ & $42 \pm 3.3$ & $\mathbf{36 \pm 2.8}$ \\

        & Herm (Ours) & $\mathbf{67.2 \pm 0.5}$ & $\mathbf{58.3 \pm 0.4}$ & $\mathbf{31.8 \pm 0.1}$ & $\mathbf{18.6 \pm 0.3}$ & $\mathbf{43.5 \pm 1.4}$ & $33.9 \pm 3.2$ \\
        \hline
        \multirow{3}{*}{\textbf{5z\_vs\_1ul}} & CroMAC & $64.5 \pm 0.1$ & $10.5 \pm 0.8$ & $0.3 \pm 0.2$ & $0.8 \pm 0.5$ & $31.4 \pm 0.9$ & $24.7 \pm 0.1$ \\

        & MAIC-AME & $77.8 \pm 0.3$ & $60.8 \pm 0.7$ & $34.2 \pm 0.6$ & $6.2 \pm 0.1$ & $\mathbf{65.2 \pm 2.4}$ & $37.5 \pm 3.3$ \\

        & Herm (Ours) & $\mathbf{79.0 \pm 0.2}$ & $\mathbf{68.5 \pm 0.9}$ & $\mathbf{35.5 \pm 0.3}$ & $\mathbf{25.1 \pm 0.4}$ & $63.5 \pm 4.1$ & $\mathbf{58.5 \pm 3.7}$ \\
        \hline
        \multirow{3}{*}{\textbf{Hallway 4}} & CroMAC & $99.3 \pm 0.6$ & $16.2 \pm 0.5$ & $6.0 \pm 0.1$ & $0.0 \pm 0.8$ & $39.5 \pm 0.7$ & $17.3 \pm 0.2$ \\

        & MAIC-AME & $5.3 \pm 0.4$ & $0.0 \pm 0.3$ & $0.0 \pm 0.6$ & $0.0 \pm 0.0$ & $0.0 \pm 0.0$ & $0.0 \pm 0.0$ \\

        & Herm (Ours) & $\mathbf{100 \pm 0.0}$ & $\mathbf{99.0 \pm 0.2}$ & $\mathbf{95.5 \pm 0.4}$ & $\mathbf{92.7 \pm 0.5}$ & $\mathbf{100.0 \pm 0.0}$ & $\mathbf{97.5 \pm 1.6}$ \\
        \hline
        \multirow{3}{*}{\textbf{Hallway 5}} & CroMAC & $98.1 \pm 0.9$ & $96.9 \pm 0.7$ & $24.6 \pm 0.3$ & $13.5 \pm 0.2$ & $70.8 \pm 0.4$ & $44.1 \pm 0.6$ \\

        & MAIC-AME & $3.6 \pm 0.5$ & $0.0 \pm 0.0$ & $0.0 \pm 0.9$ & $0.0 \pm 0.0$ & $0.0 \pm 0.0$ & $0.0 \pm 0.0$ \\

        & Herm (Ours) & $\mathbf{100.0 \pm 0.0}$ & $\mathbf{98.0 \pm 0.6}$ & $\mathbf{93.5 \pm 0.8}$ & $\mathbf{90.2 \pm 0.7}$ & $\mathbf{100.0 \pm 0.0}$ & $\mathbf{100.0 \pm 0.0}$ \\
        \bottomrule
    \end{tabular}
\end{table*}
\begin{figure*}[t]
\centering
\includegraphics[width=0.85\linewidth]{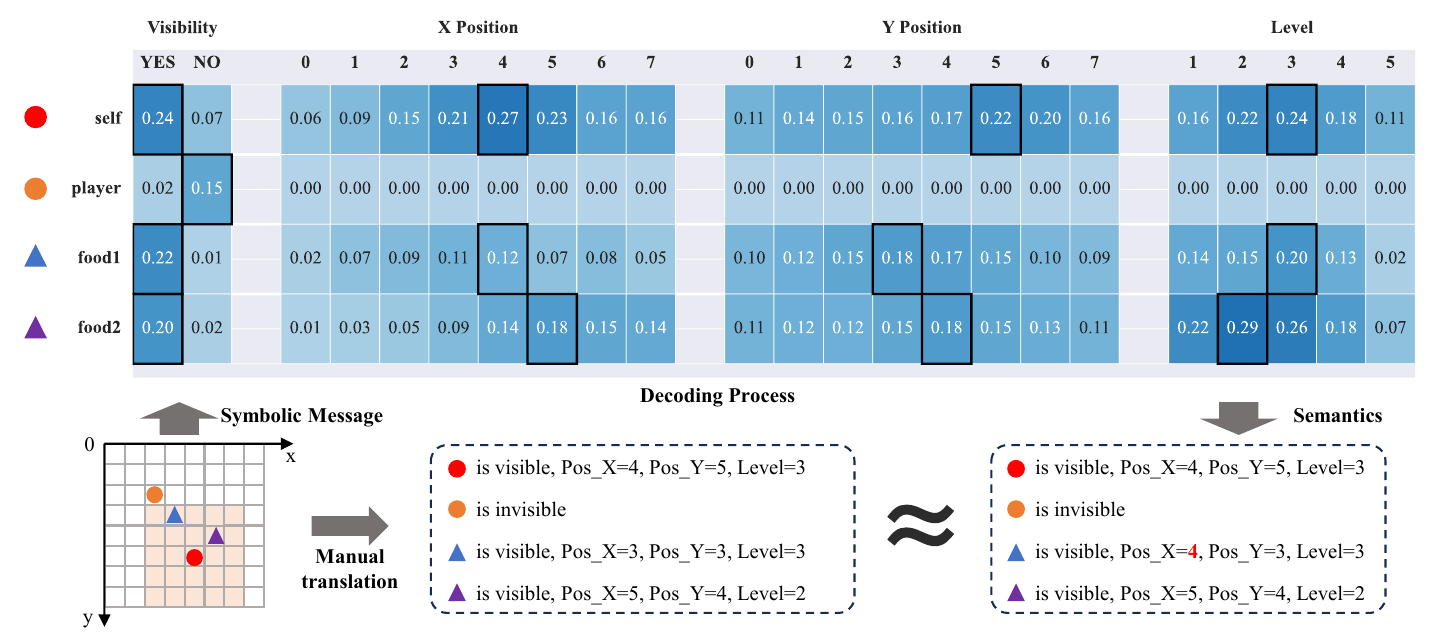}
\caption{Message-Semantic Transformation in LBF. The upper displays entity-wise similarity measurements, while the lower shows: the raw observations, manually translated observation texts, and the decoded semantics. The black box highlights the value exhibiting maximum similarity.}
\label{explain}
\end{figure*}

\subsection{Message-Semantics Transformation}
The explicit conversion fidelity lies in the structural readability, which is guaranteed by the algebraic properties of hyperdimensional computing, and is completely independent of manual semantic labeling.
Even without any labels, we can still accurately disassemble the composite message, clarify its information composition, and measure the similarity between different components. Manual labels are only an optional human-readable interface, which maps the disassembled message components to natural language for human inspection, and do not constitute the source of readability itself.

Fig.\ref{explain} presents the decoding results for the observation components in messages within the LBF.
The decoded semantics obtained through attribute-value combinations show near-perfect alignment with manually translated observation texts.
Such alignment verifies that hyperdimensional symbolic messages possess excellent conversion fidelity.
The conversion for decision-valid semantics is then verified via subsequent ablation experiments.
Furthermore, we observe that the accuracy of semantic transformation improves with the expansion of the hyperdimensional space. 
More semantic transformations and LLM collaboration are presented in  Appendix \ref{Interpretability-Analysis},\ref{Similarity-Analysis}, and \ref{llm+comm}.

\begin{figure}[htbp]
  \centering
  \subfloat[Analysis of Communication Components in MMM2.]{\label{fig:ablaa}
    \includegraphics[width=0.57\textwidth]{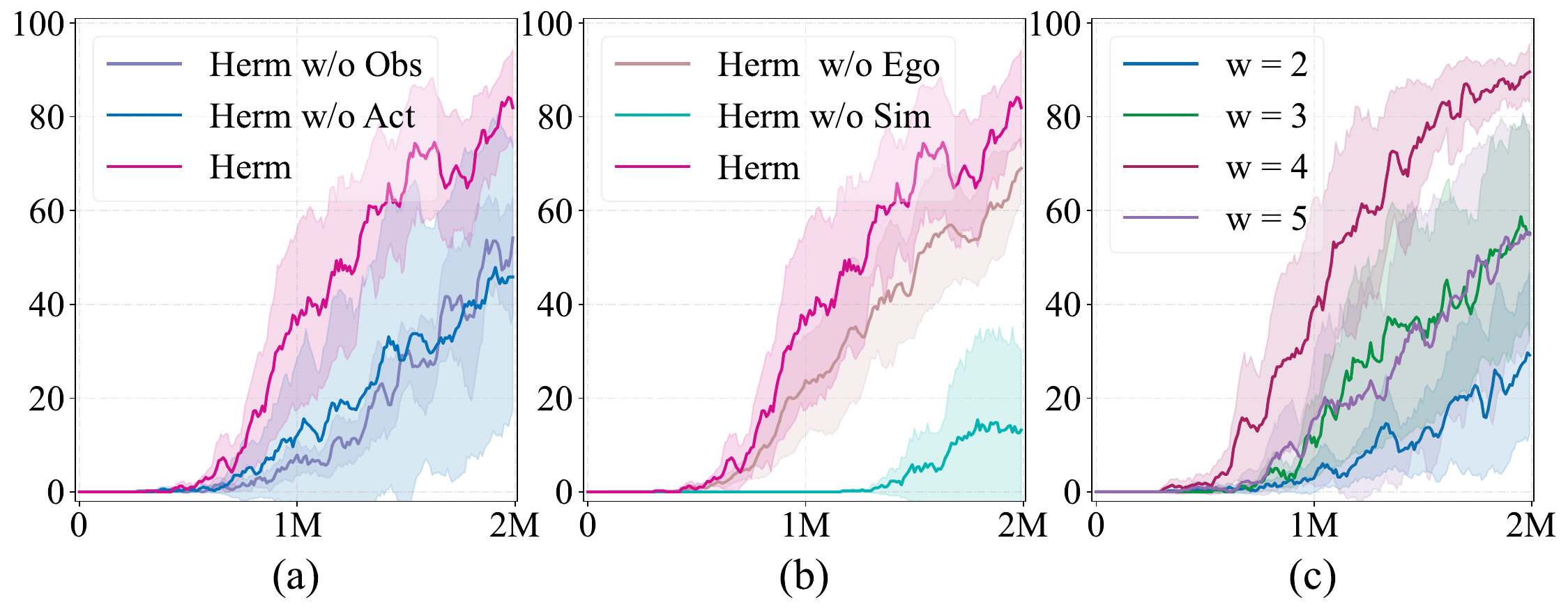}
  }
  \hfill 
  \subfloat[Performance with Varying $\mathcal{H}$ in LBF.]{\label{fig:ablab}
    \includegraphics[width=0.38\textwidth]{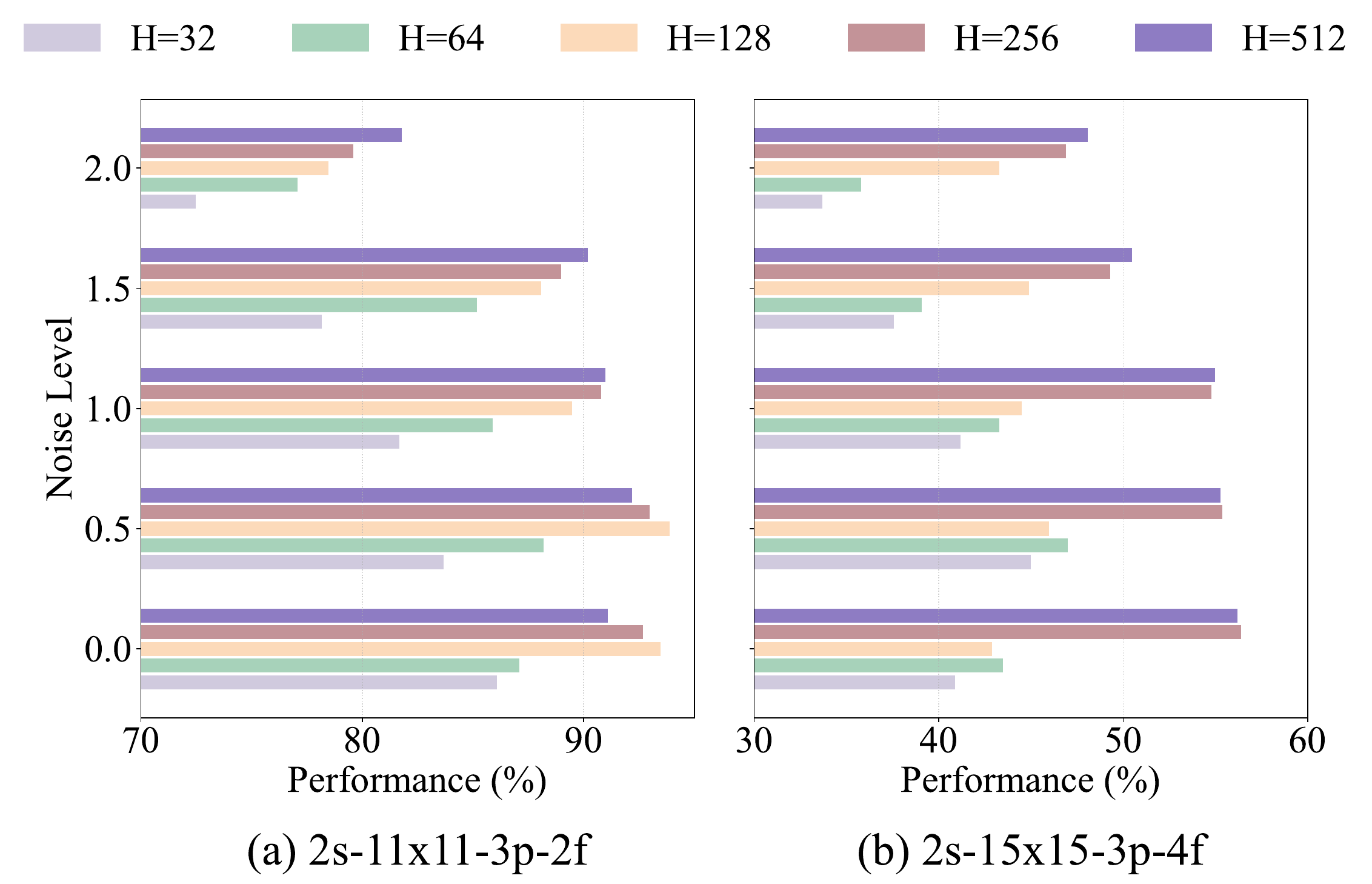}
  }
  \caption{Ablation and Parameter Analysis on the Communication Process.}
  \label{fig:main}
\end{figure}



\subsection{Ablation over Communication Process}
We first analyze components in the sender's message encoding, including 1) observation and 2) action sequence.
Fig.\ref{fig:ablaa} (a) demonstrates the necessity of both components. 
Then, we conduct ablations in the receiver's message fusion. 
In Fig.\ref{fig:ablaa} (b), eliminating similarity metrics leads to catastrophic performance degradation. 
Massive messages result in information redundancy, making it impossible for agents to extract decision-valid semantics.
Similarity offers explicit fusion guidance in terms of perception supplementation and behavior coordination.
Moreover, without ego-binding, Herm suffers from semantic misalignment, which increases the burden on the downstream network to capture correlations. It further results in reduced sample efficiency and performance.

\subsection{Parameter Analysis}
\textbf{Action Sequence Length $w$.} Fig.\ref{fig:ablaa} (c) shows significant variations in behavioral intention transmission with different action sequence lengths. When $w=2$, limited action sequences hinder effective inter-agent behavioral understanding. Performance shows marked improvement with increasing $w$. However, excessively large $w$ can compromise the orthogonality in message encoding, consequently degrading collaborative performance. In our study, we adopt $w=4$ as the default setting.


\textbf{Message Dimension $\mathcal{H}$.}
The impact of $\mathcal{H}$ is quantified in Fig.\ref{fig:ablab}. When $\mathcal{H}=32$, Herm performs poorly in both noiseless and noisy settings. Distinct semantic symbols become entangled in the constrained hyperdimensional space, leading to the ambiguous perception of the environment and teammates. As $\mathcal{H}$ increases, Herm exhibits enhanced robustness. Moreover, the required dimensionality $\mathcal{H}$ varies in different scenarios. For simpler tasks (Fig.\ref{fig:ablab} (a)), $\mathcal{H}=128$ suffices to handle most conditions, whereas for complex tasks (Fig.\ref{fig:ablab} (b)), $\mathcal{H}=256$ exhibits significant advantages.
In fact, a larger $\mathcal{H}$ delivers better performance in robustness (Fig.\ref{fig:ablab}) and readability (see Appendix \ref{Interpretability-Analysis}).

\subsection{Transmission Fidelity Quantification}
We adopt the Spearman correlation coefficient ($\rho$) as the quantitative metric~\cite{spearman}.
The Spearman correlation coefficient is a nonparametric statistic that quantifies the strength and direction of monotonic relationships between two variables by evaluating their rank order.
We calculate the Spearman correlation between the similarity ranking of raw observations before encoding and the similarity ranking of encoded messages after transmission.
The quantitative results are summarized in Table~\ref{tab:rho_noise}.
A $\rho$ value closer to 1 indicates that the HSM module better preserves the original semantic similarity relationships.
Experimental results show that such similarity relationships can be stably maintained under various noise perturbations, which verifies the strong transmission fidelity of the HSM module.

\begin{table}[tbp]
  \centering
  \footnotesize
  \caption{Spearman Correlation Coefficient $\rho$ under Noise (mean $\pm$ std)}
  \label{tab:rho_noise}
  \begin{tabular}{llccccc}
    \toprule
    Task & Encoding & noise=0 & noise=0.5 & noise=1.0 & noise=1.5 & noise=2.0 \\
    \midrule
    $9\times9$ & HDC & $\textbf{0.878}\pm\textbf{0.017}$ &  $\textbf{0.872}\pm\textbf{0.020}$ & $\textbf{0.785}\pm\textbf{0.034}$ & $\textbf{0.678}\pm\textbf{0.047}$ & $\textbf{0.615}\pm\textbf{0.050}$ \\
    $9\times9$ & Neural & $0.862\pm0.039$ &  $0.825\pm0.045$ & $0.720\pm0.075$ & $0.595\pm0.070$ & $0.440\pm0.048$ \\
    $11\times11$ & HDC & $\textbf{0.882}\pm\textbf{0.028}$  & $\textbf{0.875}\pm\textbf{0.020}$ & $\textbf{0.785}\pm\textbf{0.028}$ & $\textbf{0.705}\pm\textbf{0.030}$ & $\textbf{0.592}\pm\textbf{0.038}$ \\
    $11\times11$ & Neural & $0.858\pm0.040$  & $0.815\pm0.022$ & $0.695\pm0.072$ & $0.510\pm0.118$ & $0.385\pm0.082$ \\
    \bottomrule
  \end{tabular}
\end{table}

\section{Conclusions}
This paper pioneers the definition of semantic fidelity in multi-agent communication. Based on this, the hyperdimensional symbolic message and its communication framework are proposed to enhance both transmission fidelity and conversion fidelity.
However, this paper still has certain limitations to be addressed. For one thing, the encoding module is decoupled from the policy optimization; future work can explore an end-to-end joint training paradigm from semantic representation to decision-making. 
For another, the per-dimension message encoding scheme lacks flexibility. We will therefore design an automatic fine-grained attribute selection and dynamic encoding mechanism, which allows agents to decide what to communicate.

\section{Acknowledgments}
This work was supported by the National Natural Science Foundation of China (Grant No. 62576029) and Beijing Natural Science Foundation (Grant No. L2607023).
\medskip

{
\small
\bibliographystyle{unsrtnat}
\bibliography{neurips_2026}
}


\appendix

\section{Supplementary Experiments}
\subsection{SMAC Configuration}
In the experiment, we utilize the communication-intensive tasks proposed in \cite{CODE}.
As illustrated in Fig. \ref{map_img}, the initial positions of units are randomly generated with specific terrain obstacles. Agents are required to explore their surrounding environment to locate allies and enemies. Given that both sides boast comparable military strength and enemies exhibit a tendency to cluster, solo agents are vulnerable to elimination, ultimately resulting in mission failure due to force imbalances. Therefore, agents must wait to form clusters before engaging in combat under conditions of force balance; otherwise, successful mission completion becomes exceedingly difficult. Hence, the imperative of inter-agent communication is further underscored, accompanied by heightened requirements for exploration and coordination. This renders these tasks highly suitable for assessing communication-aware multi-agent reinforcement learning algorithms.
\begin{figure*}[ht]
    \centering
    \includegraphics[width=0.9\linewidth]{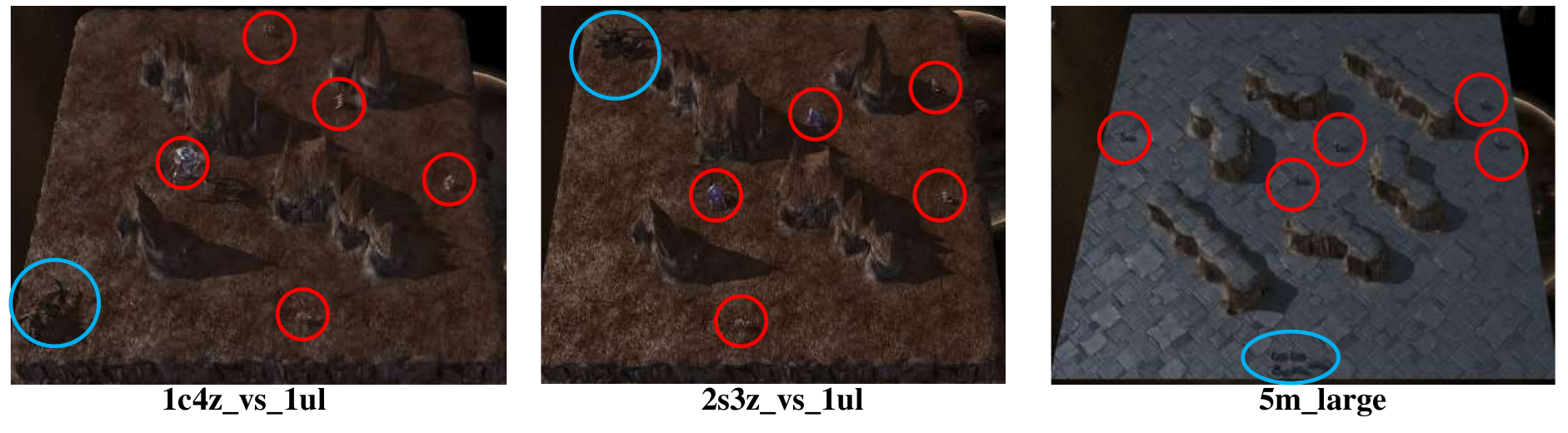}
    \caption{Communication-Intensive Tasks in SMAC \cite{CODE}. Enemies are highlighted in blue circles, while allies are marked in red.}
    \label{map_img}
\end{figure*}
\subsection{Experiment Parameter}
In the experiments, both the baselines and our proposed approach are implemented using the PyMARLzoo+ architecture \footnote{https://github.com/ailabdsunipi/pymarlzooplus} \cite{pymarlzooplus}, retaining most default parameters (see Tab.\ref{tab:default-para}). In the experiment, the CPU used is AMD Ryzen 9 9950X, and the GPU is NVIDIA GeForce RTX 4090.

\begin{table}[ht]
    \centering
    \caption{Default Parameters}
    \begin{tabular}{cc}
    \toprule
       Parameter  & Value \\
    \midrule
       Optimizer  & Adam \\
       Learning Rate & $5 \times 10^{-4}$ \\
       Batch Size & 32 \\
       Discount Factor & 0.99 \\
       GRU Hidden State Dim & 64 \\
       Mixing Hidden Dim & 32 \\
       Hypernet Embedding Dim & 32\\
    \bottomrule
    \end{tabular}
    \label{tab:default-para}
\end{table}

Table \ref{tab:parameters} presents the parameters specific to our proposed Herm.
Due to the simplicity of the hallway task, we uniformly employ $\mathcal{H}=64$ and $w=1$ configurations. For most LBF tasks, we maintain $\mathcal{H}=64$ and $w=1$, while increasing to $H=256$ for more complex 15×15 maps. For SMAC with its greater entity and attribute complexity, we adopt $\mathcal{H}=256$ with $w=3$ or 4 to adequately represent the expansive observation space.

\begin{table}[htbp]
\centering
\caption{Method-Specific Parameters}
\label{tab:parameters}
\begin{tabular}{p{1cm}p{7.2cm}p{2cm}}
\toprule
\textbf{Symbol} & \textbf{Description} & \textbf{Value} \\
\midrule
$\mathcal{H}$ & Dimension of the hyperdimensional space & $\{64, 128, 256\}$ \\
$w$ & Length of historical action sequences & $\{1, 3, 4\}$ \\
$Bin$ & Number of discretized partitions & $\{5, 10\}$ \\
\bottomrule
\end{tabular}
\end{table}

To establish a robust communication baseline, we implement the message ablation mechanism from AME, leveraging multi-round voting to mitigate noise. Specifically, during training, we retain $K$ out of $N-1$ message sources for policy learning, where $K$ is set to $N-2$ or $N-3$. During testing, we perform $P=10$ ablation trials for stabilized evaluation.

\subsection{Analysis of Model Size and Efficiency}
We compare the model parameter counts and the time required for a single complete interaction across different algorithms in Tab.\ref{tab:para-eff}. 
Results demonstrate that Herm benefits from hyperdimensional symbolic messages and eliminates the need for neural networks in the encoding and fusion processes, resulting in a minimal model parameter count. 
Herm’s parameters consist of two components: the first encompasses the GRU and MLP networks for processing agents’ local observation, and the second corresponds to the single-layer mapping network after message fusion.
In contrast, other baseline methods typically require additional message encoding networks, while adopting attention mechanisms and graph-based aggregation schemes for message fusion, which entails a considerably larger parameter overhead.

In terms of decision-making efficiency, Herm is limited by the programmatic encoding of hyperdimensional computing, exhibiting a slight disadvantage. However, the static encoding process allows for the reuse of previous results, trading space for time—similar to the KV cache mechanism. Results show that storing and reusing the binding outputs of attribute-value pairs can significantly reduce encoding time, bringing it close to baselines. 
\begin{table}[ht]
    \centering
    \caption{Comparison on Model Size and Efficiency}
    \begin{tabular}{ccccccc}
    \toprule
         & MAIC & T2MAC & CoDe & TGCNET& Herm (Ours) & Herm + Cache\\
    \midrule
       Params (K) & 50.79 & 50.70 & 49.39 & 827.01 & \textbf{46.88} & \textbf{46.88}\\
       Times (s)  & 1.481  & 1.528  & \textbf{1.363} & 1.656 &  1.993 & 1.392\\
    \bottomrule
    \end{tabular}
    
    \label{tab:para-eff}
\end{table}

\subsection{Additional Comparison on Robustness}
We further supplement additional robustness tests and incorporate the baseline, MAGI \cite{MDGI2024}. Since MAGI is not open-sourced, some results are missing. Results in Fig.\ref{fig:mdgi_compare} demonstrate that our proposed Herm still achieves the optimal robust performance in scenarios with a large number of entities.
\begin{figure*}[!htb]
     \centering
     \includegraphics[width=0.7\textwidth]{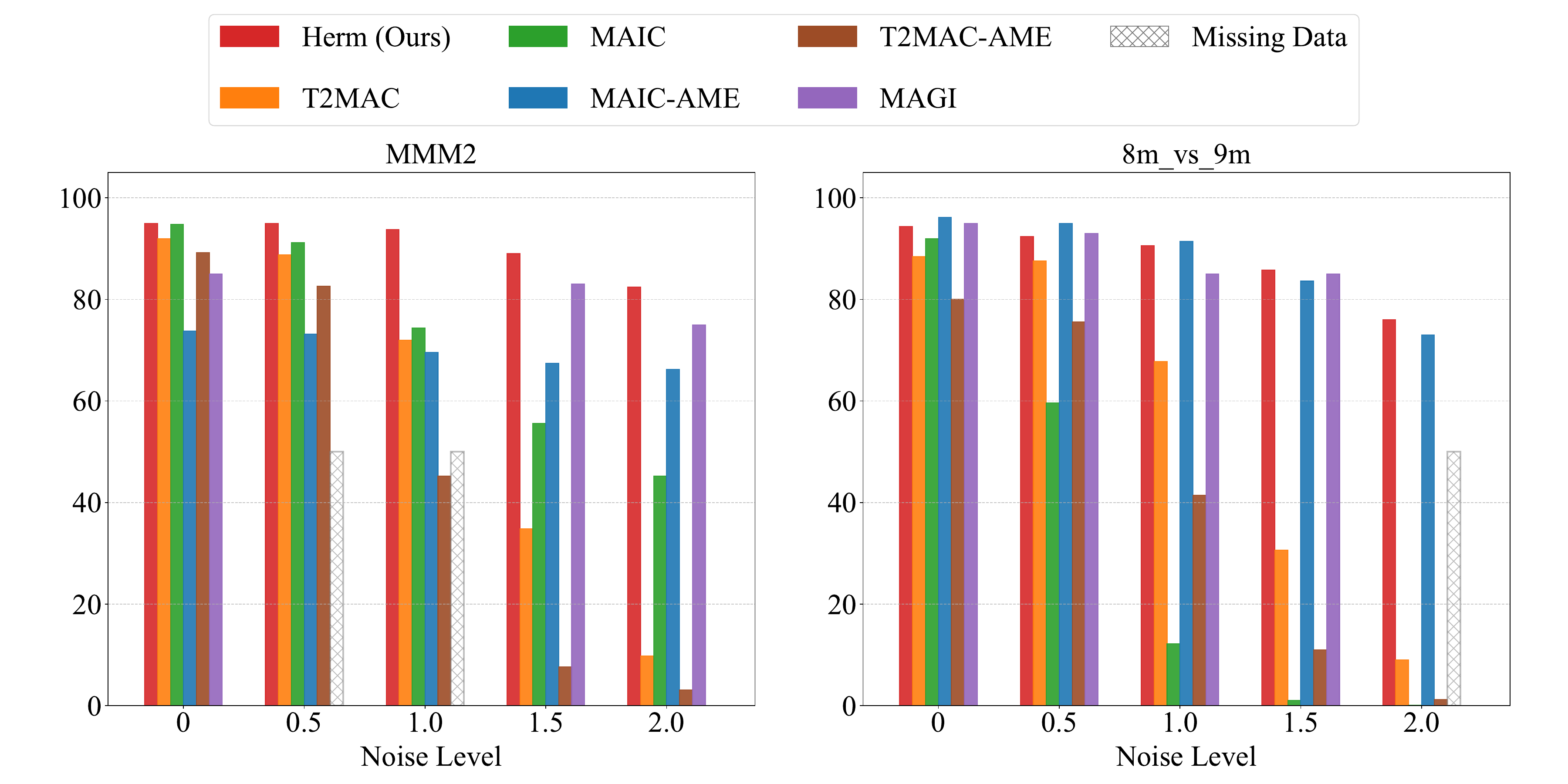}
     \caption{Additional Robustness Comparison.}
     \label{fig:mdgi_compare}
\end{figure*}

\subsection{Scalability of Hyperdimensional Computing Framework}
\label{append-hdc}
Different HDC frameworks differ primarily in their implementations of symbolic operations. We adopt the most conventional HRR framework in our experiments. 
We provide supplementary performance comparisons of different computational frameworks in Fig.\ref{fig:hd_type}.

 \begin{figure}[htb]
     \centering
     \includegraphics[width=0.65\linewidth]{ 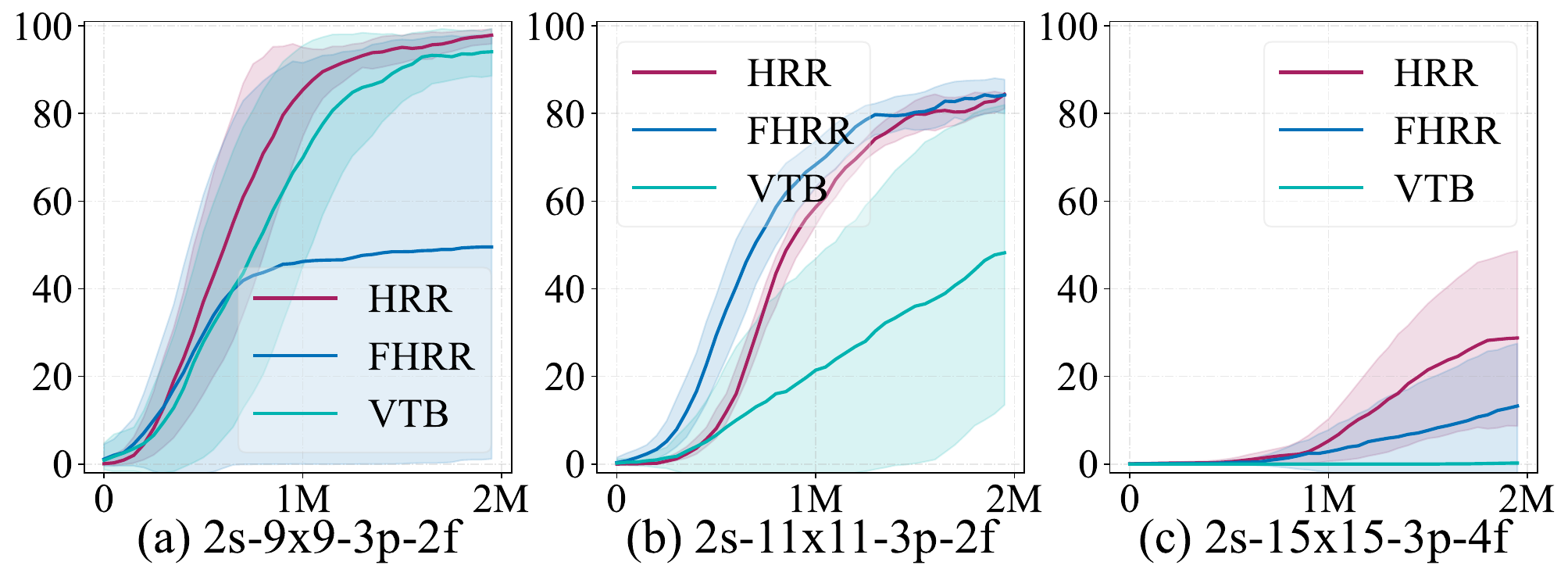}
     \caption{Performance Comparison with Different HDC Framework.}
     \label{fig:hd_type}
 \end{figure}

\textbf{(1) Holographic Reduced Representation (HRR)} \cite{HRR} is a technique in hyperdimensional computing that represents compositional structures using circular convolution in distributed representations, allowing for the binding and unbinding of vectors to combine concepts and retrieve them given the combined result. Its key operations involve circular convolution (binding) and circular correlation (unbinding): for vectors $\mathbf{A} $ and $\mathbf{B}$, binding is calculated as: 
\begin{equation}
    (\mathbf{A} \otimes \mathbf{B})_k = \sum_{i=0}^{D-1} A_i B_{k-i} \bmod D,
\end{equation}
 where $k$ refers to the value of the k-th dimension in the output, and $d$ is the size of the vector space. The unbinding uses a similar process, formulated as:
 \begin{equation}
    (\mathbf{A} \circledast \mathbf{B})_k = \sum_{i=0}^{D-1} A_i B_{i+k} \bmod D.
\end{equation}
 HRR advantages include 1) representing numerous key-value pairs in a fixed-size vector, 2) efficient associative memory for tasks like natural language processing, and 3) preservation of semantic similarity. 
However, its commutative binding limits the representation of non-commutative relationships and complex hierarchical structures.

\textbf{(2) Fourier Holographic Reduced Representation (FHRR)} \cite{FHRR} extends HRR by performing operations in the frequency domain, with hypervectors of the form $h=(e^{i\theta_1},\ldots,e^{i\theta_d})$. Bundling and binding are done via element-wise addition and multiplication, respectively. FHRR leverages Fourier transform properties to simplify binding, making it a simple element-wise multiplication in the frequency domain for lower computational complexity $O(D\log D)$, while also supporting sparse vectors well. 

\textbf{(3) Vector-Derived Transformation Binding (VTB)} \cite{VTB} proposes a new binding operation that represents relationships as linear transformations in vector space. For vectors $\mathbf{A} $ and $\mathbf{B}$, binding is achieved through a linear transformation matrix W as $\mathbf{A}\otimes\mathbf{B} = \mathbf{W} \cdot (\mathbf{A} + \mathbf{B})$
 where $+$ denotes concatenation. It allows flexible vector binding with a learnable transformation matrix $W$, and supports encoding hierarchical structures such as tree-like or knowledge graph relationships. VTB has advantages on high flexibility in representing complex relationships, suitability for neural implementations due to learnable matrices, and minimal influence on vector length beneficial for neural implementation.  

 \subsection{Extension with Numerical Messages}
This section further investigates the integration schemes between hyperdimensional computing (HDC) and the numerical message.
Since HDC typically does not involve neural networks, we attempt to realize integration from the perspectives of post-processing. 
Notably, the design presented in this section is merely intended as an intuitive exploration and verification; more in-depth integration and design will be reserved for future research.

Numerical messages utilized in \cite{vbc,TMC,MAIC,T2MAC} are usually presented in the form of Q-tables.
The sender conveys action suggestions as messages to the receiver, which synthesizes all suggestions to make its own decision. 
This indicates that a numerical message is formally an array.
Therefore, we propose to use hyperdimensional symbolic messages (HSM) to represent this array, thereby enhancing its robustness against noise.

Herein, we propose a simple conversion method to transform an array into a hyperdimensional vector. First, attributes are defined as positive/negative signs and finite digit positions, and the value set covers positive/negative signs and integers 0–9. With this division, the sender adopts the designed hyperdimensional encoding pipeline (see Section \ref{section-encode}) for conversion and transmission, while the receiver recovers the array from the vector through iterative decoding (see Section \ref{section-decode}).

\begin{table}[h]
    \centering
    \footnotesize
    \caption{Performance with HSM-Enhancement}
    \begin{tabular}{ccccccc}
    \toprule
     \multirow{2}{*}{Task} &\multirow{2}{*}{$\mathcal{H}$} & \multicolumn{5}{c}{Noise Level} \\
     \cline{3-7}
     & & 0 & 0.5 & 1.0 & 1.5 & 2.0\\
    \midrule
    \multirow{4}{*}{Hallway-4} &MAIC & 100 $\pm$ 0.0& 95.6 $\pm$ 1.7&58.4 $\pm$ 1.9& 37.4 $\pm$ 3.3& 20.0 $\pm$ 3.9\\
     &MAIC-E ($\mathcal{H}=256$) & 98.4 $\pm$ 0.5  & 96.8 $\pm$ 0.8 & 94.5 $\pm$ 0.7  & 64.5 $\pm$ 2.6 & 44.6 $\pm$ 4.1\\
     &MAIC-E ($\mathcal{H}=512$) & 100.0 $\pm$ 0.0   & 100.0 $\pm$ 0.0 & 100.0 $\pm$ 0.0   & 100.0 $\pm$ 0.0 & 85.5 $\pm$ 1.3\\
     &MAIC-E ($\mathcal{H}=1024$) & 100.0 $\pm$ 0.0  & 100.0 $\pm$ 0.0  & 100.0 $\pm$ 0.0  & 100.0 $\pm$ 0.0  & 100.0 $\pm$ 0.0\\
     \midrule
    \multirow{2}{*}{5z\_vs\_1ul} &MAIC & 78.8 $\pm$ 1.2& 72.4 $\pm$ 1.9&50.2 $\pm$ 2.6& 24.0 $\pm$ 4.2& 16.6 $\pm$ 4.1\\
     &MAIC-E ($\mathcal{H}=1024$) & 70.6 $\pm$ 0.8  & 69.8 $\pm$ 1.3 & 69.5 $\pm$ 1.6 & 56.5 $\pm$ 3.5 & 34.6 $\pm$ 3.7\\
    \bottomrule
    \end{tabular}
    
    \label{tab:maic-enhance}
\end{table}

We select MAIC \cite{MAIC} as the baseline, introduce HSM with different dimensions as the transmission form on the same model, and conduct experimental comparisons.
It is shown in Tab.\ref{tab:maic-enhance} that hyperdimensional conversion significantly improves the noise robustness of MAIC in the small-scale task (hallway), and a positive correlation exists between this improvement and the message dimension $\mathcal{H}$. 
This is because HSM distributes numerical information across each dimension of the hyperdimensional space, meaning that other dimensions can mitigate significant noise interference on certain dimensions.
By contrast, for more complex tasks (5z\_vs\_1ul), the Q-table scale increases significantly, posing challenges for accurately converting the value of each digit and requiring a higher dimension.
Limited by the conversion accuracy, MAIC-E exhibits a slight performance decline under no-noise conditions, yet it exhibits stronger stability under noisy conditions.
This phenomenon can be attributed to the preservation of similarity by HSM under interference, whereby stable mapping to the corresponding digit values is maintained despite numerical alterations in dimensions.

\subsection{Extension with Policy-Based Frameworks}
This section further integrates Herm with policy-based methods to verify its scalability.
Here, MADDPG \cite{MADDPG} and MAPPO \cite{mappo} are selected as baselines, and experiments are conducted in the MPE environment under continuous-action settings.
Experimental results in Fig.\ref{fig:two-subfigs} show that Herm can effectively adapt to the two types of policy-based methods and achieve certain improvements for each. Furthermore, these results also verify that hyperdimensional encoding is equally effective for continuous observation and action spaces. Through discretization, continuous values are converted into discrete semantics, enabling a form of fuzzy control.
\begin{figure}[H]
  \centering
  \subfloat[MADDPG.]{\label{fig:suba}
    \includegraphics[width=0.34\textwidth]{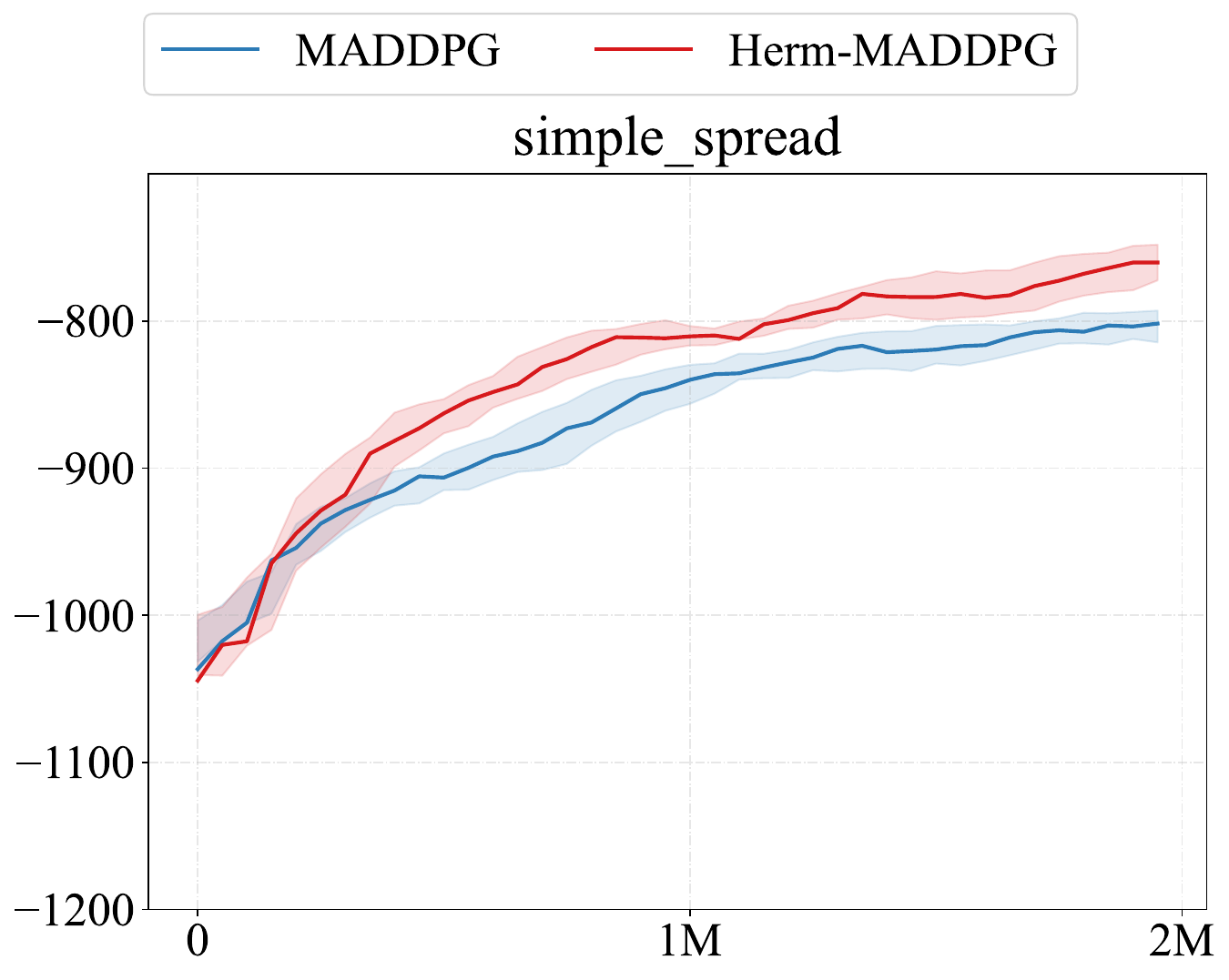}
  }
  \hfill 
  \subfloat[MAPPO.]{\label{fig:subb}
    \includegraphics[width=0.34\textwidth]{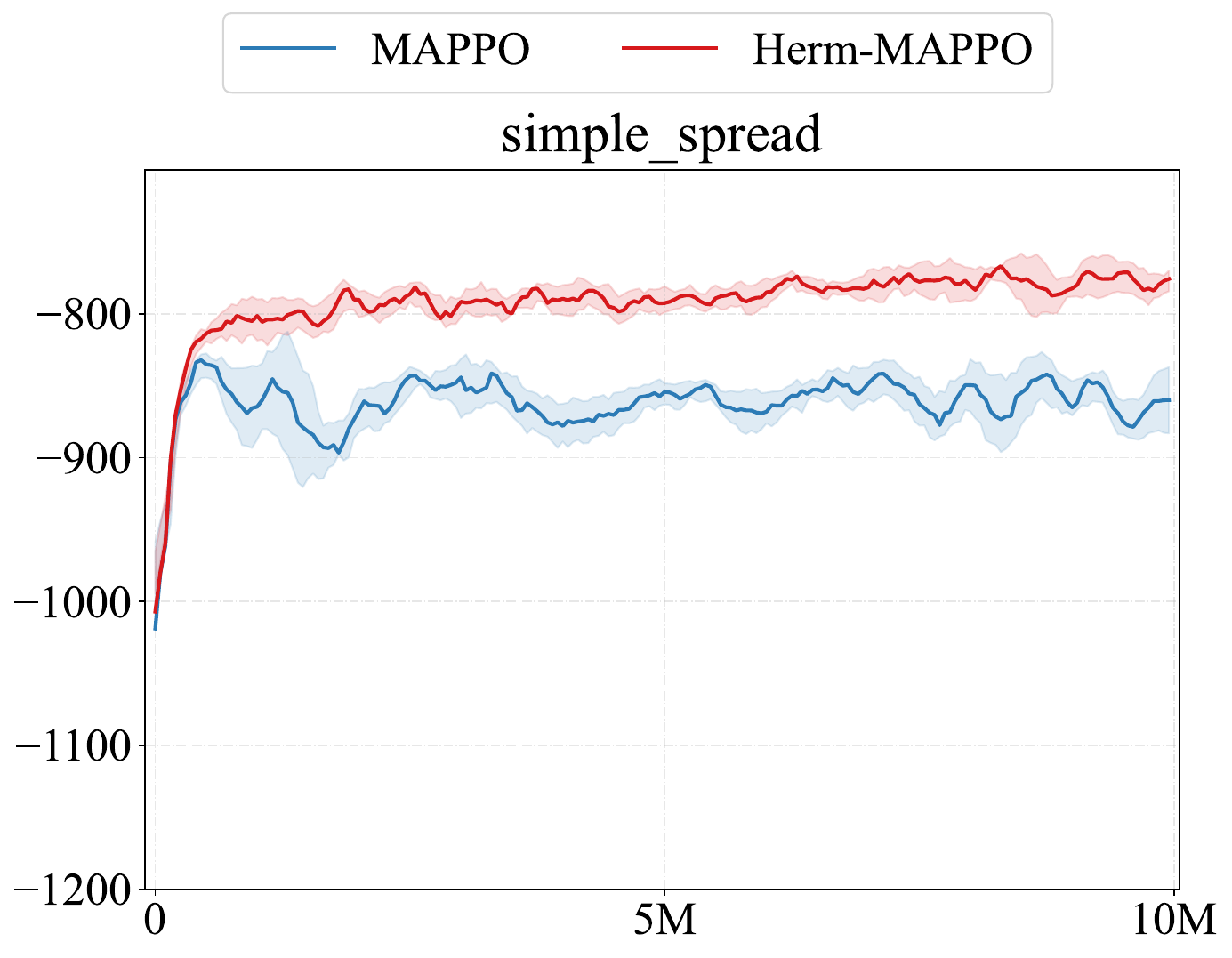}
  }
  \caption{Performance with Policy-based Extension.}
  \label{fig:two-subfigs}
\end{figure}

\subsection{Ablation Study on Message Format and Fusion Process.}

We conduct an ablation study by crossing two message formats (\texttt{Neural}, \texttt{HSM}) with two receiver fusion strategies (\texttt{Conventional}, \texttt{Herm}). All model variants adopt identical QMIX backbone, message dimension and network topology. The Neural-Herm variant (\texttt{Herm\_ne}) replaces HSM with neural encoders while preserving the dual-branch fusion mechanism. The HSM-conventional variant (\texttt{Herm\_mf}) retains HSM encoding but switches to mean fusion. We further introduce two additional ablated variants: one using neural encoding paired with mean fusion (\texttt{Herm\_ne\_mf}), and another adopting reversed similarity-based fusion (\texttt{Herm\_is}).

Experimental results show that substituting HSM with neural encoding substantially hurts model performance, leading to over $20\%$ performance degradation on the $11\times11$ task under both noise-free and noisy settings. Despite gradient-based optimization, learnable encoders directly operating on raw observations struggle to extract decision-critical information and cannot sustain stable improvement during dynamic updates. In contrast, HSM constructs a smooth manifold mapping from perception-intention to messages, effectively reducing learning stochasticity.

Moreover, both mean fusion and inverse-similarity fusion induce performance drops, where the inverse-similarity variant suffers larger degradation. Mean fusion incurs global semantic dilution, while inverse similarity yields misaligned attention: perceptually similar agents prioritize less discriminative environmental cues, whereas perceptually dissimilar agents overemphasize behavioral signals. These comparisons verify the efficacy of our dual-branch message fusion design.

We conclude that performance gains primarily originate from the HSM message format, while the fusion module acts as an auxiliary downstream component tailored for HSM.

\begin{table}[htbp]
  \centering
  \footnotesize
  \caption{Ablation of Herm's Variant under Noise}
  \label{tab:ablation_herm_noise}
  \begin{tabular}{llcccc}
    \toprule
    Task & Variant & 0.0 & 0.5 & 1.0 & 2.0 \\
    \midrule
    \multirow{5}{*}{$9\times9$}
         & Herm        & \textbf{99.1}$\pm$\textbf{0.5} & \textbf{97.6}$\pm$\textbf{1.2} & \textbf{94.8}$\pm$\textbf{0.9} & \textbf{84.7}$\pm$\textbf{1.1} \\
         & Herm\_ne    & $94.8\pm0.28$  & $89.6\pm3.9$  & $87.4\pm3.2$  & $80.5\pm4.1$ \\
         & Herm\_mf    & $93.4\pm0.4$  & $93.6\pm0.2$  & $87.8\pm1.3$  & $80.8\pm1.6$ \\
         & Herm\_is    & $92.2\pm0.7$  & $92.8\pm1.1$  & $83.6\pm0.8$  & $76.1\pm1.2$ \\
         & Herm\_ne\_mf& $90.9\pm2.5$  & $87.9\pm2.2$  & $83.7\pm3.8$  & $78.2\pm4.4$ \\
    \midrule
    \multirow{5}{*}{$11\times11$}
         & Herm        & \textbf{96.1}$\pm$\textbf{0.3} & \textbf{94.1}$\pm$\textbf{0.2} & \textbf{90.8}$\pm$\textbf{1.1} & \textbf{81.8}$\pm$\textbf{1.4} \\
         & Herm\_ne    & $74.3\pm2.4$  & $70.2\pm2.1$  & $66.8\pm3.7$  & $60.3\pm4.2$ \\
         & Herm\_mf    & $92.9\pm0.6$  & $93.1\pm0.9$  & $86.8\pm0.7$  & $79.8\pm1.5$ \\
         & Herm\_is    & $91.7\pm0.3$  & $92.3\pm0.4$  & $82.7\pm1.0$  & $76.7\pm1.3$ \\
         & Herm\_ne\_mf& $70.9\pm2.2$  & $67.9\pm3.3$  & $63.7\pm2.9$  & $58.2\pm4.0$ \\
    \bottomrule
  \end{tabular}
\end{table}

\section{Theoretical Proof for Similarity Proposition}
\label{proof}
In this section, we provide detailed elaboration and proofs for the three similarity properties involved in the decoding process. 
Specifically, the following proofs are exclusively applicable to the Holographic Reduced Representation (HRR) framework.

The original description for the three properties is as follows:
\begin{enumerate}
    \item $\delta(HV_A \oplus HV_B, HV_A) \gg 0$
    \item $\delta(HV_A \circledast HV_B, HV_A) \approx 0$
    \item $\delta(HV_A \otimes HV_B \circledast HV_B, HV_A) \gg 0$
\end{enumerate}
Since binding and unbinding are mutually inverse operations, Property 2 is equivalently expressed as: $\delta(HV_A \otimes HV_B, HV_A) \approx 0$.
Consequently, we prove Property 2 from the perspective of binding operations.

\subsection{Preliminaries and Problem Setup}

Let $\mathbb{R}^D$ denote the hyperdimensional space where $D \gg 1$. For simplicity, we denote hyperdimensional vectors by bold lowercase letters (e.g., $\mathbf{a}, \mathbf{b} \in \mathbb{R}^D$) in the following. The similarity between two vectors is measured by the cosine similarity, which simplifies to the dot product for unit-norm vectors. The Hermitian inner product is defined as $\langle \mathbf{u}, \mathbf{v} \rangle \triangleq \mathbf{u}^\top \mathbf{v}$.

Let $\mathcal{F}(\cdot)$ denote the Discrete Fourier Transform (DFT), mapping $\mathbb{R}^D \to \mathbb{C}^D$. By Parseval's theorem, the inner product is preserved (up to a scaling factor) in the frequency domain, where $\overline{\mathcal{F}(\mathbf{v})_k}$ indicates the complex conjugate of the frequency component:
\begin{equation}
    \langle \mathbf{u}, \mathbf{v} \rangle = \frac{1}{D} \sum_{k=0}^{D-1} \mathcal{F}(\mathbf{u})_k \overline{\mathcal{F}(\mathbf{v})_k}
\end{equation}

To derive the statistical bounds, we formalize the properties of atomic hypervectors.

\begin{assumption}[Hyperdimensional Random Vector]\label{ass:model}
Atomic symbols are represented by random vectors $\mathbf{v} \in \mathbb{R}^D$ with independent and identically distributed (i.i.d.) components sampled from a distribution with zero mean and variance $1/D$.
\begin{equation}
    \mathbb{E}[v_i] = 0, \quad \mathrm{Var}[v_i] = \frac{1}{D}
\end{equation}
This variance scaling ensures that $\mathbb{E}[\|\mathbf{v}\|^2] = 1$. Furthermore, distinct atomic vectors $\mathbf{u}, \mathbf{v}$ are statistically independent, implying quasi-orthogonality in high dimensions: $\mathbb{E}[\langle \mathbf{u}, \mathbf{v} \rangle] = 0$. For norm variance computability, components satisfy the fourth-moment condition $\mathbb{E}[v_i^4] = \frac{3}{D^2}$.
\end{assumption}

\begin{assumption}[Fourier Domain Statistics]\label{ass:fourier}
For large $D$, by the Central Limit Theorem (CLT, satisfied as components are i.i.d. with finite variance), the DFT components of random vectors, denoted as $z_k = \mathcal{F}(\mathbf{v})_k$, converge to a Circularly-Symmetric Complex Gaussian distribution:
\begin{equation}
    z_k \sim \mathcal{CN}(0, 1)
\end{equation}
Key statistical moments required for the following proofs are:
\begin{itemize}
    \item \textbf{Second Moment (Energy):} $\mathbb{E}[|z_k|^2] = 1$.
    \item \textbf{Fourth Moment (Kurtosis):} $\mathbb{E}[|z_k|^4] = 2$.
    \item \textbf{Independence:} For $k \neq j$, $z_k$ and $z_j$ are effectively independent (subject to conjugate symmetry constraints for real inputs, which do not affect asymptotic scaling).
\end{itemize}
\end{assumption}

\subsection{Formal Definitions of HRR Operations}

We formally define the three core algebraic operations: Bundling ($\oplus$), Binding ($\otimes$), and Unbinding ($\circledast$).

\begin{definition}[Bundling]\label{def:superposition}
Bundling aggregates a set of vectors $\{\mathbf{x}_i\}_{i=1}^n$ into a single composite vector $\mathbf{S}$. To prevent norm growth, we define the normalized bundling as Equ.\ref{equ:bundle}. This operation preserves similarity. For any component $\mathbf{x}_k$, the expected similarity $\mathbb{E}[\langle \mathbf{S}, \mathbf{x}_k \rangle] > 0$.
\begin{equation}
    \mathbf{S} \triangleq \frac{1}{\sqrt{n}} \sum_{i=1}^n \mathbf{x}_i.
\label{equ:bundle}
\end{equation}
 
\end{definition}

\begin{definition}[Binding via Circular Convolution]\label{def:binding}
Binding associates two vectors $\mathbf{a}$ and $\mathbf{b}$ into a pair $\mathbf{c}$. In HRR, this is realized by circular convolution, denoted as $\mathbf{c} = \mathbf{a} \otimes \mathbf{b}$. 
There exist two mathematical formulations for binding in the spatial domain and frequency domain as follows, where $\circ$ denotes the element-wise (Hadamard) product.
Binding creates a vector that is quasi-orthogonal to its constituents ($\mathbf{c} \perp \mathbf{a}$ and $\mathbf{c} \perp \mathbf{b}$).
\begin{itemize}
    \item \textbf{Spatial Domain:} $c_k = \sum_{j=0}^{D-1} a_j b_{(k-j) \pmod D}$.
    \item \textbf{Frequency Domain:} $\mathcal{F}(\mathbf{c}) = \mathcal{F}(\mathbf{a}) \circ \mathcal{F}(\mathbf{b})$.
\end{itemize}

\end{definition}

\begin{definition}[Unbinding via Approximate Inverse]\label{def:unbinding}
Unbinding refers to the operation that extracts a constituent vector from a bound pair. The exact inverse of circular convolution suffers from numerical instability. HRR adopts involution as the approximate inverse, which corresponds to the complex conjugate in the frequency domain.
Let $\mathbf{b}^\dagger$ denote the involution of $\mathbf{b}$, defined in the frequency domain as $\mathcal{F}(\mathbf{b}^\dagger)_k = \overline{\mathcal{F}(\mathbf{b})_k}$.
The unbinding operation is defined as circular correlation (denoted $\circledast$):
\begin{equation}
    \mathbf{a}' \triangleq \mathbf{c} \otimes \mathbf{b}^\dagger \triangleq \mathbf{c} \circledast \mathbf{b}
\end{equation}
In the frequency domain:
\begin{equation}
    \mathcal{F}(\mathbf{a}')_k = \mathcal{F}(\mathbf{c})_k \cdot \overline{\mathcal{F}(\mathbf{b})_k}
\end{equation}
The unbinding operation is inherently noisy: the circular convolution of $\mathbf{b}$ and its involution $\mathbf{b}^\dagger $ yields $\mathbf{b} \otimes \mathbf{b}^\dagger$, which merely approximates the identity impulse $\boldsymbol{\delta}$; such approximation inevitably induces a residual noise term, and the magnitude of this term is strictly determined by the deviation of $|\mathcal{F}(\mathbf{b})_k|^2$ from unity.
\end{definition}

\subsection{Similarity Properties}

\subsubsection{Bundling}
Bundling aggregates information while preserving similarity to constituents.

\begin{theorem}[Bundling Statistics]
For $D \to \infty$ and $n \ll D$, let $\mathbf{S} = \frac{1}{\sqrt{n}} \sum_{i=1}^n \mathbf{a}_i$. For any $k \in \{1,\dots,n\}$:
\begin{equation}
    \mathbb{E}[\mathbf{S} \cdot \mathbf{a}_k] = \frac{1}{\sqrt{n}}, \quad \text{Var}[\mathbf{S} \cdot \mathbf{a}_k] \approx \frac{1}{D}
\end{equation}
\end{theorem}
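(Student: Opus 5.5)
We expand the inner product by linearity of the bundle and split it into a self term and cross terms:
\begin{equation*}
\mathbf{S}\cdot\mathbf{a}_k = \frac{1}{\sqrt{n}}\|\mathbf{a}_k\|^2 + \frac{1}{\sqrt{n}}\sum_{i\neq k}\langle \mathbf{a}_i,\mathbf{a}_k\rangle .
\end{equation*}
Throughout we work under Assumption~\ref{ass:model}: the atomic vectors $\mathbf{a}_i$ are mutually independent, with i.i.d.\ components of mean $0$, variance $1/D$, and fourth moment $3/D^2$. The Fourier-domain statistics of Assumption~\ref{ass:fourier} are not needed here, since bundling involves no convolution.

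\textbf{Mean.} For the self term, $\mathbb{E}\|\mathbf{a}_k\|^2=\sum_j \mathbb{E}[a_{k,j}^2]=D\cdot\frac1D=1$. For each cross term, independence and zero mean give $\mathbb{E}\langle\mathbf{a}_i,\mathbf{a}_k\rangle=\sum_j\mathbb{E}[a_{i,j}]\,\mathbb{E}[a_{k,j}]=0$. Hence $\mathbb{E}[\mathbf{S}\cdot\mathbf{a}_k]=1/\sqrt{n}$ exactly, with no asymptotics required.

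\textbf{Variance.} We first check that the self term and the cross terms are uncorrelated, and that distinct cross terms are uncorrelated with each other. For $i\neq i'$, both different from $k$,
\begin{equation*}
\mathbb{E}\big[\langle\mathbf{a}_i,\mathbf{a}_k\rangle\langle\mathbf{a}_{i'},\mathbf{a}_k\rangle\big]=\sum_{j,l}\mathbb{E}[a_{i,j}]\,\mathbb{E}[a_{i',l}]\,\mathbb{E}[a_{k,j}a_{k,l}]=0,
\end{equation*}
because $\mathbf{a}_i$ and $\mathbf{a}_{i'}$ each contribute a zero-mean factor. The same argument gives $\mathbb{E}\big[\|\mathbf{a}_k\|^2\langle\mathbf{a}_i,\mathbf{a}_k\rangle\big]=0$, since the factor from $\mathbf{a}_i$ has zero mean. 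The variance therefore decomposes into two pieces, which we compute separately.

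\emph{Self term.} Using the fourth-moment condition,
\begin{equation*}
\mathrm{Var}\|\mathbf{a}_k\|^2=\sum_j \mathrm{Var}(a_{k,j}^2)=D\Big(\frac{3}{D^2}-\frac{1}{D^2}\Big)=\frac{2}{D}.
\end{equation*}

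\emph{Each cross term.*} $\mathrm{Var}\langle\mathbf{a}_i,\mathbf{a}_k\rangle=\sum_j\mathbb{E}[a_{i,j}^2]\,\mathbb{E}[a_{k,j}^2]=D\cdot\frac{1}{D^2}=\frac1D$.

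Combining the two pieces, the exact result is
\begin{equation*}
\mathrm{Var}[\mathbf{S}\cdot\mathbf{a}_k]=\frac{1}{n}\Big(\frac{2}{D}+\frac{n-1}{D}\Big)=\frac{n+1}{nD}.
\end{equation*}
This equals $\frac1D\,(1+\frac1n)$, which is $O(1/D)$ and tends to $\frac1D$ as $n$ grows.

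\textbf{Main obstacle.} The statement's ``$\approx\frac1D$'' must be reconciled with the exact value $\frac{n+1}{nD}$. We will state the exact formula and interpret the approximation as holding up to the factor $1+\frac1n\in(1,2]$, i.e.\ $\Theta(1/D)$. The approximation is sharp for moderately large $n$, while the condition $n\ll D$ keeps the noise standard deviation, of order $1/\sqrt{D}$, well below the signal $1/\sqrt{n}$.

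A further point to flag is that if the $\mathbf{a}_i$ are pre-normalized to unit norm, the self term becomes deterministic. In that case the variance is exactly $\frac{n-1}{nD}$, which again is $\approx\frac1D$. We will remark on both conventions so the claim is justified either way.
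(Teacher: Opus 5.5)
Your proof is correct and follows the paper's argument: the same split into the self term $\frac{1}{\sqrt{n}}\|\mathbf{a}_k\|^2$ and the cross terms, the same mean computation, and the same exact variance $\frac{n+1}{nD}$ obtained from the fourth-moment condition. You also make two improvements over the paper: you explicitly check the pairwise uncorrelatedness that the paper uses without comment when it adds variances, and your reading of $\approx\frac{1}{D}$ as holding up to a factor $1+\frac{1}{n}$ is more accurate than the paper's claim that $\frac{2}{nD}$ is negligible because $n\ll D$ (the ratio of that term to $\frac{n-1}{nD}$ is $\frac{2}{n-1}$, which depends only on $n$).
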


\begin{proof}

Let $\mathbf{S} = \frac{1}{\sqrt{n}}\sum_{i=1}^n\mathbf{a}_i$, decompose the dot product and calculate the expectation:
\begin{equation}
\begin{split}
\mathbf{S} \cdot \mathbf{a}_k &= \left( \frac{1}{\sqrt{n}}\sum_{i=1}^n\mathbf{a}_i \right) \cdot \mathbf{a}_k = \frac{1}{\sqrt{n}} \left( \|\mathbf{a}_k\|^2 + \sum_{i \neq k} \mathbf{a}_i \cdot \mathbf{a}_k \right)\\
\end{split}
\end{equation}
By Assumption \ref{ass:model}, $\mathbb{E}[\|\mathbf{a}_k\|^2] = 1$ and the cross terms have zero mean, i.e., $\mathbb{E}[\mathbf{a}_i \cdot \mathbf{a}_k] = 0 \ (i \neq k)$. 
\begin{equation}
\begin{split}
\mathbb{E}[\mathbf{S} \cdot \mathbf{a}_k] &= \frac{1}{\sqrt{n}} \left( \mathbb{E}[\|\mathbf{a}_k\|^2] + \sum_{i \neq k} \mathbb{E}[\mathbf{a}_i \cdot \mathbf{a}_k] \right) 
= \frac{1}{\sqrt{n}} \left( 1 + \sum_{i \neq k} 0 \right) = \frac{1}{\sqrt{n}}. \\
\end{split}
\end{equation}

Let $\mathbf{a}_k = [a_{k1},a_{k2},\dots,a_{kD}]^T$, the step-by-step calculation is as follows:

Expand by the definition of norm and combine with the variance property $\text{Var}(X)=\mathbb{E}[X^2]-\mathbb{E}^2[X]$:
\begin{equation}
\|\mathbf{a}_k\|^2 = \sum_{i=1}^D a_{ki}^2, 
\end{equation}
\begin{equation}
    \text{Var}(\|\mathbf{a}_k\|^2) = \text{Var}\left( \sum_{i=1}^D a_{ki}^2 \right) = \sum_{i=1}^D \left( \mathbb{E}[a_{ki}^4] - \mathbb{E}^2[a_{ki}^2] \right) 
= \sum_{i=1}^D \left( \frac{3}{D^2} - \left( \frac{1}{D} \right)^2 \right) = \frac{2}{D}.
\end{equation}

Since $\mathbf{a}_i$ and $\mathbf{a}_k$ are mutually independent, thus:
$\text{Var}(\mathbf{a}_i \cdot \mathbf{a}_k) = \frac{1}{D}$.

\begin{equation}
\begin{split}
\text{Var}[\mathbf{S} \cdot \mathbf{a}_k] &= \text{Var}\left[ \frac{1}{\sqrt{n}} \left( \|\mathbf{a}_k\|^2 + \sum_{i \neq k} \mathbf{a}_i \cdot \mathbf{a}_k \right) \right] 
= \frac{1}{n} \text{Var}(\|\mathbf{a}_k\|^2) + \frac{1}{n} \text{Var}\left( \sum_{i \neq k} \mathbf{a}_i \cdot \mathbf{a}_k \right) \\
&= \frac{\text{Var}(\|\mathbf{a}_k\|^2)}{n} + \frac{n-1}{n} \text{Var}(\mathbf{a}_i \cdot \mathbf{a}_k) 
= \frac{2}{nD} + \frac{n-1}{nD} = \frac{n+1}{nD} \\
\end{split}
\end{equation}

When $n \ll D$, the term $\frac{2}{nD}$ is negligible, thus:
$\text{Var}[\mathbf{S} \cdot \mathbf{a}_k] \approx \frac{1}{D}.$
\end{proof}

\subsubsection{Binding}
Binding creates a composite representation orthogonal to its constituents.

\begin{theorem}[Binding Orthogonality]
For $D \to \infty$, let $\mathbf{c} = \mathbf{a} \otimes \mathbf{b}$. Then $\mathbb{E}[\mathbf{c} \cdot \mathbf{a}] = 0$ and $\text{Var}[\mathbf{c} \cdot \mathbf{a}] = 2/D$.
\end{theorem}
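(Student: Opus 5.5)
The plan is to compute the statistics directly in the spatial domain, using Assumption~\ref{ass:model} (i.i.d.\ zero-mean components with variance $1/D$ and fourth moment $3/D^2$, with $\mathbf{a}$ and $\mathbf{b}$ independent). Write $\mathbf{c}\cdot\mathbf{a} = \sum_k c_k a_k = \sum_{k}\sum_j a_j b_{(k-j)\bmod D}\, a_k$. Re-index by the lag $m = (k-j)\bmod D$ to obtain
\begin{equation*}
\mathbf{c}\cdot\mathbf{a} = \sum_{m=0}^{D-1} b_m R_m, \qquad R_m \triangleq \sum_{j=0}^{D-1} a_j a_{(j+m)\bmod D},
\end{equation*}
so that $R_m$ is the circular autocorrelation of $\mathbf{a}$ and $R_0=\|\mathbf{a}\|^2$. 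This separates the two independent sources of randomness: the $b_m$ enter linearly, and all dependence on $\mathbf{a}$ sits in the $R_m$.

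\textbf{Mean.} Since $\mathbf{b}$ is independent of $\mathbf{a}$ and $\mathbb{E}[b_m]=0$, each term satisfies $\mathbb{E}[b_m R_m]=\mathbb{E}[b_m]\,\mathbb{E}[R_m]=0$. Hence $\mathbb{E}[\mathbf{c}\cdot\mathbf{a}]=0$ immediately.

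\textbf{Variance.} Because the mean is zero, $\mathrm{Var}[\mathbf{c}\cdot\mathbf{a}]=\mathbb{E}\big[(\sum_m b_m R_m)^2\big]$. Conditioning on $\mathbf{a}$, the cross terms $m\neq m'$ vanish since $\mathbb{E}[b_m b_{m'}]=0$. This leaves $\frac{1}{D}\sum_m \mathbb{E}[R_m^2]$. For $m=0$ I reuse the norm computation from the Bundling Statistics proof: $\mathbb{E}[R_0^2]=\mathrm{Var}(\|\mathbf{a}\|^2)+1 = 1+2/D$. For $m\neq 0$, I expand $\mathbb{E}[R_m^2]=\sum_{j,j'}\mathbb{E}[a_j a_{j+m} a_{j'} a_{j'+m}]$ and keep only index pairings in which every component appears an even number of times. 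The diagonal $j=j'$ always contributes $D\cdot D^{-2}=1/D$. The crossed pairing $j'=j+m$, $j'+m\equiv j$ additionally requires $2m\equiv 0 \pmod D$, so it adds another $1/D$ only for the single lag $m=D/2$ when $D$ is even.

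Summing these contributions gives $\sum_m \mathbb{E}[R_m^2] = (1+2/D) + (D-1)/D + O(1/D) = 2 + O(1/D)$. Therefore $\mathrm{Var}[\mathbf{c}\cdot\mathbf{a}] = \frac{2}{D} + O(D^{-2})$, which is $2/D$ in the stated regime $D\to\infty$. Intuitively, the ``2'' arises as one unit from the zero-lag energy $\|\mathbf{a}\|^2\approx 1$ plus one unit from the aggregate of all nonzero-lag autocorrelations.

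The main obstacle is the careful bookkeeping of fourth-moment index coincidences under the modular indexing: the $m=0$ term, the $m=D/2$ term, and the diagonal terms must be counted without double counting. I would present the result as an asymptotic equality and flag the $O(D^{-2})$ correction. As a cross-check, I would redo the computation in the Fourier domain using Parseval and Assumption~\ref{ass:fourier}. There $\mathbf{c}\cdot\mathbf{a}=\frac1D\sum_k |\hat a_k|^2\hat b_k$ (conjugated appropriately), and its variance is $\frac{1}{D^2}\sum_k \mathbb{E}|\hat a_k|^4\,\mathbb{E}|\hat b_k|^2 = \frac{1}{D^2}\cdot D\cdot 2 = 2/D$, using $\mathbb{E}|z_k|^4=2$. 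This gives a cleaner confirmation of the constant.
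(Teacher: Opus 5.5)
Your proof is correct, but your main line differs from the paper's. The paper stays in the frequency domain throughout. Via Parseval it writes $\mathbf{c}\cdot\mathbf{a}=\frac{1}{D}\sum_k|\mathcal{F}(\mathbf{a})_k|^2\mathcal{F}(\mathbf{b})_k$ and reads off the mean from $\mathbb{E}[\mathcal{F}(\mathbf{b})_k]=0$. It gets the variance by treating the $D$ summands as independent, each with second moment $\mathbb{E}|\mathcal{F}(\mathbf{a})_k|^4\,\mathbb{E}|\mathcal{F}(\mathbf{b})_k|^2=2\cdot 1$, which is essentially your cross-check. That route is short and mirrors the unbinding proof. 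However, it leans on Assumption~\ref{ass:fourier} (CLT Gaussianity and approximate independence of DFT coefficients), and it passes over the conjugate pairing $k\leftrightarrow D-k$; it even calls each summand real. Your lag decomposition $\mathbf{c}\cdot\mathbf{a}=\sum_m b_m R_m$ needs only Assumption~\ref{ass:model} and no CLT, and it is exact at finite $D$. Your count gives $\mathrm{Var}[\mathbf{c}\cdot\mathbf{a}]=2/D+1/D^2$ for odd $D$ and $2/D+2/D^2$ for even $D$. This is also what the Fourier computation yields once the real frequencies $k=0$ (and $k=D/2$) are treated separately, since there $\mathbb{E}|\mathcal{F}(\mathbf{a})_k|^4=3$ rather than $2$. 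Your computation thereby confirms two things the paper leaves implicit. First, the stated $2/D$ is only the leading term. Second, unlike in Theorem~\ref{thm:unbind}, Hermitian symmetry does not double the variance here. With $X_k=|\mathcal{F}(\mathbf{a})_k|^2\mathcal{F}(\mathbf{b})_k$, the paired term is $\mathbb{E}[X_k\overline{X_{D-k}}]=\mathbb{E}|\mathcal{F}(\mathbf{a})_k|^4\,\mathbb{E}[\mathcal{F}(\mathbf{b})_k^2]$, which vanishes for $k\neq 0,D/2$. The cost of your route is the index-coincidence bookkeeping at lags $0$ and $D/2$, which you identify and handle correctly.
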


\begin{proof}
In the Fourier domain, $\mathcal{F}(\mathbf{c})_k = \mathcal{F}(\mathbf{a})_k \mathcal{F}(\mathbf{b})_k$. By Parseval's theorem:
\begin{equation}
    \mathbf{c} \cdot \mathbf{a} = \frac{1}{D} \sum_{k=0}^{D-1} |\mathcal{F}(\mathbf{a})_k|^2 \mathcal{F}(\mathbf{b})_k
\end{equation}
Since $\mathbf{a},\mathbf{b}$ are real-valued, $\mathcal{F}(\mathbf{v})_{D-k} = \overline{\mathcal{F}(\mathbf{v})_k}$, ensuring the sum is real. By Assumption \ref{ass:fourier}, $\mathbb{E}[\mathcal{F}(\mathbf{b})_k]=0$ and vectors are independent, so $\mathbb{E}[\mathbf{c} \cdot \mathbf{a}] = 0$.

For variance, let $X_k = |\mathcal{F}(\mathbf{a})_k|^2 \mathcal{F}(\mathbf{b})_k$ (real-valued due to conjugate symmetry). $\text{Var}[X_k] = \mathbb{E}[|\mathcal{F}(\mathbf{a})_k|^4]\mathbb{E}[|\mathcal{F}(\mathbf{b})_k|^2] = 2 \cdot 1 = 2$. Summing over $D$ independent components:
\begin{equation}
    \text{Var}[\mathbf{c} \cdot \mathbf{a}] = \frac{1}{D^2} \sum_{k=0}^{D-1}  \text{Var}[X_k] = \frac{1}{D^2} \cdot 2D = \frac{2}{D}.
\end{equation}
\end{proof}

\subsubsection{Unbinding}
Unbinding retrieves the original vector from the bound pair using the approximate inverse.

\begin{theorem}[Unbinding Recovery and SNR]\label{thm:unbind}
For $D \to \infty$, let $\mathbf{a}' = (\mathbf{a} \otimes \mathbf{b}) \circledast \mathbf{b}$. The retrieval similarity follows:
\begin{equation}
    \mathbb{E}[\mathbf{a}' \cdot \mathbf{a}] = 1, \quad Var[\mathbf{a}' \cdot \mathbf{a}] = \frac{6}{D}.
\end{equation}
Consequently, the Signal-to-Noise Ratio (SNR) scales as $\mathcal{O}(D)$.
\end{theorem}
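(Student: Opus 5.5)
The plan is to work entirely in the Fourier domain, following the proof of the Binding Orthogonality theorem. By Definition~\ref{def:binding} and Definition~\ref{def:unbinding}, the retrieved vector satisfies
\begin{equation*}
\mathcal{F}(\mathbf{a}')_k = \mathcal{F}(\mathbf{a})_k\,\mathcal{F}(\mathbf{b})_k\,\overline{\mathcal{F}(\mathbf{b})_k} = \mathcal{F}(\mathbf{a})_k\,|\mathcal{F}(\mathbf{b})_k|^2 .
\end{equation*}
Parseval's identity from the preliminaries then gives
\begin{equation*}
\mathbf{a}'\cdot\mathbf{a} = \frac{1}{D}\sum_{k=0}^{D-1} |\mathcal{F}(\mathbf{a})_k|^2\,|\mathcal{F}(\mathbf{b})_k|^2 .
\end{equation*}
Each summand $Y_k = |z^{a}_k|^2|z^{b}_k|^2$ is real and nonnegative. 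This makes explicit that the residual noise comes only from the fluctuation of $|\mathcal{F}(\mathbf{b})_k|^2$ and $|\mathcal{F}(\mathbf{a})_k|^2$ around their common mean $1$.

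For the mean, I use the independence of $\mathbf{a}$ and $\mathbf{b}$ (Assumption~\ref{ass:model}) together with the second-moment statement of Assumption~\ref{ass:fourier}. These give $\mathbb{E}[Y_k]=\mathbb{E}|z^a_k|^2\,\mathbb{E}|z^b_k|^2=1$, so $\mathbb{E}[\mathbf{a}'\cdot\mathbf{a}]=\frac{1}{D}\cdot D = 1$.

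For a single term's variance, the fourth-moment statement $\mathbb{E}|z_k|^4=2$ gives $\mathbb{E}[Y_k^2]=2\cdot 2=4$, hence $\mathrm{Var}[Y_k]=4-1=3$.

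The main obstacle is how the $D$ terms combine. They are not independent, because real inputs impose $\mathcal{F}(\mathbf{v})_{D-k}=\overline{\mathcal{F}(\mathbf{v})_k}$, so $Y_{D-k}=Y_k$ exactly. Treating the terms as independent would give only $3/D$. Instead, I will group the sum into roughly $D/2$ conjugate pairs $\{k,D-k\}$, each contributing $2Y_k$ with variance $4\cdot 3=12$. Distinct pairs are effectively independent by Assumption~\ref{ass:fourier}. The variance is then
\begin{equation*}
\mathrm{Var}[\mathbf{a}'\cdot\mathbf{a}] \approx \frac{1}{D^2}\cdot\frac{D}{2}\cdot 12 = \frac{6}{D}.
\end{equation*}
The self-conjugate bins $k=0$ (and $k=D/2$ for even $D$) contribute $O(1/D^2)$ and can be dropped as $D\to\infty$. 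I will remark that this pairing argument, applied consistently, is what yields the stated constant $6$. The same doubling is what the Binding Orthogonality theorem implicitly absorbs when it asserts that its summands $X_k$ are real.

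Finally, the SNR statement follows from $\mathrm{SNR}=\mathbb{E}[\mathbf{a}'\cdot\mathbf{a}]^2/\mathrm{Var}[\mathbf{a}'\cdot\mathbf{a}] = D/6 = \mathcal{O}(D)$, measured in power. Equivalently, the standard deviation of the retrieval similarity decays as $\sqrt{6/D}$. This also justifies Property~3, $\delta(HV_A\otimes HV_B\circledast HV_B,HV_A)\gg 0$, for large $D$.
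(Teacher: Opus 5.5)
Your proposal is correct and follows essentially the same route as the paper. Both diagonalize in the Fourier domain, obtain $\mathbb{E}[Z_k]=1$ and $\mathrm{Var}[Z_k]=3$, and then use the Hermitian symmetry $Z_{D-k}=Z_k$ to double the naive $3/D$ to $6/D$. Your grouping into about $D/2$ conjugate pairs of variance $12$ is the same bookkeeping as the paper's $\sum_k(\mathrm{Var}[Z_k]+\mathrm{Cov}[Z_k,Z_{D-k}])$, and your separate handling of the self-conjugate bins adds a small bit of care the paper omits.
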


\begin{proof}
Since the unbinding operation is diagonalized in the Fourier domain, we analyze the statistics of the frequency components.
While real-valued vectors impose Hermitian symmetry ($\mathcal{F}(\mathbf{a})_k = \overline{\mathcal{F}(\mathbf{a})_{D-k}}$), we treat the components as independent for asymptotic analysis. This simplifies the derivation without affecting the order of magnitude of the SNR.
The unbinding operation in the Fourier domain is $\mathcal{F}(\mathbf{a}')_k = \mathcal{F}(\mathbf{a})_k |\mathcal{F}(\mathbf{b})_k|^2$. Let $Z_k = |\mathcal{F}(\mathbf{a})_k|^2 |\mathcal{F}(\mathbf{b})_k|^2$. The similarity is the sample mean of $Z_k$:
\begin{equation}
    \mathbf{a}' \cdot \mathbf{a} = \frac{1}{D} \sum_{k=0}^{D-1} Z_k.
\end{equation}

\noindent \textbf{Expectation:}
Using the independence of $\mathbf{a}$ and $\mathbf{b}$:
\begin{equation}
    \mathbb{E}[Z_k] = \mathbb{E}[|\mathcal{F}(\mathbf{a})_k|^2] \mathbb{E}[|\mathcal{F}(\mathbf{b})_k|^2] = 1 \cdot 1 = 1 \implies \mathbb{E}[\mathbf{a}' \cdot \mathbf{a}] = 1
\end{equation}

\noindent \textbf{Variance:}
First, we compute the variance of a single frequency component $Z_k = |\mathcal{F}(\mathbf{a})_k|^2 |\mathcal{F}(\mathbf{b})_k|^2$. Since the elements of $\mathbf{a}, \mathbf{b}$ are i.i.d. Gaussian, the frequency components converge to a complex Gaussian distribution (CLT). Using the fourth-moment property $\mathbb{E}[|z|^4]=2$ for standard complex Gaussian variables:
\begin{equation}
    \text{Var}[Z_k] = \mathbb{E}[Z_k^2] - (\mathbb{E}[Z_k])^2 = \mathbb{E}[|\mathcal{F}(\mathbf{a})_k|^4] \mathbb{E}[|\mathcal{F}(\mathbf{b})_k|^4] - 1^2 = (2 \cdot 2) - 1 = 3. 
\end{equation}
For the frequency components, the variance is formulated as:
\begin{equation}
        \text{Var}[\mathbf{a}' \cdot \mathbf{a}] = \text{Var}\left[\frac{1}{D}\sum_{k=0}^{D-1} Z_k\right] = \frac{1}{D^2} \cdot 3D = \frac{3}{D}. 
\end{equation}

For real-valued vectors, the Fourier spectrum satisfies Hermitian symmetry: $\mathcal{F}(\mathbf{x})_k = \overline{\mathcal{F}(\mathbf{x})}_{D-k}$, which implies $Z_k = Z_{D-k}$. Consequently, the components in the sum are pairwise correlated. The variance of the mean is:
\begin{equation}
    \text{Var}[\mathbf{a}' \cdot \mathbf{a}] = \text{Var}\left[\frac{1}{D}\sum_{k=0}^{D-1} Z_k\right] = \frac{1}{D^2} \sum_{k=0}^{D-1} \sum_{j=0}^{D-1} \text{Cov}[Z_k, Z_j]
\end{equation}
Due to the independence of distinct frequencies, $\text{Cov}[Z_k, Z_j] = 0$ for $j \notin \{k, D-k\}$. The non-zero covariance terms are the variance itself ($j=k$) and the symmetry term ($j=D-k$). Since $Z_k = Z_{D-k}$, it results in:
\begin{equation}
    \text{Cov}[Z_k, Z_{D-k}] = \text{Var}[Z_k] = 3
\end{equation}
Thus, for each $k$, the inner sum contributes $2 \times \text{Var}[Z_k]$:
\begin{equation}
    \text{Var}[\mathbf{a}' \cdot \mathbf{a}] \approx \frac{1}{D^2} \sum_{k=0}^{D-1} (\text{Var}[Z_k] + \text{Cov}[Z_k, Z_{D-k}]) = \frac{1}{D^2} \sum_{k=0}^{D-1} (3 + 3) = \frac{6}{D}
\end{equation}

\noindent \textbf{SNR:}
The SNR is defined as the ratio of the squared signal expectation to the noise variance:
\begin{equation}
    \text{SNR} = \frac{(\mathbb{E}[\mathbf{a}' \cdot \mathbf{a}])^2}{\text{Var}[\mathbf{a}' \cdot \mathbf{a}]} = \frac{1}{6/D} = \frac{D}{6}
\end{equation}
\end{proof}

\begin{remark}
A critical feature of HRR is the $\mathcal{O}(D)$ scaling of SNR: as the dimension $D$ increases, SNR grows linearly.
It implies the multiplicative noise introduced by the approximate inverse is \textit{effectively suppressed in high-dimensional spaces}. This enables reliable symbol retrieval without exact inverse computation.

By Chebyshev's inequality, for any $\epsilon > 0$,
$$\mathbb{P}\left( \vert \mathbf{a}' \cdot \mathbf{a} - 1 \vert \geq \epsilon \right) \leq \frac{\text{Var}[\mathbf{a}' \cdot \mathbf{a}]}{\epsilon^2} = \frac{6}{D\epsilon^2}.$$
As $D \to \infty$, the right tends to 0, so $\mathbf{a}' \cdot \mathbf{a} \xrightarrow{p} 1$ (converges in probability to 1), confirming the \textit{asymptotic retrieval accuracy} in high dimensions.
\end{remark}

 \clearpage
\section{Conversion Fidelity Analysis on Message Length}
\label{Interpretability-Analysis}
In this section, we investigate the impact of message length on the accuracy of semantic transmission. 
We dynamically adjust the message length $\mathcal{H}$ in Herm and perform message decoding for identical scenarios, as visualized in Fig.\ref{fig:smac_explain} (SMAC) and Fig.\ref{fig:explain_h} (LBF).
The red boxes indicate original semantics, while the black boxes denote erroneously decoded semantics deviating from the originals. The results show that when $\mathcal{H}$ is small, the decoded outputs contain significant errors. Furthermore, similarity metrics exhibit random patterns without discernible trends, contradicting the value dictionary's design principle. We attribute these phenomena to severe degradation of vector orthogonality, which violates the pseudo-inverse requirement for unbinding in hyperdimensional computing. As $\mathcal{H}$ increases, the error rate drops markedly, with a near-perfect decoding achieved at $\mathcal{H}$=512.
Meanwhile, similarity scores within the same value dictionary exhibit a distinct peak phenomenon, validating the rationality of the encoding-decoding process.

\begin{figure*}[!htb]
     \centering
     \includegraphics[width=0.95\textwidth]{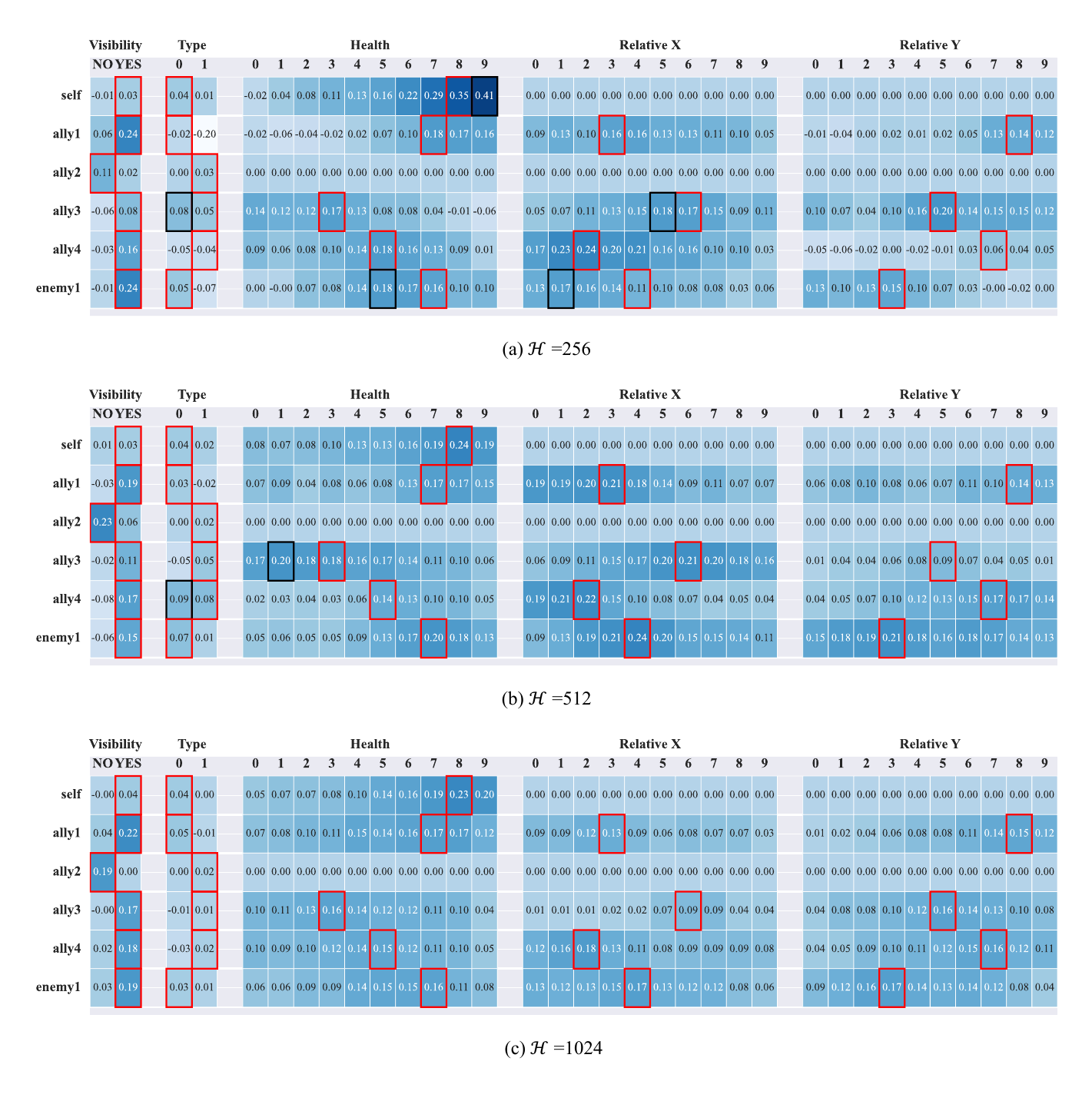}
     \caption{Conversion Fidelity in SMAC with Different Message Length.}
     \label{fig:smac_explain}
\end{figure*}
\clearpage
\begin{figure*}[!htb]
     \centering
     \includegraphics[width=0.9\textwidth]{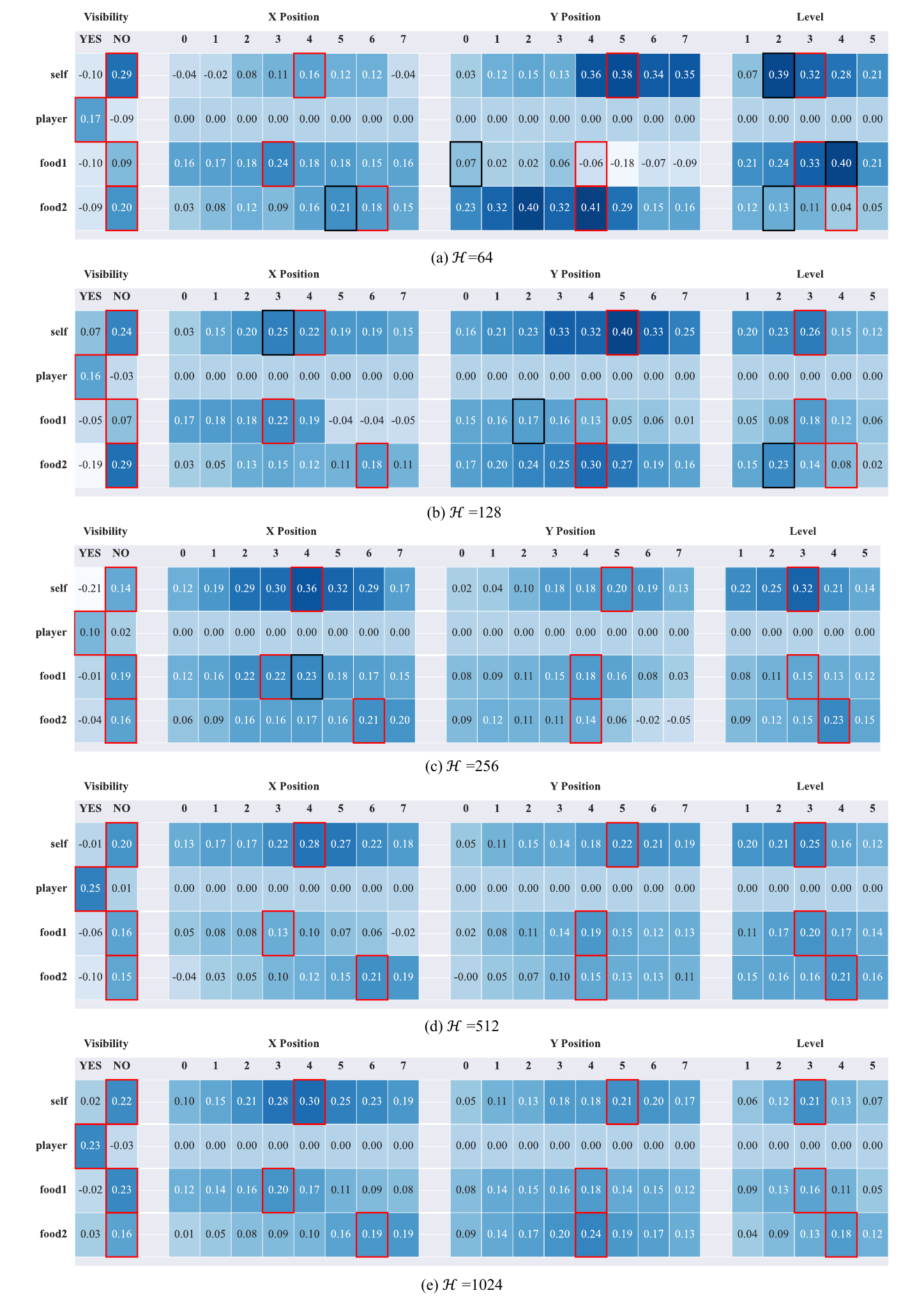}
     \caption{Conversion Fidelity in LBF with Different Message Length.}
     \label{fig:explain_h}
\end{figure*}

\begin{figure}[h]
    \centering
    \includegraphics[width=0.9\linewidth]{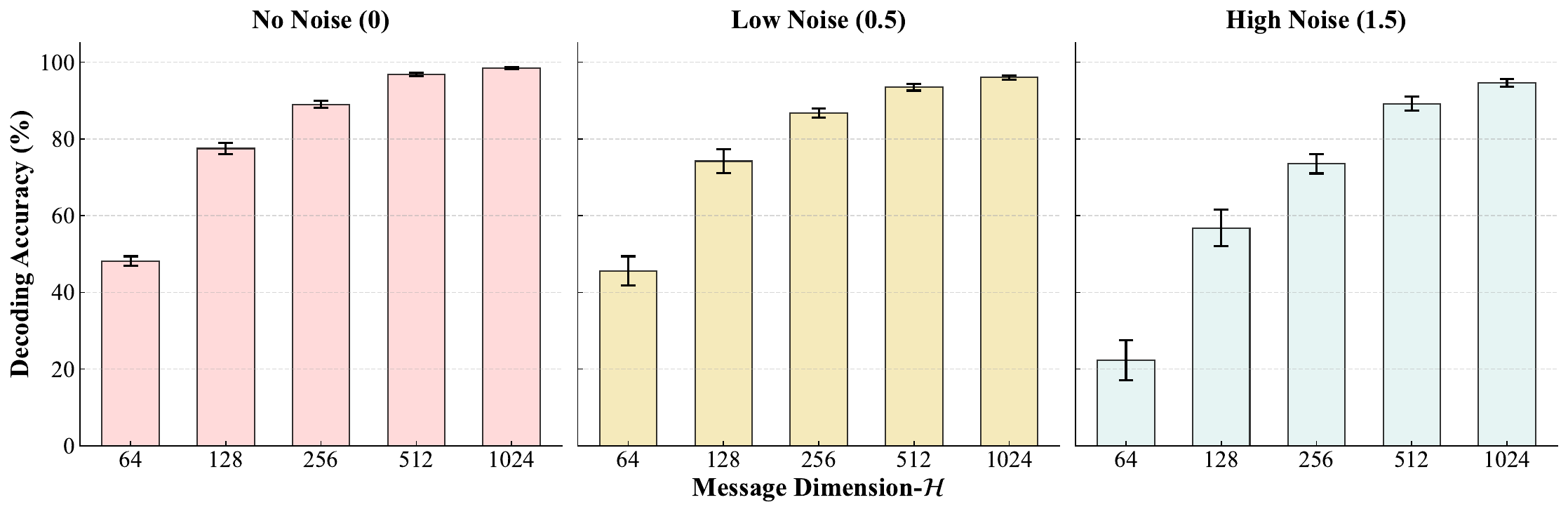}
    \caption{Decoding Accuracy in SMAC with Different Message Length.}
    \label{fig:dec-acc}
\end{figure}
Fig.\ref{fig:dec-acc} investigates the semantic discrepancy between decoded hyperdimensional symbolic messages (HSM) and original semantics under noisy settings ($\eta_r$).
Here, we perform iterative decoding on one hundred messages and compare the transformed semantics with the original semantics.
Results show that message dimension $\mathcal{H}$ regulates the semantic discrepancy: increased decoding accuracy and decreased variance occur with the rise of message dimension.
This indicates that the conversion fidelity of HSM is significantly enhanced with the increase in dimension. Furthermore, message dimension also significantly affects decoding robustness—when the dimension exceeds 256, both the mean accuracy and variance remain stable.

\section{Message Similarity Analysis under Noise}
\label{Similarity-Analysis}
We conduct a similarity evaluation on the observation component of communication messages in SMAC, with results illustrated in Fig.\ref{fig:smac_attention} (b). The findings indicate that the observation encoding of the rightmost stalker exhibits the highest similarity to that of the middle stalker (excluding itself). This aligns with their actual scenario characteristics in Fig.\ref{fig:smac_attention} (a): they belong to the same race and possess similar relative positions with respect to other entities. This phenomenon reveals that hyperdimensional vectors can still effectively preserve semantic similarity even after undergoing multiple bundling and binding. Furthermore, under different noise conditions, the correspondence between semantic similarity and encoding similarity remains intact, as shown in Fig.\ref{fig:smac_attention} (c-d). This further illustrates the robustness of the proposed hyperdimensional message encoding in this paper.
\begin{figure*}[!htb]
     \centering
     \includegraphics[width=0.9\textwidth]{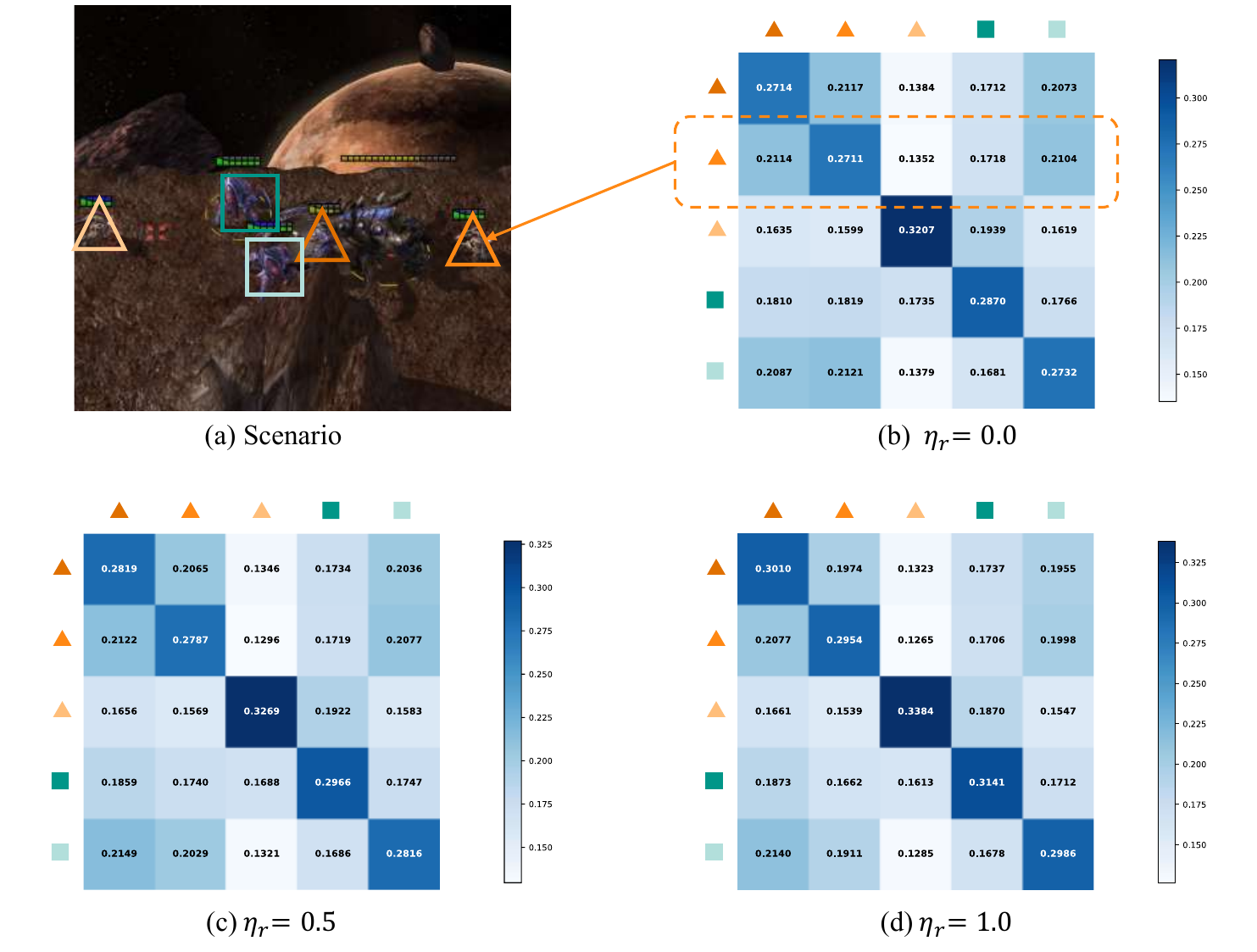}
     \caption{Observation Similarity Visualization in SMAC. Triangles denote ``stalker'' and squares denote ``zealot''. Subfigures b, c, and d present the observation encoding similarity matrices under different noise conditions.}
     \label{fig:smac_attention}
\end{figure*}

\section{Similarity-Preservation Verification for Ego-Binding}
\label{Similarity-Preservation}
Fig. \ref{fig:similarity_matrix} compares the cosine similarity matrices of agent messages before and after ego-binding. The two matrices show nearly identical global patterns and value distributions, with perfect preservation of mutual similarity and no semantic inversion observed.
While ego-binding leads to a rise in the absolute similarity values, the relative similarity relations across agents are preserved. This indicates that semantic alignment is consistent with weight calculation and incurs no conflict.
 \begin{figure*}[!htb]
     \centering
     \includegraphics[width=0.9\textwidth]{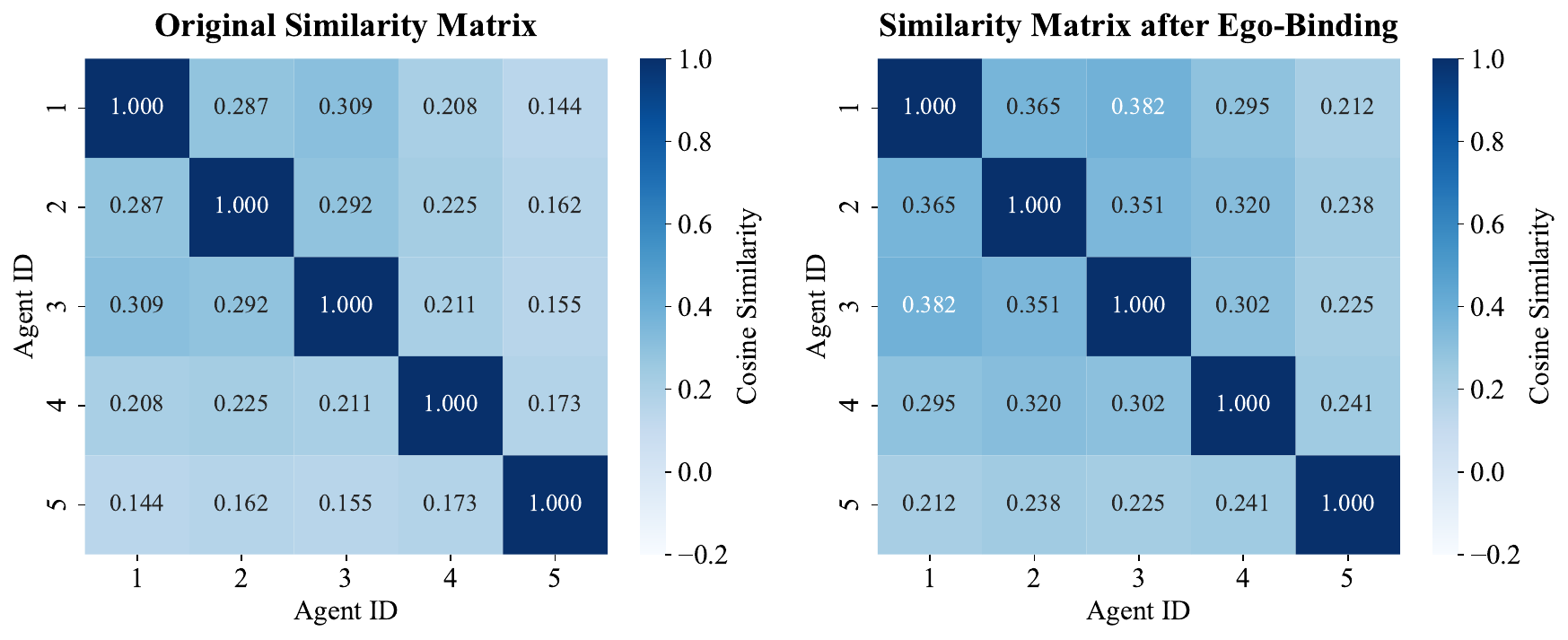}
     \caption{Analysis of the Ego-Binding on Similarity. }
     \label{fig:similarity_matrix}
\end{figure*}

\section{Hyperdimensional Encoding Process}
In this section, we elaborate on the detailed procedures of hyperdimensional message encoding in each experimental environment, including attribute partitioning, value constraints, and computational operations.
\subsection{SMAC / SMACv2}
Despite variations in environmental layouts and observational capacities, SMAC and SMACv2 maintain identical fundamental observation and action structures at their core.
\subsubsection{Attribute Dictionary.}
We posit mutual independence among attributes and construct the encoding dictionary through stochastic sampling.
The observation is formally defined as an attribute dictionary. Tab \ref{tab:attributes} introduces the detailed design. Notably, the attribute design here is merely intended for human readability. For agents, it only performs dimension-wise encoding over raw observations.
\begin{table}[htbp]
\centering
\footnotesize
\caption{Attribute Properties Specification in SMAC}
\label{tab:attributes}
\begin{tabular}{lcccc}
\toprule
\textbf{Attribute} & \textbf{Explanation} & \textbf{Independence} & \textbf{Discrete} & \textbf{Value Range} \\
\midrule
Agent ID &  Unique Identifier  & Independent    & Yes               & $\{1,\cdots,N_{entity}\}$ \\
HP &  Current Health Value    & Associated      & No                & $[0, 1]$ \\
Relative X & Normalized X-position Relative to Agent & Associated  & No            & $(-1, 1)$ \\
Relative Y & Normalized Y-position Relative to Agent & Associated  & No            & $(-1, 1)$ \\
Unit Type & Race of the Unit  & Independent & Yes             & $\{1,\cdots,N_{type}\}$ \\
Visibility & Binary Visibility Indicator  & Independent           & Yes            & $\{0,1\}$ \\
Identity & Indicator Whether the Unit is Self, Ally, or Enemy  & Independent    & Yes & $\{0,1,2\}$ \\
Time & Time Window Index & Associated & No  & $\{-1,\cdots,-w\}$ \\
\bottomrule
\end{tabular}
\end{table}

The action encoding comprises all executable actions: NOTHING, STAY, MOVE\_UP, MOVE\_DOWN, MOVE\_LEFT, MOVE\_RIGHT, ATTACK, HEAL.
\subsubsection{Value Dictionary.}
Different attributes correspond to distinct value types, such as independent/associated and discrete/continuous. We provide detailed specifications for each attribute in Table \ref{tab:attributes}. For independent values, we employ random sampling. For continuous variables, we perform interval discretization followed by cyclic replacement encoding, typically using 10 bins.

\subsubsection{Encoding Process.} The observation encoding follows an entity-wise paradigm, where self features, ally features, and enemy features are separately encoded before being bundled into a composite hyperdimensional vector. Here, $HV_{attr}$ denotes the result after binding attribute-value pairs.
\begin{itemize}
    \item For self-encoding, we bind the agent's ID, Identity, HP, and Type. The process is formulated as $HV_{id} \otimes HV_{self} \otimes (HV_{hp} \oplus HV_{type})$. A zero vector is returned if the current agent is terminated.
    \item For ally/enemy encoding, we first perform visibility verification before feature extraction.
    If invisible, the process is defined as $HV_{id} \otimes HV_{ally/enemy} \otimes HV_{invisible}$. If visible, we bundle its static attributes and perform identity binding, formulated as $HV_{id} \otimes HV_{ally/enemy} \otimes (HV_{visible} \oplus HV_{hp} \oplus HV_{x} \oplus HV_{y} \oplus HV_{type})$.
\end{itemize}
For action sequence encoding, we perform temporal binding of actions within the window and then bundle all features, which is denoted as $(HV_{-1} \otimes HV_{a_{-1}}) \oplus \cdots \oplus (HV_{-w} \otimes HV_{a_{-w}})$

\subsection{Lbforaging (LBF)}
Due to the similarity in the encoding process, we only present the attribute dictionary and value dictionary (Table \ref{tab:attributes:LBF}) here, while the encoding procedure follows the aforementioned section.

\begin{table}[htbp]
\centering
\footnotesize
\caption{Attribute Properties Specification in LBF}
\label{tab:attributes:LBF}
\begin{tabular}{lcccc}
\toprule
\textbf{Attribute} & \textbf{Explanation} & \textbf{Independence} & \textbf{Discrete} & \textbf{Value Range} \\
\midrule
Agent ID  &   Unique Identifier        & Independent          & Yes               & $\{1,\cdots,N_{entity}\}$ \\
Absolute X  & Absolute X-position in the Map     & Associated           & Yes                & $\{0,\cdots,X\}$ \\
Absolute Y & Absolute Y-position in the Map       & Associated           & Yes                & $\{0,\cdots,Y\}$ \\
Level  &  Capture Ability (Predator) / Escape Ability (Prey)       & Independent          & Yes               & $\{1,\cdots,Level\}$ \\
Visibility & Binary Visibility Indicator        & Independent           & Yes               & $\{0,1\}$ \\
Identity & Indicator Whether the Unit is Self, Predator, or Prey      & Independent          & Yes                & $\{0,1,2\}$ \\
Time & Time Window Index & Associated           & No  & $\{-1,\cdots,-w\}$ \\
\bottomrule
\end{tabular}
\end{table}

The action encoding comprises all executable actions: NOTHING, STAY, MOVE\_UP, MOVE\_DOWN, MOVE\_LEFT, MOVE\_RIGHT, CAPTURE.

\subsection{Hallway}
Due to the similarity in the encoding process, we only present the attribute dictionary and value dictionary (Table \ref{tab:attributes:hallway}) here, while the encoding procedure follows the aforementioned section.

\begin{table}[htbp]
\centering
\footnotesize
\caption{Attribute Properties Specification in Hallway}
\label{tab:attributes:hallway}
\begin{tabular}{lcccc}
\toprule
\textbf{Attribute} & \textbf{Explanation} & \textbf{Independence} & \textbf{Discrete} & \textbf{Value Range} \\
\midrule
Agent ID &  Unique Identifier         & Independent          & Yes               & $\{1,\cdots,N_{agent}\}$ \\
Position & Position in its Line      & Associated           & Yes                & $\{0,\cdots,X\}$ \\
Time & Time Window Index & Associated           & No  & $\{-1,\cdots,-w\}$ \\
\bottomrule
\end{tabular}
\end{table}

The action encoding comprises all executable actions: STAY, MOVE\_LEFT, MOVE\_RIGHT. 

\section{Collaboration with Large Language Models}
\label{llm+comm}
Hyperdimensional Symbolic Message (HSM) introduced in our paper constructs a bidirectional conversion pipeline that links (1) message encoding, (2) symbolic semantics, and (3) natural language. 
In this framework, message encoding enables efficient agent-to-agent communication and decision-making processes. Natural language supports interpretability for both human operators and large language models (LLMs). Symbolic semantics then acts as a core bridging medium that connects these two functionally distinct components.
Such explicit conversion fidelity eliminates the algorithm’s reliance on foundation model embedding spaces, while ensuring compatibility with various LLMs.
Next, we detail the communication and collaborative decision-making processes between agents and LLMs.

\subsection{Message Transformation}
\begin{figure}[ht]
    \centering
    \begin{lstlisting}[style=promptstyle]
I'm Agent 2, my type is 1, and my HP is within 0.40-0.50. 
I can not see Ally_0.
I can see Ally_1, its type is 0, its HP is within 0.40~0.50, its relative position x to me is within -0.80~-0.60, and its relative position y to me is within -0.80~-0.60.
I can see Ally_3, its type is 0, its HP is within 0.40~0.50, its relative position x to me is within -1.00~-0.80, and its relative position y to me is within -0.60~-0.40. 
I can see Ally_4, its type is 1, its HP is within 0.40~0.50, its relative position x to me is within -0.60~-0.40, and its relative position y to me is within -1.00~-0.80.
I can not see Enemy_0.
My previous actions contain ['Move West', 'Move North'].
    \end{lstlisting}
    \caption{Example of Decoded Message in SMAC}
    \label{fig:llm-message}
    \end{figure}
For agents, the messages during communication are actually hyperdimensional vectors.
Through the proposed iterative decoding method (Section \ref{section-decode}), these hyperdimensional vectors are decoded into a series of structured semantic entities, as depicted in Fig.\ref{fig:smac_explain} and Fig.\ref{fig:explain_h}. 
The semantic entities are expressed as attribute-value pairings, which can then be aggregated and mapped to natural language.
We present the form of messages following decoding and programmatic transformation in Fig.\ref{fig:llm-message}, which contains entity-wise information.

\subsection{Decision Making Process}
In the team, a single agent is designated as an L-Agent, which is controlled by LLMs; all remaining agents are defined as A-Agents that operate under policies. The joint action of both types of agents is transmitted and executed in the environment. 

\textbf{A-Agent $\rightarrow$ A-Agent}.
During each decision time step, every A-Agent encodes its local perceptual data and historical action sequences into hyperdimensional symbolic messages (HSMs) for inter-agent communication and coordination. 

\textbf{A-Agent $\rightarrow$ L-Agent}.
Messages from A-Agents are uniformly translated into natural language, aggregated into a structured report, and forwarded to the L-Agent for further processing.
The L-Agent then selects the current decision from the set of executable actions, guided by the reasoning process of LLMs.

\textbf{L-Agent $\rightarrow$ A-Agent}.
The L-Agent is also required to convey its local observations and historical action sequences to peer agents.
 Given that A-Agents are designed only to accept HSMs, the L-Agent utilizes the same hyperdimensional encoding module as A-Agents.
 The encoded messages are subsequently delivered to A-Agents to facilitate their decision-making processes.

We design the prompt to guide LLM in agent collaboration as Fig.\ref{fig:llm-prompt}.
\begin{figure}[h]
    \centering
    \begin{lstlisting}[style=promptstyle]
Core Role & Positioning:
We are playing StarCraft II micro scenario "2s3z_vs_1ul", trying to control our agents to defeat all of the enemy units. 
You are the last member of the multi-agent team, numbered Agent-4, collaborating with 4 algorithm-controlled agents (numbered Agent-0, Agent-1, ..., Agent-3) to complete tasks.
You must make decisions strictly based on messages sent by other agents, with no access to additional global information.

Input Format:
In each step, you will perceive your own observation with natural language and receive structured messages from all agents in the team.
Each message contains the following fields: [Agent-ID] | [Current Observation] | [Previous Actions Sequence].
Furthermore, you will receive your current available action list. Action index is listed as: {
0: No-op only available for dead agents; 1: Stop; 2: Move North; 3: Move South; 4: Move East; 5: Move West; 6 to 6+n_enemy: Attack the enemy with a certain index (index - 6)}

Decision Rules & Collaboration Principles:
1. Team Priority First: Decisions must aim for global task victory, prioritizing coordination with teammates' tactical tendencies (e.g., if teammates focus fire on a specific enemy unit, coordinate the attack; if teammates retreat, provide cover).
2. Observation-Driven Decision Making: Judge strictly based on input message data; Do not assume unobserved information (e.g., enemy units outside vision, status of agents that have not sent messages).
3. Action Effectiveness Constraints: Only select actions from your own list of executable actions; do not perform operations beyond your capabilities.
4. Historical Action Reference: Adjust strategies based on the effect of teammates' previous K steps of actions (e.g., if teammates continuously attack an enemy unit but fail to destroy it, evaluate whether support is needed; if teammates' movement paths show advancement towards the target area, follow up).

Output Format Requirements:
Output only contains SMAC directly executable action commands in the above action list, and the reason why you made the choice.
{
                "action": (action index),
                "reason": (why you do this action)
}

Input & Output Examples: ......

Note:
1. Your answer must follow the given format without any extra output.
2. You must select actions only from the provided list of available actions; actions outside this list are prohibited.
\end{lstlisting}
    \caption{Example of LLM Prompt for Multi-Agent Communication}
    \label{fig:llm-prompt}
\end{figure}

\subsection{Experiments}
We evaluate the performance of the algorithm-LLM hybrid decision-making mechanism on the SMAC benchmark, which comprises N-1 algorithm-controlled agents (A-Agents) and one LLM-controlled agent (L-Agent).
Given the overhead of real-time interaction, we do not incorporate LLMs into the training of our method, Herm, and only replace one team member with an L-Agent during testing.

We present the LLM’s reasoning process during the decision-making phase, as illustrated in Fig.\ref{fig:llm-message-output}. It can be observed that the LLM is able to perform reasoning based on the decoded and processed natural language messages and make corresponding decisions.

\begin{figure}[ht]
    \centering

    \begin{lstlisting}[style=promptstyle]
No enemy is visible to me or any ally based on the messages. 
I can see Ally_0 to my northwest (relative position x -1.00~-0.80, y -0.60~-0.40). 
Agent 0's message indicates they see me to their east (relative x -0.20~0.00, y -0.60~-0.40), so I am west of them. 
To regroup and explore actively (as per principle 5), moving East (action 4) will bring me closer to Ally_0 and potentially towards the area where other allies might be.
 Because Agent 0 sees Ally_1 to their northwest, this choice aligns with team coordination and active participation.
    \end{lstlisting}
        \caption{Example of LLM Reasoning}
    \label{fig:llm-message-output}
    \end{figure}

Tab.\ref{tab:llm} presents the hybrid decision-making performance of the algorithm with different LLMs.
On the one hand, Herm maintains a certain level of performance when integrated with the three LLMs, which validates the generalizability of the proposed hyperdimensional symbolic messages in hybrid decision-making.

On the other hand, notable performance discrepancies are observed in collaboration with different LLMs, a phenomenon that can be attributed to the intrinsic decision-making competence of LLMs.
In terms of the dimensionality of the hyperdimensional space $\mathcal{H}$, performance exhibits an upward trend with increasing dimensionality. 
This is because a larger hyperdimensional space facilitates accurate semantic transformation. 
In addition, the upward trend gradually slows down under high-dimensional settings: minor transformation errors are corrected by other message sources during LLM decision-making, leading to a reduced impact on decision outcomes.
\begin{table}[h]
    \centering
    \caption{Performance of Algorithm-LLM Hybrid Decision Making}
    \begin{tabular}{cccccc}
    \toprule
       $\mathcal{H}$ & Task & Herm & DeepSeek-V3.2 & GLM-4-7 & Doubao-Seed-1.6\\
    \midrule
      \multirow{3}{*}{256}& 5z\_vs\_1ul & 79.0$\pm$0.2 & 68.4$\pm$3.7 & 31.6$\pm$3.6 & 53.8$\pm$3.5\\
      & 2s3z\_vs\_1ul  & 81.9$\pm$0.6 & 64.3$\pm$3.3 & 35.6$\pm$4.8 & 60.5$\pm$3.1\\
      & 1c4z\_vs\_1ul  & 75.1$\pm$1.5 & 52.6$\pm$3.4 & 28.5$\pm$3.3 & 44.6$\pm$4.2\\
        \midrule
      \multirow{3}{*}{512}& 5z\_vs\_1ul & 80.3$\pm$0.8 & 73.4$\pm$2.2 & 34.3$\pm$1.5 & 58.0$\pm$2.6\\
      & 2s3z\_vs\_1ul  & 80.6$\pm$0.5 & 66.8$\pm$3.7 & 37.2$\pm$3.4 & 59.8$\pm$2.8\\
      & 1c4z\_vs\_1ul  & 76.5$\pm$0.8 & 55.2$\pm$2.8 & 30.5$\pm$3.1 & 46.3$\pm$3.5\\
    \midrule
      \multirow{3}{*}{1024}& 5z\_vs\_1ul & 79.2$\pm$0.5 & 74.1$\pm$1.9 & 36.4$\pm$1.0 & 59.2$\pm$2.4\\
      & 2s3z\_vs\_1ul  & 81.0$\pm$1.0 & 68.0$\pm$1.2 & 38.8$\pm$2.6 & 61.3$\pm$2.3\\
      & 1c4z\_vs\_1ul  & 75.8$\pm$0.7 & 56.1$\pm$2.4 & 30.9$\pm$2.8 & 47.1$\pm$1.5\\
    \bottomrule
    \end{tabular}
    \label{tab:llm}
\end{table}

\section{Algorithm Pseudocode}
\label{Pseudocode}
We introduce the decision-making process of a single agent (Alg. \ref{alg:Herm}) and the algorithm training process (Alg. \ref{alg:qmix}) in Herm, respectively. The overall training architecture employs QMIX.

\begin{algorithm}[!htb]
\caption{Training Process in Herm}
\label{alg:qmix}

\KwIn{
    $N$ (number of agents), $O$ (observation space), $A_i$ (action space of agent $i$), 
    $\gamma$ (discount factor), $\alpha$ (learning rate), $\tau$ (soft update coefficient)
}
\KwOut{Policies $\pi_1, \dots, \pi_N$ for each agent}

\BlankLine
\textbf{Initialization:} \\
\Indp
    Agent Q-networks $Q_{local}(o_i, a_i; \theta)$ with parameters $\theta$ \\
    Mixing network ${Mixing}(q_1, \dots, q_N; \beta)$ with $\beta$ (non-negative weights $\beta$) \\
    Target Q-networks $Q_{local}^{\text{target}}$ and target mixing network ${Mixing}^{\text{target}}$: 
    \Indp
        $\theta^- \leftarrow \theta$, $\beta^- \leftarrow \beta$\\
    \Indm
    Replay buffer $D$ (stores tuples $(s,\boldsymbol{o}, \boldsymbol{a}, r, s', \text{done})$\\
    Exploration strategy $\epsilon$ (e.g., $\epsilon$-greedy)
\Indm

\BlankLine
\For{each training episode}{
    Initialize state $s$
    \While{not terminated}{
        Each agent $i$ selects $a_i = \pi_i(o_i)$ via $Q_{\mathrm{local}}(o_i, \cdot; \theta)$ and $\epsilon$ 
        \tcp*{see Alg. 2}
        Execute $\mathbf{a}$, get reward $r$ and next state $s'$\\
        Store $(s, \boldsymbol{o},\boldsymbol{a}, r, s', \mathrm{done})$ in $D$
        $s \leftarrow s'$
    }
}
\For{each parameter update}{
    Sample batch $\{(s_j, \boldsymbol{o}_j, \boldsymbol{a}_j, r_j, s'_j, \mathrm{done}_j)\}_{j=1}^B$ from $D$\\
    \For{each sample $j$}{
        Each agent $i$ selects action: $a'_i = \arg\max_{a'} Q_{\mathrm{local}}(o'_{i,j}, a'; \theta)$\\
        Agent target: $y_i^j = Q_{\mathrm{local}}^{\mathrm{target}}(o'_{i,j}, a'_i; \theta^-)$\\
        Joint target: 
        \[
            Y_j = r_j + \gamma (1 - \mathrm{done}_j) \cdot {Mixing}^{\mathrm{target}}(y_1^j, \dots, y_N^j; \beta^-)
        \]
        Current agent Q-values: $q_i^j = Q_{\mathrm{local}}(o_{i,j}, a_{i,j}; \theta)$\\
        Current joint value: $Q_{\mathrm{total},j} = {Mixing}(q_1^j, \dots, q_N^j; \beta)$
    }
    Loss: $L = \frac{1}{B} \sum_{j=1}^B (Y_j - Q_{\mathrm{total},j})^2$\\
    Update $\theta$ and $\beta$ via gradient descent (ensure $\beta \geq 0$)\\
    Soft update targets:
    $\theta^- \leftarrow \tau \theta + (1 - \tau) \theta^-$
    $\beta^- \leftarrow \tau \beta + (1 - \tau) \beta^-$
}

\BlankLine
\Return Policies $\pi_i(a_i | s) = \arg\max_{a_i} Q_{local}(o_i, a_i; \theta)$
\end{algorithm}

\begin{algorithm}[!htb] 
\caption{Decision making of Agent $i$ in Herm} 
\label{alg:Herm}

\textbf{Initialize: }
Local Q Network $Q_{local}$, Observation Projection $\Phi$, Action Projection $\Psi$, Trajectory Encoder $GRU$, Attribute Dictionary $\mathcal{A}$, Value Dictionary $\mathcal{V}$, Action Dictionary $\mathcal{A}ct$, Time Dictionary $\mathcal{T}$,
Length of action sequence $\mathcal{W}$

\tcp{1. Message Encoding}
\tcp{1.1 Observation Transformation}
$HV_O \leftarrow Empty$ 

\ForEach{$\mathbf{e}_j \in Entity$}{ 
    $o_i^j \leftarrow f(o_i )$ \tcp*{Entity-wised Extraction}
    $HV_o^j \leftarrow Empty$ \\
    \ForEach{$(attr, val) \in o_i^j$}
    {
        $HV_{attr} \leftarrow find(\mathcal{A},attr)$ \\
        $HV_{val} \leftarrow find(\mathcal{V},attr)$ \\
        $HV_o^j = HV_o^j \oplus(HV_{attr} \otimes HV_{val})$
    }
    $HV_o^j = HV_o^j \otimes HV_{id}^j$ \\
    $HV_O = HV_O \oplus HV_o^j$
}

\tcp{1.2 Action Sequence Transformation}
$HV_A \leftarrow Empty$ 

\ForEach{$ w \in \mathcal{W}$}{ 
        $HV_{a}^{-w} \leftarrow find(\mathcal{A}ct,a_i^{-w})$ \\
        $HV_{t}^{-w} \leftarrow find(\mathcal{T},-w)$ \\
        $HV_A = HV_A \oplus (HV_{a}^{-w} \otimes HV_{t}^{-w})$
}

\tcp{2. Message Fusion}
${\alpha}_i = cosine\_sim(HV_O, HV_O^{-i})$\\
$\boldsymbol{\alpha} = softmax({\alpha}_i)$\\
$\boldsymbol{\dot\alpha} = softmax({-\alpha}_i)$\\
$c_o^i = \Phi(\sum_{j}\boldsymbol{\dot\alpha}_{i,j} * (HV_O^{i} \otimes HV_O^{i,j}))$\\
$c_a^i = \Psi(\sum_{j}\boldsymbol{\alpha}_{i,j} * (HV_A^{i} \otimes HV_A^{i,j}))$\\
$\boldsymbol{c}_i=(c_o^i, c_a^i)$

\tcp{3. Trajectory Encoding}
$h_i^t \leftarrow GRU(h_i^{t-1},o_i^t, a_i^{t-1})$ \\
\Return $Q_{local}(h_i^t, \boldsymbol{c}_i)$
\end{algorithm}


\clearpage
\input{checklist.tex}

\end{document}

%% file: checklist.tex
\section*{NeurIPS Paper Checklist}

\begin{enumerate}

\item {\bf Claims}
    \item[] Question: Do the main claims made in the abstract and introduction accurately reflect the paper's contributions and scope?
    \item[] Answer: \answerYes{} 
    \item[] Justification: see Abstract and Introduction.
    \item[] Guidelines:
    \begin{itemize}
        \item The answer \answerNA{} means that the abstract and introduction do not include the claims made in the paper.
        \item The abstract and/or introduction should clearly state the claims made, including the contributions made in the paper and important assumptions and limitations. A \answerNo{} or \answerNA{} answer to this question will not be perceived well by the reviewers. 
        \item The claims made should match theoretical and experimental results, and reflect how much the results can be expected to generalize to other settings. 
        \item It is fine to include aspirational goals as motivation as long as it is clear that these goals are not attained by the paper. 
    \end{itemize}

\item {\bf Limitations}
    \item[] Question: Does the paper discuss the limitations of the work performed by the authors?
    \item[] Answer: \answerYes{} 
    \item[] Justification: see Conclusion.
    \item[] Guidelines:
    \begin{itemize}
        \item The answer \answerNA{} means that the paper has no limitation while the answer \answerNo{} means that the paper has limitations, but those are not discussed in the paper. 
        \item The authors are encouraged to create a separate ``Limitations'' section in their paper.
        \item The paper should point out any strong assumptions and how robust the results are to violations of these assumptions (e.g., independence assumptions, noiseless settings, model well-specification, asymptotic approximations only holding locally). The authors should reflect on how these assumptions might be violated in practice and what the implications would be.
        \item The authors should reflect on the scope of the claims made, e.g., if the approach was only tested on a few datasets or with a few runs. In general, empirical results often depend on implicit assumptions, which should be articulated.
        \item The authors should reflect on the factors that influence the performance of the approach. For example, a facial recognition algorithm may perform poorly when image resolution is low or images are taken in low lighting. Or a speech-to-text system might not be used reliably to provide closed captions for online lectures because it fails to handle technical jargon.
        \item The authors should discuss the computational efficiency of the proposed algorithms and how they scale with dataset size.
        \item If applicable, the authors should discuss possible limitations of their approach to address problems of privacy and fairness.
        \item While the authors might fear that complete honesty about limitations might be used by reviewers as grounds for rejection, a worse outcome might be that reviewers discover limitations that aren't acknowledged in the paper. The authors should use their best judgment and recognize that individual actions in favor of transparency play an important role in developing norms that preserve the integrity of the community. Reviewers will be specifically instructed to not penalize honesty concerning limitations.
    \end{itemize}

\item {\bf Theory assumptions and proofs}
    \item[] Question: For each theoretical result, does the paper provide the full set of assumptions and a complete (and correct) proof?
    \item[] Answer: \answerYes{} 
    \item[] Justification: see Appendix.
    \item[] Guidelines:
    \begin{itemize}
        \item The answer \answerNA{} means that the paper does not include theoretical results. 
        \item All the theorems, formulas, and proofs in the paper should be numbered and cross-referenced.
        \item All assumptions should be clearly stated or referenced in the statement of any theorems.
        \item The proofs can either appear in the main paper or the supplemental material, but if they appear in the supplemental material, the authors are encouraged to provide a short proof sketch to provide intuition. 
        \item Inversely, any informal proof provided in the core of the paper should be complemented by formal proofs provided in appendix or supplemental material.
        \item Theorems and Lemmas that the proof relies upon should be properly referenced. 
    \end{itemize}

    \item {\bf Experimental result reproducibility}
    \item[] Question: Does the paper fully disclose all the information needed to reproduce the main experimental results of the paper to the extent that it affects the main claims and/or conclusions of the paper (regardless of whether the code and data are provided or not)?
    \item[] Answer: \answerYes{} 
    \item[] Justification: see Appendix.
    \item[] Guidelines:
    \begin{itemize}
        \item The answer \answerNA{} means that the paper does not include experiments.
        \item If the paper includes experiments, a \answerNo{} answer to this question will not be perceived well by the reviewers: Making the paper reproducible is important, regardless of whether the code and data are provided or not.
        \item If the contribution is a dataset and\slash or model, the authors should describe the steps taken to make their results reproducible or verifiable. 
        \item Depending on the contribution, reproducibility can be accomplished in various ways. For example, if the contribution is a novel architecture, describing the architecture fully might suffice, or if the contribution is a specific model and empirical evaluation, it may be necessary to either make it possible for others to replicate the model with the same dataset, or provide access to the model. In general. releasing code and data is often one good way to accomplish this, but reproducibility can also be provided via detailed instructions for how to replicate the results, access to a hosted model (e.g., in the case of a large language model), releasing of a model checkpoint, or other means that are appropriate to the research performed.
        \item While NeurIPS does not require releasing code, the conference does require all submissions to provide some reasonable avenue for reproducibility, which may depend on the nature of the contribution. For example
        \begin{enumerate}
            \item If the contribution is primarily a new algorithm, the paper should make it clear how to reproduce that algorithm.
            \item If the contribution is primarily a new model architecture, the paper should describe the architecture clearly and fully.
            \item If the contribution is a new model (e.g., a large language model), then there should either be a way to access this model for reproducing the results or a way to reproduce the model (e.g., with an open-source dataset or instructions for how to construct the dataset).
            \item We recognize that reproducibility may be tricky in some cases, in which case authors are welcome to describe the particular way they provide for reproducibility. In the case of closed-source models, it may be that access to the model is limited in some way (e.g., to registered users), but it should be possible for other researchers to have some path to reproducing or verifying the results.
        \end{enumerate}
    \end{itemize}

\item {\bf Open access to data and code}
    \item[] Question: Does the paper provide open access to the data and code, with sufficient instructions to faithfully reproduce the main experimental results, as described in supplemental material?
    \item[] Answer: \answerYes{} 
    \item[] Justification: See Supplementary Material.
    \item[] Guidelines:
    \begin{itemize}
        \item The answer \answerNA{} means that paper does not include experiments requiring code.
        \item Please see the NeurIPS code and data submission guidelines (\url{https://neurips.cc/public/guides/CodeSubmissionPolicy}) for more details.
        \item While we encourage the release of code and data, we understand that this might not be possible, so \answerNo{} is an acceptable answer. Papers cannot be rejected simply for not including code, unless this is central to the contribution (e.g., for a new open-source benchmark).
        \item The instructions should contain the exact command and environment needed to run to reproduce the results. See the NeurIPS code and data submission guidelines (\url{https://neurips.cc/public/guides/CodeSubmissionPolicy}) for more details.
        \item The authors should provide instructions on data access and preparation, including how to access the raw data, preprocessed data, intermediate data, and generated data, etc.
        \item The authors should provide scripts to reproduce all experimental results for the new proposed method and baselines. If only a subset of experiments are reproducible, they should state which ones are omitted from the script and why.
        \item At submission time, to preserve anonymity, the authors should release anonymized versions (if applicable).
        \item Providing as much information as possible in supplemental material (appended to the paper) is recommended, but including URLs to data and code is permitted.
    \end{itemize}

\item {\bf Experimental setting/details}
    \item[] Question: Does the paper specify all the training and test details (e.g., data splits, hyperparameters, how they were chosen, type of optimizer) necessary to understand the results?
    \item[] Answer: \answerYes{} 
    \item[] Justification: see Appendix.
    \item[] Guidelines:
    \begin{itemize}
        \item The answer \answerNA{} means that the paper does not include experiments.
        \item The experimental setting should be presented in the core of the paper to a level of detail that is necessary to appreciate the results and make sense of them.
        \item The full details can be provided either with the code, in appendix, or as supplemental material.
    \end{itemize}

\item {\bf Experiment statistical significance}
    \item[] Question: Does the paper report error bars suitably and correctly defined or other appropriate information about the statistical significance of the experiments?
    \item[] Answer: \answerYes{} 
    \item[] Justification: see Section Experiments.
    \item[] Guidelines:
    \begin{itemize}
        \item The answer \answerNA{} means that the paper does not include experiments.
        \item The authors should answer \answerYes{} if the results are accompanied by error bars, confidence intervals, or statistical significance tests, at least for the experiments that support the main claims of the paper.
        \item The factors of variability that the error bars are capturing should be clearly stated (for example, train/test split, initialization, random drawing of some parameter, or overall run with given experimental conditions).
        \item The method for calculating the error bars should be explained (closed form formula, call to a library function, bootstrap, etc.)
        \item The assumptions made should be given (e.g., Normally distributed errors).
        \item It should be clear whether the error bar is the standard deviation or the standard error of the mean.
        \item It is OK to report 1-sigma error bars, but one should state it. The authors should preferably report a 2-sigma error bar than state that they have a 96\% CI, if the hypothesis of Normality of errors is not verified.
        \item For asymmetric distributions, the authors should be careful not to show in tables or figures symmetric error bars that would yield results that are out of range (e.g., negative error rates).
        \item If error bars are reported in tables or plots, the authors should explain in the text how they were calculated and reference the corresponding figures or tables in the text.
    \end{itemize}

\item {\bf Experiments compute resources}
    \item[] Question: For each experiment, does the paper provide sufficient information on the computer resources (type of compute workers, memory, time of execution) needed to reproduce the experiments?
    \item[] Answer: \answerYes{} 
    \item[] Justification: see Appendix.
    \item[] Guidelines:
    \begin{itemize}
        \item The answer \answerNA{} means that the paper does not include experiments.
        \item The paper should indicate the type of compute workers CPU or GPU, internal cluster, or cloud provider, including relevant memory and storage.
        \item The paper should provide the amount of compute required for each of the individual experimental runs as well as estimate the total compute. 
        \item The paper should disclose whether the full research project required more compute than the experiments reported in the paper (e.g., preliminary or failed experiments that didn't make it into the paper). 
    \end{itemize}
    
\item {\bf Code of ethics}
    \item[] Question: Does the research conducted in the paper conform, in every respect, with the NeurIPS Code of Ethics \url{https://neurips.cc/public/EthicsGuidelines}?
    \item[] Answer: \answerYes{} 
    \item[] Justification: See Supplementary Material.
    \item[] Guidelines:
    \begin{itemize}
        \item The answer \answerNA{} means that the authors have not reviewed the NeurIPS Code of Ethics.
        \item If the authors answer \answerNo, they should explain the special circumstances that require a deviation from the Code of Ethics.
        \item The authors should make sure to preserve anonymity (e.g., if there is a special consideration due to laws or regulations in their jurisdiction).
    \end{itemize}

\item {\bf Broader impacts}
    \item[] Question: Does the paper discuss both potential positive societal impacts and negative societal impacts of the work performed?
    \item[] Answer: \answerYes{} 
    \item[] Justification: See Conclusion.
    \item[] Guidelines:
    \begin{itemize}
        \item The answer \answerNA{} means that there is no societal impact of the work performed.
        \item If the authors answer \answerNA{} or \answerNo, they should explain why their work has no societal impact or why the paper does not address societal impact.
        \item Examples of negative societal impacts include potential malicious or unintended uses (e.g., disinformation, generating fake profiles, surveillance), fairness considerations (e.g., deployment of technologies that could make decisions that unfairly impact specific groups), privacy considerations, and security considerations.
        \item The conference expects that many papers will be foundational research and not tied to particular applications, let alone deployments. However, if there is a direct path to any negative applications, the authors should point it out. For example, it is legitimate to point out that an improvement in the quality of generative models could be used to generate Deepfakes for disinformation. On the other hand, it is not needed to point out that a generic algorithm for optimizing neural networks could enable people to train models that generate Deepfakes faster.
        \item The authors should consider possible harms that could arise when the technology is being used as intended and functioning correctly, harms that could arise when the technology is being used as intended but gives incorrect results, and harms following from (intentional or unintentional) misuse of the technology.
        \item If there are negative societal impacts, the authors could also discuss possible mitigation strategies (e.g., gated release of models, providing defenses in addition to attacks, mechanisms for monitoring misuse, mechanisms to monitor how a system learns from feedback over time, improving the efficiency and accessibility of ML).
    \end{itemize}
    
\item {\bf Safeguards}
    \item[] Question: Does the paper describe safeguards that have been put in place for responsible release of data or models that have a high risk for misuse (e.g., pre-trained language models, image generators, or scraped datasets)?
    \item[] Answer: \answerNA{} 
    \item[] Justification: Do not involve.
    \item[] Guidelines:
    \begin{itemize}
        \item The answer \answerNA{} means that the paper poses no such risks.
        \item Released models that have a high risk for misuse or dual-use should be released with necessary safeguards to allow for controlled use of the model, for example by requiring that users adhere to usage guidelines or restrictions to access the model or implementing safety filters. 
        \item Datasets that have been scraped from the Internet could pose safety risks. The authors should describe how they avoided releasing unsafe images.
        \item We recognize that providing effective safeguards is challenging, and many papers do not require this, but we encourage authors to take this into account and make a best faith effort.
    \end{itemize}

\item {\bf Licenses for existing assets}
    \item[] Question: Are the creators or original owners of assets (e.g., code, data, models), used in the paper, properly credited and are the license and terms of use explicitly mentioned and properly respected?
    \item[] Answer: \answerYes{}{} 
    \item[] Justification: see Appendix.
    \item[] Guidelines:
    \begin{itemize}
        \item The answer \answerNA{} means that the paper does not use existing assets.
        \item The authors should cite the original paper that produced the code package or dataset.
        \item The authors should state which version of the asset is used and, if possible, include a URL.
        \item The name of the license (e.g., CC-BY 4.0) should be included for each asset.
        \item For scraped data from a particular source (e.g., website), the copyright and terms of service of that source should be provided.
        \item If assets are released, the license, copyright information, and terms of use in the package should be provided. For popular datasets, \url{paperswithcode.com/datasets} has curated licenses for some datasets. Their licensing guide can help determine the license of a dataset.
        \item For existing datasets that are re-packaged, both the original license and the license of the derived asset (if it has changed) should be provided.
        \item If this information is not available online, the authors are encouraged to reach out to the asset's creators.
    \end{itemize}

\item {\bf New assets}
    \item[] Question: Are new assets introduced in the paper well documented and is the documentation provided alongside the assets?
    \item[] Answer: \answerNA{} 
    \item[] Justification: Do not involve.
    \item[] Guidelines: 
    \begin{itemize}
        \item The answer \answerNA{} means that the paper does not release new assets.
        \item Researchers should communicate the details of the dataset\slash code\slash model as part of their submissions via structured templates. This includes details about training, license, limitations, etc. 
        \item The paper should discuss whether and how consent was obtained from people whose asset is used.
        \item At submission time, remember to anonymize your assets (if applicable). You can either create an anonymized URL or include an anonymized zip file.
    \end{itemize}

\item {\bf Crowdsourcing and research with human subjects}
    \item[] Question: For crowdsourcing experiments and research with human subjects, does the paper include the full text of instructions given to participants and screenshots, if applicable, as well as details about compensation (if any)? 
    \item[] Answer: \answerNA{} 
    \item[] Justification: Do not involve.
    \item[] Guidelines: 
    \begin{itemize}
        \item The answer \answerNA{} means that the paper does not involve crowdsourcing nor research with human subjects.
        \item Including this information in the supplemental material is fine, but if the main contribution of the paper involves human subjects, then as much detail as possible should be included in the main paper. 
        \item According to the NeurIPS Code of Ethics, workers involved in data collection, curation, or other labor should be paid at least the minimum wage in the country of the data collector. 
    \end{itemize}

\item {\bf Institutional review board (IRB) approvals or equivalent for research with human subjects}
    \item[] Question: Does the paper describe potential risks incurred by study participants, whether such risks were disclosed to the subjects, and whether Institutional Review Board (IRB) approvals (or an equivalent approval/review based on the requirements of your country or institution) were obtained?
    \item[] Answer: \answerNA{} 
    \item[] Justification: Do not involve.
    \item[] Guidelines:
    \begin{itemize}
        \item The answer \answerNA{} means that the paper does not involve crowdsourcing nor research with human subjects.
        \item Depending on the country in which research is conducted, IRB approval (or equivalent) may be required for any human subjects research. If you obtained IRB approval, you should clearly state this in the paper. 
        \item We recognize that the procedures for this may vary significantly between institutions and locations, and we expect authors to adhere to the NeurIPS Code of Ethics and the guidelines for their institution. 
        \item For initial submissions, do not include any information that would break anonymity (if applicable), such as the institution conducting the review.
    \end{itemize}

\item {\bf Declaration of LLM usage}
    \item[] Question: Does the paper describe the usage of LLMs if it is an important, original, or non-standard component of the core methods in this research? Note that if the LLM is used only for writing, editing, or formatting purposes and does \emph{not} impact the core methodology, scientific rigor, or originality of the research, declaration is not required.
    \item[] Answer: \answerYes{} 
    \item[] Justification: see Appendix H.
    \item[] Guidelines:
    \begin{itemize}
        \item The answer \answerNA{} means that the core method development in this research does not involve LLMs as any important, original, or non-standard components.
        \item Please refer to our LLM policy in the NeurIPS handbook for what should or should not be described.
    \end{itemize}

\end{enumerate}

%% file: neurips_2026.bib
@ARTICLE{HRR,
  author={Plate, T.A.},
  journal={IEEE Transactions on Neural Networks}, 
  title={Holographic reduced representations}, 
  year={1995},
  volume={6},
  number={3},
  pages={623-641},
}

@InProceedings{BSC,
author="Kanerva, Pentti",
title="Binary spatter-coding of ordered K-tuples",
booktitle="Artificial Neural Networks --- ICANN 96",
year="1996",
publisher="Springer Berlin Heidelberg",
pages="869--873",
}

@article{vbc,
  title={Efficient communication in multi-agent reinforcement learning via variance based control},
  author={Zhang, Sai Qian and Zhang, Qi and Lin, Jieyu},
  journal={Advances in neural information processing systems},
  volume={32},
  year={2019}
}

@article{TMC,
  title={Succinct and robust multi-agent communication with temporal message control},
  author={Zhang, Sai Qian and Zhang, Qi and Lin, Jieyu},
  journal={Advances in neural information processing systems},
  volume={33},
  pages={17271--17282},
  year={2020}
}

@inproceedings{tarmac,
  title={Tarmac: Targeted multi-agent communication},
  author={Das, Abhishek and Gervet, Th{\'e}ophile and Romoff, Joshua and Batra, Dhruv and Parikh, Devi and Rabbat, Mike and Pineau, Joelle},
  booktitle={International Conference on machine learning},
  pages={1538--1546},
  year={2019},
  organization={PMLR}
}

@inproceedings{CACOM,
  title={Context-aware Communication for Multi-agent Reinforcement Learning},
  author={Li, Xinran and Zhang, Jun},
  booktitle={Proceedings of the 23rd International Conference on Autonomous Agents and Multiagent Systems},
  pages={1156--1164},
  year={2024}
}

@article{GP2020,
  author       = {Rupert Mitchell and
                  Jan Blumenkamp and
                  Amanda Prorok},
  title        = {Gaussian Process Based Message Filtering for Robust Multi-Agent Cooperation
                  in the Presence of Adversarial Communication},
  journal      = {CoRR},
  volume       = {abs/2012.00508},
  year         = {2020},
  eprinttype    = {arXiv},
  eprint       = {2012.00508},
}

@inproceedings{MIS2022,
  author       = {Wanqi Xue and
                  Wei Qiu and
                  Bo An and
                  Zinovi Rabinovich and
                  Svetlana Obraztsova and
                  Chai Kiat Yeo},
  title        = {Mis-spoke or mis-lead: Achieving Robustness in Multi-Agent Communicative
                  Reinforcement Learning},
  booktitle    = {International Conference on Autonomous Agents and Multiagent
                  Systems},
  pages        = {1418--1426},
  year         = {2022},
}

@inproceedings{AME2023,
  title={Certifiably robust policy learning against adversarial multi-agent communication},
  author={Sun, Yanchao and Zheng, Ruijie and Hassanzadeh, Parisa and Liang, Yongyuan and Feizi, Soheil and Ganesh, Sumitra and Huang, Furong},
  booktitle={The Eleventh International Conference on Learning Representations},
  year={2022}
}

@inproceedings{ADMAC2024,
  title={Robust communicative multi-agent reinforcement learning with active defense},
  author={Yu, Lebin and Qiu, Yunbo and Yao, Quanming and Shen, Yuan and Zhang, Xudong and Wang, Jian},
  booktitle={Proceedings of the AAAI Conference on Artificial Intelligence},
  volume={38},
  number={16},
  pages={17575--17582},
  year={2024}
}

@article{DMAC2025,
  author       = {Xuyan Ma and
                  Yawen Wang and
                  Junjie Wang and
                  Xiaofei Xie and
                  Boyu Wu and
                  Shoubin Li and
                  Fanjiang Xu and
                  Qing Wang},
  title        = {Robust Multi-agent Communication Based on Decentralization-Oriented
                  Adversarial Training},
  journal      = {CoRR},
  volume       = {abs/2504.21278},
  year         = {2025},
}

@inproceedings{MDGI2024,
  title={Learning efficient and robust multi-agent communication via graph information bottleneck},
  author={Ding, Shifei and Du, Wei and Ding, Ling and Guo, Lili and Zhang, Jian},
  booktitle={Proceedings of the AAAI Conference on Artificial Intelligence},
  volume={38},
  number={16},
  pages={17346--17353},
  year={2024}
}

@article{CroMAC2024,
  title={Robust cooperative multi-agent reinforcement learning via multi-view message certification},
  author={Yuan, Lei and Jiang, Tao and Li, Lihe and Chen, Feng and Zhang, Zongzhang and Yu, Yang},
  journal={Science China Information Sciences},
  volume={67},
  number={4},
  pages={142102},
  year={2024},
  publisher={Springer}
}

@article{MA3C2024,
  title={Communication-robust multi-agent learning by adaptable auxiliary multi-agent adversary generation},
  author={Yuan, Lei and Chen, Feng and Zhang, Zongzhang and Yu, Yang},
  journal={Frontiers of Computer Science},
  volume={18},
  number={6},
  pages={186331},
  year={2024},
  publisher={Springer}
}

@inproceedings{HDPG2022,
  title={Hdpg: Hyperdimensional policy-based reinforcement learning for continuous control},
  author={Ni, Yang and Issa, Mariam and Abraham, Danny and Imani, Mahdi and Yin, Xunzhao and Imani, Mohsen},
  booktitle={Proceedings of the 59th ACM/IEEE Design Automation Conference},
  pages={1141--1146},
  year={2022}
}

@inproceedings{QHD2023,
  title={Efficient off-policy reinforcement learning via brain-inspired computing},
  author={Ni, Yang and Abraham, Danny and Issa, Mariam and Kim, Yeseong and Mercati, Pietro and Imani, Mohsen},
  booktitle={Proceedings of the Great Lakes Symposium on VLSI 2023},
  pages={449--453},
  year={2023}
}

@inproceedings{ROBOT2024,
  title={Brain-Inspired Hyperdimensional Computing in the Wild: Lightweight Symbolic Learning for Sensorimotor Controls of Wheeled Robots},
  author={Kwon, Hyukjun and Kim, Kangwon and Lee, Junyoung and Lee, Hyunsei and Kim, Jiseung and Kim, Jinhyung and Kim, Taehyung and Kim, Yongnyeon and Ni, Yang and Imani, Mohsen and others},
  booktitle={2024 IEEE International Conference on Robotics and Automation (ICRA)},
  pages={5176--5182},
  year={2024},
  organization={IEEE}
}

@ARTICLE{Cyber2025,
  author={Issa, Mariam Ali and Chen, Hanning and Wang, Junyao and Imani, Mohsen},
  journal={IEEE Transactions on Computer-Aided Design of Integrated Circuits and Systems},
  title={CyberRL: Brain-Inspired Reinforcement Learning for Efficient Network Intrusion Detection},
  year={2025},
  volume={44},
  number={1},
  pages={241-250},
}

@article{HDCB2025,
  author       = {Marco Angioli and
                  Antonello Rosato and
                  Marcello Barbirotta and
                  Rocco Martino and
                  Francesco Menichelli and
                  Mauro Olivieri},
  title        = {{HD-CB:} The First Exploration of Hyperdimensional Computing for Contextual
                  Bandits Problems},
  journal      = {CoRR},
  volume       = {abs/2501.16863},
  year         = {2025},
  eprinttype    = {arXiv},
  eprint       = {2501.16863},
}

@inproceedings{SMAC,
  author       = {Mikayel Samvelyan and
                  Tabish Rashid and
                  Christian Schr{\"{o}}der de Witt and
                  Gregory Farquhar and
                  Nantas Nardelli and
                  Tim G. J. Rudner and
                  Chia{-}Man Hung and
                  Philip H. S. Torr and
                  Jakob N. Foerster and
                  Shimon Whiteson},
  title        = {The StarCraft Multi-Agent Challenge},
  booktitle    = {Proceedings of the 18th International Conference on Autonomous Agents
                  and MultiAgent Systems},
  pages        = {2186--2188},
  year         = {2019},
}

@inproceedings{SMACV2,
  author       = {Benjamin Ellis and
                  Jonathan Cook and
                  Skander Moalla and
                  Mikayel Samvelyan and
                  Mingfei Sun and
                  Anuj Mahajan and
                  Jakob N. Foerster and
                  Shimon Whiteson},
  title        = {SMACv2: An Improved Benchmark for Cooperative Multi-Agent Reinforcement
                  Learning},
  booktitle    = {Advances in Neural Information Processing Systems 36: Annual Conference
                  on Neural Information Processing Systems},
  year         = {2023},
}

@inproceedings{NDQ,
  author       = {Tonghan Wang and
                  Jianhao Wang and
                  Chongyi Zheng and
                  Chongjie Zhang},
  title        = {Learning Nearly Decomposable Value Functions Via Communication Minimization},
  booktitle    = {8th International Conference on Learning Representations, {ICLR} 2020,
                  Addis Ababa, Ethiopia, April 26-30, 2020},
  year         = {2020},
}

@inproceedings{LBF,
   title={Benchmarking Multi-Agent Deep Reinforcement Learning Algorithms in Cooperative Tasks},
   author={Georgios Papoudakis and Filippos Christianos and Lukas Schäfer and Stefano V. Albrecht},
   booktitle = {Proceedings of the Neural Information Processing Systems Track on Datasets and Benchmarks (NeurIPS)},
   year={2021},
   openreview = {https://openreview.net/forum?id=cIrPX-Sn5n},
}

@inproceedings{T2MAC,
  title={T2mac: Targeted and trusted multi-agent communication through selective engagement and evidence-driven integration},
  author={Sun, Chuxiong and Zang, Zehua and Li, Jiabao and Li, Jiangmeng and Xu, Xiao and Wang, Rui and Zheng, Changwen},
  booktitle={Proceedings of the AAAI Conference on Artificial Intelligence},
  volume={38},
  number={13},
  pages={15154--15163},
  year={2024}
}

@inproceedings{MAIC,
  title={Multi-agent incentive communication via decentralized teammate modeling},
  author={Yuan, Lei and Wang, Jianhao and Zhang, Fuxiang and Wang, Chenghe and Zhang, Zongzhang and Yu, Yang and Zhang, Chongjie},
  booktitle={Proceedings of the AAAI conference on artificial intelligence},
  volume={36},
  number={9},
  pages={9466--9474},
  year={2022}
}

@inproceedings{TGCNET,
  title={Bridging training and execution via dynamic directed graph-based communication in cooperative multi-agent systems},
  author={Zhang, Zhuohui and He, Bin and Cheng, Bin and Li, Gang},
  booktitle={Proceedings of the AAAI Conference on Artificial Intelligence},
  volume={39},
  number={22},
  pages={23395--23403},
  year={2025}
}

@inproceedings{CODE,
  author       = {Shoucheng Song and
                  Youfang Lin and
                  Sheng Han and
                  Chang Yao and
                  Hao Wu and
                  Shuo Wang and
                  Kai Lv},
  title        = {CoDe: Communication Delay-Tolerant Multi-Agent Collaboration via Dual
                  Alignment of Intent and Timeliness},
  booktitle    = {AAAI-25, Sponsored by the Association for the Advancement of Artificial
                  Intelligence},
  pages        = {23304--23312},
  year         = {2025},
}

@ARTICLE{MAGI,
  author={Ding, Shifei and Du, Wei and Ding, Ling and Zhang, Jian and Guo, Lili and An, Bo},
  journal={IEEE Transactions on Pattern Analysis and Machine Intelligence}, 
  title={Robust Multi-Agent Communication With Graph Information Bottleneck Optimization}, 
  year={2024},
  volume={46},
  number={5},
  pages={3096-3107}}

@inproceedings{imani2018hierarchical,
  title={Hierarchical hyperdimensional computing for energy efficient classification},
  author={Imani, Mohsen and Huang, Chenyu and Kong, Deqian and Rosing, Tajana},
  booktitle={Proceedings of the 55th Annual Design Automation Conference},
  pages={1--6},
  year={2018}
}

@inproceedings{zhao2022fedhd,
  title={Fedhd: federated learning with hyperdimensional computing},
  author={Zhao, Quanling and Lee, Kai and Liu, Jeffrey and Huzaifa, Muhammad and Yu, Xiaofan and Rosing, Tajana},
  booktitle={Proceedings of the 28th annual international conference on mobile computing and networking},
  pages={791--793},
  year={2022}
}

@article{FGSM,
  title={Explaining and harnessing adversarial examples},
  author={Goodfellow, Ian J and Shlens, Jonathon and Szegedy, Christian},
  journal={arXiv preprint arXiv:1412.6572},
  year={2014}
}

@article{PGD,
  title={Towards deep learning models resistant to adversarial attacks},
  author={Madry, Aleksander and Makelov, Aleksandar and Schmidt, Ludwig and Tsipras, Dimitris and Vladu, Adrian},
  journal={arXiv preprint arXiv:1706.06083},
  year={2017}
}

@article{li2024language,
  title={Language grounded multi-agent reinforcement learning with human-interpretable communication},
  author={Li, Huao and Nourkhiz Mahjoub, Hossein and Chalaki, Behdad and Tadiparthi, Vaishnav and Lee, Kwonjoon and Moradi Pari, Ehsan and Lewis, Charles and Sycara, Katia},
  journal={Advances in Neural Information Processing Systems},
  volume={37},
  pages={87908--87933},
  year={2024}
}

@article{VTB,
    author = {Gosmann, Jan and Eliasmith, Chris},
    title = {Vector-Derived Transformation Binding: An Improved Binding Operation for Deep Symbol-Like Processing in Neural Networks},
    journal = {Neural Computation},
    volume = {31},
    number = {5},
    pages = {849-869},
    year = {2019},
    month = {05},
    issn = {0899-7667},
}

@book{FHRR,
  title={Holographic Reduced Representation: Distributed representation for cognitive structures},
  author={Plate, Tony A},
  volume={150},
  year={2003},
  publisher={CSLI Publications Stanford}
}

@article{MADDPG,
  title={Multi-agent actor-critic for mixed cooperative-competitive environments},
  author={Lowe, Ryan and Wu, Yi I and Tamar, Aviv and Harb, Jean and Pieter Abbeel, OpenAI and Mordatch, Igor},
  journal={Advances in neural information processing systems},
  volume={30},
  year={2017}
}

@article{mappo,
  title={The surprising effectiveness of ppo in cooperative multi-agent games},
  author={Yu, Chao and Velu, Akash and Vinitsky, Eugene and Gao, Jiaxuan and Wang, Yu and Bayen, Alexandre and Wu, Yi},
  journal={Advances in neural information processing systems},
  volume={35},
  pages={24611--24624},
  year={2022}
}

@inproceedings{pymarlzooplus,
author = {Papadopoulos, George and Kontogiannis, Andreas and Papadopoulou, Foteini and Poulianou, Chaido and Koumentis, Ioannis and Vouros, George},
title = {An Extended Benchmarking of Multi-Agent Reinforcement Learning Algorithms in Complex Fully Cooperative Tasks},
year = {2025},
isbn = {9798400714269},
booktitle = {Proceedings of the 24th International Conference on Autonomous Agents and Multiagent Systems},
pages = {1613–1622},
numpages = {10},
}

@inproceedings{RPG,
  title={From general relation patterns to task-specific decision-making in continual multi-agent coordination},
  author={Yao, Chang and Lin, Youfang and Song, Shoucheng and Wu, Hao and Ma, Yuqing and Han, Sheng and Lv, Kai},
  booktitle={Proceedings of the Thirty-Fourth International Joint Conference on Artificial Intelligence},
  pages={6821--6829},
  year={2025}
}

@article{MATTAR,
  title={Multi-agent policy transfer via task relationship modeling},
  author={Qin, Rongjun and Chen, Feng and Wang, Tonghan and Yuan, Lei and Wu, Xiaoran and Kang, Yipeng and Zhang, Zongzhang and Zhang, Chongjie and Yu, Yang},
  journal={Science China Information Sciences},
  volume={67},
  number={8},
  pages={182101},
  year={2024},
  publisher={Springer}
}

@article{OPT,
  title={Interaction pattern disentangling for multi-agent reinforcement learning},
  author={Liu, Shunyu and Song, Jie and Zhou, Yihe and Yu, Na and Chen, Kaixuan and Feng, Zunlei and Song, Mingli},
  journal={IEEE Transactions on Pattern Analysis and Machine Intelligence},
  volume={46},
  number={12},
  pages={8157--8172},
  year={2024},
  publisher={IEEE}}

@inproceedings{ODIS,
  title={Discovering generalizable multi-agent coordination skills from multi-task offline data},
  author={Zhang, Fuxiang and Jia, Chengxing and Li, Yi-Chen and Yuan, Lei and Yu, Yang and Zhang, Zongzhang},
  booktitle={The Eleventh International Conference on Learning Representations},
  year={2023}
}

@article{spearman,
 ISSN = {00029556},
 URL = {http://www.jstor.org/stable/1412159},
 author = {C. Spearman},
 journal = {The American Journal of Psychology},
 number = {1},
 pages = {72--101},
 publisher = {University of Illinois Press},
 title = {The Proof and Measurement of Association between Two Things},
 urldate = {2026-10-16},
 volume = {15},
 year = {1904}
}

@misc{dear,
      title={Group Perspective Matters: Regulating Debate Relationships Can Mitigate Blind Conformity in Multi-Agent Debate}, 
      author={Hao Wu and Shoucheng Song and Chang Yao and Haoyu Wang and Huaiyu Wan and Youfang Lin and Kai Lv},
      year={2026},
      eprint={2608.03648},
      archivePrefix={arXiv},
      primaryClass={cs.MA},
      url={https://arxiv.org/abs/2608.03648}, 
}

@inproceedings{AIM,
  title={Think how your teammates think: Active inference can benefit decentralized execution},
  author={Wu, Hao and Song, Shoucheng and Yao, Chang and Han, Sheng and Wan, Huaiyu and Lin, Youfang and Lv, Kai},
  booktitle={Proceedings of the AAAI Conference on Artificial Intelligence},
  volume={40},
  number={35},
  pages={29749--29757},
  year={2026}
}

@inproceedings{yao1,
  title     = {From General Relation Patterns to Task-Specific Decision-Making in Continual Multi-Agent Coordination},
  author    = {Yao, Chang and Lin, Youfang and Song, Shoucheng and Wu, Hao and Ma, Yuqing and Han, Sheng and Lv, Kai},
  booktitle = {Proceedings of the Thirty-Fourth International Joint Conference on
               Artificial Intelligence, {IJCAI-25}},
  publisher = {International Joint Conferences on Artificial Intelligence Organization},
  editor    = {James Kwok},
  pages     = {6821--6829},
  year      = {2025},
  month     = {8},
  note      = {Main Track},
  doi       = {10.24963/ijcai.2025/759},
  url       = {https://doi.org/10.24963/ijcai.2025/759},
}

@inproceedings{yao26direct,
  title={DIRECT: Decentralized Intention and Latent Rule Emergence for Multi-Agent Cooperation Under Intermittent Communication},
  author={Yao, Enguang and Liu, Qidong and Sun, Jialu and Huang, Kaibo and Xu, Mingliang},
  booktitle={Proceedings of the Thirty-Fifth International Joint Conference on Artificial Intelligence, IJCAI},
  volume={26},
  pages={437--445}
}

@inproceedings{jiang2026budget,
  title={Budget-Efficient Attacks and Robustness Training for Cooperative MARL},
  author={Jiang, Junyong and Yuan, Xin and Lin, Longhe and Li, Songze and Dong, Lu},
  booktitle={Forty-third International Conference on Machine Learning},
  year={2026}
}

@inproceedings{chu2026value,
  title={Value Aggregation with Uncertainty in Online Decentralized MARL},
  author={Chu, Ziyue and Stella, Leonardo},
  booktitle={Forty-third International Conference on Machine Learning},
  year={2026}
}

@inproceedings{yao2,
  title={Role-Level Inductive Bias for Cross-Task Generalization in Multi-Agent Reinforcement Learning},
  author={Yao, Chang and Lin, Youfang and Song, Shoucheng and Wu, Hao and Yang, Shengkun and Ma, Yuqing and Lv, Kai},
  booktitle={Forty-third International Conference on Machine Learning}
}
